\newtheorem{proposition}{Proposition}
\newtheorem{fact}{Fact}
\newtheorem{assumption}{Assumption}
\newtheorem{set-up}{Set-up}
\newtheorem*{theorem*}{Theorem}
\newtheorem*{corollary*}{Corollary}
\newtheorem*{lemma*}{Lemma}
\newtheorem*{observation*}{Observation}
\newtheorem*{proposition*}{Proposition}
\newtheorem*{claim*}{Claim}
\newtheorem*{fact*}{Fact}
\newtheorem*{assumption*}{Assumption}
\newtheorem*{assumptionA*}{Assumption A}
\newtheorem*{set-up*}{Set-up}
\theoremstyle{definition}
\newtheorem*{definition*}{Definition}
\newtheorem*{problem*}{Problem}
\newtheorem{example}{Example}
\newtheorem*{example*}{Example}
\renewenvironment{proof}[1][\proofname]{{\noindent\textbf{#1.}}}{\qed \vspace{\topsep}}
\newcolumntype{L}[1]{>{\raggedright\let\newline\\arraybackslash\hspace{0pt}}m{#1}}
\newcolumntype{C}[1]{>{\centering\let\newline\\arraybackslash\hspace{0pt}}m{#1}}
\newcolumntype{R}[1]{>{\raggedleft\let\newline\\arraybackslash\hspace{0pt}}m{#1}}
\newcommand{\bolds}{\boldsymbol{s}}
\newcommand{\boldv}{\boldsymbol{v}}
\newcommand{\boldsigma}{\boldsymbol{\sigma}}
\newcommand{\boldc}{\boldsymbol{c}}
\newcommand{\boldq}{\boldsymbol{q}}
\newcommand{\st}{\quad \text{subject to} \quad}
\begin{document}

\begin{titlepage}
\singlespacing

\title{\vspace{-2cm} Crowdsearch\thanks{We thank Rainer B\"{o}hme, Tapas Kundu, Priel Levy, and conference participants at the Conference on Economic Design (2023), the Annual Conference on Contests (2023), the Stony Brook International Conference on Game Theory (2023), and ETH Z\"{u}rich for their comments. A one-page abstract of an earlier version of this paper, titled ``Decentralized Attack Search and the Design of Bug Bounty Schemes,''  appears in the Proceedings of the 16th International Symposium on Algorithmic Game Theory, SAGT 2023. This research was partially supported by the Zurich Information Security and Privacy Center (ZISC). Otto Schmidt and Pio Blieske provided excellent research assistance. All errors are our own.}}

\author{
	Hans Gersbach\\
	\normalsize KOF Swiss Economic Institute, \\
        \normalsize ETH Zurich, and CEPR\\ 
	\normalsize Leonhardstrasse 21\\
	\normalsize 8092 Zurich, Switzerland\\ 
	\normalsize \href{mailto:hgersbach@ethz.ch}{hgersbach@ethz.ch}
	\and
	Akaki Mamageishvili\\
	\normalsize Offchain Labs\\ 
	\normalsize Zurich, Switzerland\\ 
	\normalsize \href{mailto:amamageishvili@offchainlabs.com}{amamageishvili@offchainlabs.com}
	\and
	Fikri Pitsuwan\\
	\normalsize Center of Economic Research\\
        \normalsize at ETH Zurich\\ 
	\normalsize Leonhardstrasse 21\\
	\normalsize 8092 Zurich, Switzerland\\ 
	\normalsize \href{mailto:fpitsuwan@ethz.ch}{fpitsuwan@ethz.ch}
	}

\date{Last updated: \today}

\maketitle
\begin{abstract}
    \noindent A common economic process is \textit{crowdsearch}, wherein a group of agents is invited to search for a valuable physical or virtual object, e.g. creating and patenting an invention, solving an open scientific problem, or identifying vulnerabilities in software. We study a binary model of crowdsearch in which agents have different abilities to find the object. We characterize the types of equilibria and identify which type of crowd maximizes the likelihood of finding the object. Sometimes, however, an unlimited crowd is not sufficient to guarantee that the object is found. It even can happen that inviting more agents lowers the probability of finding the object. We characterize the optimal prize and show that offering only one prize (winner-takes-all) maximizes the probability of finding the object but is not necessarily optimal for the crowdsearch designer.
    \\
	\vspace{0in}\\
	\noindent\textbf{Keywords:}  Contest Design, Equilibrium, Crowdsourcing \\
	\vspace{0in}\\
	\noindent\textbf{JEL Classification:} C72, D82, M52 \\
	\bigskip
\end{abstract}

\thispagestyle{empty}
\end{titlepage}

\setcounter{page}{2}


\doublespacing

    
\section{Introduction} \label{sec:introduction}

A common economic process is \textit{crowdsearch}, wherein a group of agents is invited to search for a valuable physical or virtual object that is valuable to a principal, a community of principals, or an entire society. One example of crowdsearch is patent races. Such races occur when firms compete at some costs to create and patent a new invention. The patent system enables the winner of the race to obtain a legal monopoly over a new product or technology. The winner extracts profits from a patent by selling the product or technology in the marketplace or licensing it to other firms at some fees. Competitors are not allowed to replicate the product or technology without a licensing agreement. Hence, a patent is a winner-takes-all prize that is awarded to the one firm that succeeds first from a crowd of firms that undertakes a costly search to obtain it. Since the early models of patent race \citep{kamien1978a,loury1979,dasgupta1980}, a large theoretical literature has emerged.\footnote{See, for example, \citet{hopenhayn2021} for a recent inquiry.}

Another example of crowdsearch is a group of scientists facing open questions in their discipline. Typically, if there is some hope to address unresolved issues, a subset of scientists will decide to invest time and laboratory resources to search for the answer. The scientist who finds the answer first has the right to publish it. With such a publication, this scientist will obtain non-monetary benefits such as prestige and reputation, and may also obtain monetary benefits through higher salaries and tenure. Typically, societies in industrial countries spend more than half of a percent of GDP to induce scientists to engage in crowdsearch on open issues in their discipline \citep{gersbach2021,gersbach2023}.

A third example is crowdsourced security or \textit{bug bounty}, where a crowd searches for vulnerabilities (bug) in software and blockchain infrastructures in exchange for rewards (bounty).\footnote{For literature on bug bounty programs, see \citet{bohme2006, malladi2020, zrahia2022,akgul2023} and references therein.} This type of contest has become a major tool for detecting vulnerabilities in software used by governments, tech companies, and blockchains. Bug bounty is particularly important for blockchain infrastructure providers, since such projects do not have dedicated security teams testing software upgrades. Once the software is deployed, there is no turning back and no legal mechanism defending against system exploitation, at least until the next hard fork\textemdash a major change in the blockchain protocol \citep{breidenbach2018,bohme2020}.

Lastly, Bitcoin and many other cryptocurrency networks use a mechanism called \textit{proof-of-work} (PoW) to provide incentives for agents to verify, broadcast, and reach a consensus on a set of transactions. In the PoW protocol, the crowd (miners) searches for a virtual object\textemdash a number called the \textit{nonce}\textemdash and the first miner to find it is granted the right to add a block of processed transactions that yields some tokens as rewards \citep{dimitri2017, arnosti2022, ma2019,leshno2020, halaburda2022}.

In this paper, we provide a simple game-theoretic model of \textit{crowdsearch}, abstracting from the details of particular applications. We focus on a stylized setting in which an arbitrary number of agents (crowd) is invited to search for an object that is valuable to the principal. Each agent makes a binary decision whether or not to search at a heterogeneous private cost, which we interpret as their ability. If the object is found, the principal offers a prize to the agent or set of agents who find it. 

Building on the important work of \citet{ghosh2016} and \citet{sarne2017}, we model crowdsearch as a \textit{simple contest}, where each agent makes a binary costly decision whether or not to search.\footnote{\citet{ghosh2016} term this type of contest as a ``contest with simple agents,'' while \citet{sarne2017} refer to it as a ``simple contest.''} This allows us to examine some salient features of crowdsearch such as optimal crowd size, the benefit or downside of inviting a large crowd, and whether one or multiple prizes are desirable. Our justification for modeling the search decision as binary is threefold. First, decisions to enter a search typically entail large fixed costs, as in patent races. Second, in many crowdsearch settings, agents exert full effort once they have decided to enter the search. Third, the binary decision can be interpreted as a costly entry decision to a contest, where the equilibrium outcome is symmetric\textemdash that is, those who have entered have an equal chance of winning.

An important feature of crowdsearch that differentiates it from a standard crowdsourcing contest is the principal's objective. Whereas the goal may be to elicit the best entry in a crowdsourcing contest, the goal of crowdsearch is to maximize the likelihood of finding the object, given a budget or prize that can be paid. We call this objective the \textit{probability of success}. In particular, we focus on several design variables for crowdsearch, with the aim of maximizing the probability of success: What is the effect of having a larger crowd? How large should the crowd of invited agents be? Should non-strategic agents be added to the crowd of searchers? We also examine how the prize should be set and allocated. In particular, what is the optimal prize that trades off gain from finding the object with the probability of finding it? How should the prize be split among the searchers?

We obtain the following results. First, we establish that any equilibrium strategy must be a threshold strategy, i.e. only agents with a cost below some (potentially individual) threshold decide to search. Second, we provide sufficient conditions for the equilibrium to be unique and symmetric. Third, we show that even inviting an unlimited crowd does not guarantee that the object is found, unless there are agents with zero costs, or equivalently, agents who have intrinsic gains from participation. It may even happen that having more agents in the pool of potential participants lowers the probability of success. Fourth, we characterize the speed of convergence to the limiting probability of success by an unlimited crowd and how it discontinuously depends on the support of the cost distribution. Fifth, we show that characterizing the optimal prize boils down to choosing the equilibrium threshold that maximizes the principal's payoff and depends crucially on the shape of the cost distribution. Sixth, we show that the model extends to the case of two-dimensional heterogeneity\textemdash heterogeneous cost and heterogeneous probability of finding the object. Lastly, we analyze some insightful extensions. We show that non-strategic agents decrease participation but can increase or decrease the probability of success since they are, in effect, equivalent to fractional players. Furthermore, we demonstrate that in a model with multiple prizes, having one prize (winner-takes-all) achieves the highest probability of success, but is not necessarily optimal for the principal once the payout is taken into account.

The paper is organized as follows: \Cref{sec:literature} reviews the literature. In \Cref{sec:model}, we introduce the model. In \Cref{sec:analysis}, we characterize the equilibria and derive their properties when the crowd becomes large in \Cref{sec:largecontest}. \Cref{sec:optprize} solves for the principal's optimal prize. In \Cref{sec:hetero}, we provide an extension with two dimensions of heterogeneity, while \Cref{sec:extension} analyses extensions with non-strategic agents and multiple prizes. \Cref{sec:discuss} discusses further aspects of our model and concludes. The proofs can be found in the Appendix.


\section{Literature} \label{sec:literature}

Our paper belongs to the large literature on contests in economics and computer science, where agents compete by exerting costly effort to win prizes \citep{konrad2009,vojnovic2016,segev2020}. In traditional models \citep{tullock1980,lazear1981}, agents make strategic choices on the \textit{amount} of effort to exert and the design objective is typically to elicit the highest sum of efforts. This can be done by appropriately splitting up the prize \citep{moldovanu2001}, sequencing the timing of choices \citep{hinnosaar2023}, or by developing a revelation mechanism to select a subset of contestants from a pool of candidates \citep{mercier2018}.

In contrast, the design objective in crowdsourcing contests or research tournaments is to elicit the highest quality of submissions. Pioneering this literature, \citet{taylor1995} studies a contest where invited agents first pay an entry cost into the tournament. The agents then decide whether or not to draw costly innovation from a common distribution. The winner is the agent with the highest value of innovation. Our contest model can be thought of as focusing on the entry stage of \citet{taylor1995}’s model, with equilibrium play occurring in the contest stage afterward. The main departure is that the entry cost in our model is private and heterogeneous. \citet{dipalantino2009} consider a game where agents with private skill levels select and compete in multiple simultaneous contests with different rewards, modeled as all-pay auctions. They show that in equilibrium, agents partition themselves by skill levels\textemdash high-skilled agents compete in the contest with high rewards. \citet{archak2009} study a single contest modeled as an all-pay auction, but with multiple rewards. They focus on the asymptotic behaviors and show that as the number of contestants goes to infinity, only the highest-ability, i.e. lowest-cost, agents matter, implying that equilibrium behavior in a large contest is distribution-free in the sense that only the support matters, and not the specific shape of the distribution. We found a similar result in our asymptotic analysis of a simple contest.

Our paper departs from the mainstream contest literature by studying contests with ``simple agents’’ as pioneered by \citet{ghosh2016}. In their model, agents with private information on their own quality of submission decide whether or not to participate at a commonly known cost. Agents then receive prizes according to the ranking of their submission quality. \citet{ghosh2016} show that the symmetric equilibrium has a threshold structure where high-quality agents participate and low-quality agents do not. With the total prize fixed, \citet{ghosh2016} demonstrate that setting a fixed number of equal prizes maximizes participation, thereby maximizing the resulting quality of submission. \citet{levy2017} study a model in which the quality of submission is drawn from a known distribution \textit{after} the decision to participate. This results in a symmetric mixed-strategy equilibrium that is characterized by a participation probability that equates the cost and the benefit of participation. With the same model, \citet{sarne2017} show that the winner-takes-all prize structure induces the highest level of participation and also maximizes the principal’s objective of achieving the highest quality, minus the payout. Follow-up papers on simple contests include an experimental study of over-participation \citep{levy2018a}, a model where agents make decisions sequentially \citep{levy2018b}, and a variant where some agents learn about their quality \textit{before} making their decision \citep{simon2022}.

Our model differs from that of \citet{ghosh2016} and \citet{levy2017} in a major aspect. The quality of submission\textemdash which is private information in \citet{ghosh2016} and drawn after the decision is made in \citet{levy2017}\textemdash is absent in our model. Instead, we model the cost of participation as private information, which only affects the agents’ strategic behavior and not the principal’s objective. This difference leads to our result that the winner-takes-all prize structure maximizes participation but does not maximize the principal’s objective, once the payout is considered, unlike the findings in \citet{ghosh2016} and \citet{sarne2017}.

Albeit in a very different setting, the model most closely related to ours is \citet{ghosh2013}, where agents with different privacy requirements decide whether or not to participate in a database. With more agents, the database becomes more private. \citet{ghosh2013} show that the symmetric equilibrium has a threshold structure in which agents with less privacy requirement participate. In a sense, \citet{ghosh2013} study the model opposite to ours where participation induces further participation. This leads to some noteworthy differences. First, whereas the equilibrium threshold in our model is characterized by a fixed point of a strictly decreasing function, the equilibrium thresholds in their model are fixed points of a strictly increasing function. Second, the asymptotic number of participants in our model depends crucially on the support of the distribution of private information, while in \citet{ghosh2013}'s model this number always diverges to infinity.


\section{Model} \label{sec:model}

A principal invites a set of agents $N = \{1,\dots,n\}$ to search for an object in exchange for a prize. If agent $i$ decides to search ($s_i = 1$), $i$ incurs a cost $c_i$ and finds the object independently with probability $q \in (0,1]$, interpreted as the difficulty of the search. Otherwise, agent $i$ decides not to search ($s_i = 0$) and does not find the object. Agent $i$'s cost, $c_i$, is private information, drawn from a continuous distribution $F$ with finite density and with support $[\underline{c},\overline{c}]$, where $0 \leq \underline{c} < \overline{c} \leq \infty$.\footnote{In \Cref{sec:hetero}, we consider heterogeneity in two dimensions: cost of search and probability of finding the object. Agent $i$'s cost, $c_i$, is private information as in the main model, while the probability $q_i$, is common knowledge. The model can also be extended to allow for $\underline{c} < 0$, as discussed in \Cref{sec:discuss}.}

If the object is found, the principal receives a payoff $W > 0$ and offers a prize $ V > 0$, which is given uniformly randomly to one of the agents who found it.\footnote{This is a \textit{winner-takes-all} contest. We endogenize the choice of $V$ in \Cref{sec:optprize}. In \Cref{sec:extension}, we show that the winner-takes-all contest induces the highest level of search by the agents and that it is without loss of generality if the principal can freely choose the prize value.} We write $\bolds = (s_1,\dots,s_n)$ and $\bolds_{-i} = (s_1,\dots,s_{i-1},s_{i+1},\dots,s_n)$, and let $S = \{j \in N: s_j = 1\}$ and $S_{-i} = \{j \in N-i: s_j = 1\}$ denote the set of agents who search and the set of agents other than agent $i$ who search, respectively. The payoff of agent $i$ is
\begin{equation} \label{eq:payoff}
u_i(s_i, \bolds_{-i},c_i) = s_i\left( p(\bolds_{-i}) V - c_i \right),
\end{equation}
where 
\begin{equation} \label{eq:payoffprob}
p(\bolds_{-i}) \equiv q \sum_{t=0}^{|S_{-i}|} \binom{|S_{-i}|}{t} \frac{q^t(1-q)^{|S_{-i}| - t}}{t+1}
\end{equation}
is the probability that agent $i$ finds the object and wins the prize, conditioning on searching. To win the prize, the agent must find the object\textemdash captured by the first $q$ in~\eqref{eq:payoffprob}\textemdash and be chosen among all those who found it\textemdash the sum with binomial coefficients in~\eqref{eq:payoffprob}. Given an action profile $\bolds$, let $B(\bolds)$ be the event that the object is found. The principal's objective is $(W-V)\Pr(B(\bolds))$,
where $\Pr(B(\bolds)) = 1 - (1-q)^{|S|}$ is the probability that the object is found, which depends on the set of agents participating in the search.

A strategy profile is denoted $\boldsigma = (\sigma_1,\dots,\sigma_n)$, where a strategy $\sigma_i: [\underline{c},\overline{c}] \rightarrow \{0,1\}$ maps an agent's private information to a search decision. Given a strategy profile $\boldsigma$, the ex-ante probability that the object is found is $\mathbb{E}_{\boldc} [\Pr(B(\boldsigma(\boldc)))]$. An important class of strategies is the class of threshold strategies. A \textit{threshold strategy} with threshold $\hat{c}$, denoted by $\sigma_{\hat{c}}$, is characterized by
\[
\sigma_{\hat{c}}(c_i) = \left\{ \begin{array}{rcl}
 1 &  \mbox{if} & c_i \leq \hat{c} \\ 
 0 & \mbox{if} & c_i > \hat{c}. \\
\end{array}\right.
\]
For a threshold vector $\hat{\boldc} = (\hat{c}_1,\dots, \hat{c}_n)$, let $\boldsigma_{\hat{\boldc}} = (\sigma_{\hat{c}_1},\dots, \sigma_{\hat{c}_n})$ denote the threshold strategy profile. We write $\boldsigma_{\hat{c}}$ for a threshold strategy profile where all agents use the same threshold $\hat{c}$ and adopt the usual notational convention for $\boldsigma_{-i}$, $\boldc_{-i}$, $\hat{\boldc}_{-i}$, and $\boldsigma_{\hat{\boldc}_{-i}}$.\footnote{Throughout the paper, we use $c$ to denote a generic cost and $c_i$ to denote a generic cost of agent $i$, while $\hat{c}$ denotes a generic threshold and $\hat{c}_i$ denotes a generic threshold of agent $i$.} If agents all use the same threshold strategy, the ex-ante probability that the object is found is given by 
\[
    \mathbb{E}_{\boldc} [\Pr(B(\boldsigma_{\hat{c}}(\boldc)))] =  1 - (1-qF(\hat{c}))^n \equiv P(\hat{c},q,n),
\]
which we shall call the \textit{probability of success}.

A strategy profile $\boldsigma^*$ is an (Bayesian Nash) equilibrium if for all $i$, $c$, and $s_i$, \[\mathbb{E}[u_i(\sigma^*_i(c_i),\boldsigma^*_{-i}(\boldc_{-i}),c_i)|c_i = c] \geq \mathbb{E}[u_i(s_i,\boldsigma^*_{-i}(\boldc_{-i}),c_i)|c_i = c].\]
The principal therefore aims to maximize $(W-V)\mathbb{E}_{\boldc} [\Pr(B(\boldsigma^*(\boldc)))]$ subject to $\boldsigma^*$ being an equilibrium of the game.


\section{Equilibrium Analysis} \label{sec:analysis}

This section analyzes the game. We offer a characterization of the equilibrium and discuss some important comparative statics.

\subsection{Equilibrium Characterization}

We proceed as follows. First, we establish that any equilibrium strategy must be a threshold strategy. Second, we show that if the threshold cost vector is interior, it satisfies a system of indifference conditions. Third, we propose a set of conditions for the equilibrium to be unique and symmetric. Lastly, we derive a simple and intuitive fixed-point condition for the unique equilibrium.

The first result states that the equilibrium strategies are threshold strategies.

\begin{proposition} \label{prop:threshold} 
$\boldsigma^* = \boldsigma_{\boldc^*}$ for some threshold vector $\boldc^* = (c^*_1,\dots,c^*_n)$. 
\end{proposition}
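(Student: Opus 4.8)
The plan is to show that each agent's best response against any equilibrium profile is governed by a single cutoff in their own cost, which is precisely the defining property of a threshold strategy. Fix an equilibrium $\boldsigma^*$ and an agent $i$, and condition on the realized cost $c_i = c$. Comparing the two available actions through the payoff~\eqref{eq:payoff}, searching ($s_i = 1$) yields conditional expected payoff $\E_{\boldc_{-i}}\bigl[p(\boldsigma^*_{-i}(\boldc_{-i}))\bigr]\,V - c$, while not searching ($s_i = 0$) yields $0$.

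The key step is the observation that the expected winning probability
\[
\bar p_i \equiv \E_{\boldc_{-i}}\bigl[p(\boldsigma^*_{-i}(\boldc_{-i}))\bigr]
\]
does not depend on $c$. This is because the costs $\{c_j\}_{j\neq i}$ are drawn independently of $c_i$, and each opponent's action $\sigma^*_j(c_j)$ depends only on $c_j$; hence agent $i$'s posterior over the rival profile $\bolds_{-i}$\textemdash equivalently, over the number of rival searchers $|S_{-i}|$, which is all that $p(\cdot)$ in~\eqref{eq:payoffprob} depends on\textemdash is invariant to $c_i$. Consequently the marginal benefit of searching, $\bar p_i V$, is constant in $c$, so the sign of the payoff difference $\bar p_i V - c$ is monotone (decreasing) in $c$.

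It then follows from the equilibrium inequality that $\sigma^*_i(c) = 1$ whenever $\bar p_i V - c > 0$ and $\sigma^*_i(c) = 0$ whenever $\bar p_i V - c < 0$, so $\sigma^*_i$ coincides with the threshold strategy $\sigma_{c^*_i}$ for the threshold $c^*_i \equiv \bar p_i V$. The only indeterminacy is at the knife-edge cost $c = \bar p_i V$, where the agent is indifferent; since $F$ admits a density, this event has probability zero, so the tie-breaking choice is payoff-irrelevant and may be fixed to match the convention in the definition of $\sigma_{\hat c}$. Running the argument for every $i$ produces the threshold vector $\boldc^* = (c^*_1,\dots,c^*_n)$ with $\boldsigma^* = \boldsigma_{\boldc^*}$.

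I do not anticipate a substantive obstacle here: the entire result rests on the independence of costs across agents, which decouples agent $i$'s own cost from their belief about opponents' play and thereby makes the benefit of search a constant rather than a function of $c_i$. The one point that genuinely requires care is the handling of the measure-zero indifference set, but the assumed continuity of $F$ disposes of it cleanly.
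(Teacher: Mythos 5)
Your proof is correct and rests on exactly the same key fact as the paper's: since costs are drawn independently, the expected winning probability $\E_{\boldc_{-i}}\bigl[p(\boldsigma^*_{-i}(\boldc_{-i}))\bigr]$ is constant in agent $i$'s own cost, so the net payoff from searching is strictly decreasing in $c_i$ and equilibrium play must be a cutoff. The paper packages this as a proof by contradiction (supposing $\sigma^*_i(c)=0$ and $\sigma^*_i(c')=1$ for $c<c'$ and chaining the equilibrium inequalities into $0>0$), whereas you argue directly and even exhibit the threshold $c^*_i = V\bar p_i$; this is a presentational rather than substantive difference, and your explicit treatment of the measure-zero indifference point corresponds to the tie-breaking caveat the paper defers to a footnote.
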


Consequently, we can analyze the game as if the strategies are the thresholds, and characterizing the equilibrium strategies then boils down to characterizing the \textit{equilibrium threshold vector}, $\boldc^* = (c^*_1,\dots,c^*_n) $. Suppose further that the equilibrium threshold vector is interior, $c^*_i \in (\underline{c},\overline{c})$ for all $i$. Then, it must satisfy the following system of indifference conditions: for all $i$,
\begin{equation} \label{eq:eqm}
c^*_i = V \Psi(\boldc^*_{-i}),
\end{equation}
where the function $\Psi: [\underline{c},\overline{c}]^{n-1} \rightarrow \mathbb{R}$, given by
\begin{equation} \label{eq:psi}
\Psi(\hat{\boldc}_{-i}) \equiv \mathbb{E}_{\boldc_{-i}}\left[p(\boldsigma_{\hat{\boldc}_{-i}}(\boldc_{-i})) \right].
\end{equation}
is the expectation of $p_i(\bolds_{-i})$ over the private costs of other agents, given that other agents follow threshold strategies. That is, $\Psi(\hat{\boldc}_{-i})$ denotes the probability that agent $i$ will be the winner, given that the other $n-1$ agents deploy threshold strategies characterized by some threshold vector $\hat{\boldc}_{-i}$. The condition in~\eqref{eq:eqm} then equates the cost and the expected benefits of search for each agent, characterizing the threshold cost such that the agent is indifferent between searching and not searching for the object. The following proposition states some important properties of $\Psi$.

\begin{proposition} \label{prop:psi}
The following holds:
    \begin{itemize}
        \item[(i)] $\Psi(\hat{c}_1,\dots, \hat{c}_{i-1},\hat{c}_{i+1},\dots,\hat{c}_n) = \Psi(\hat{c}_{\pi(1)},\dots, \hat{c}_{\pi(i-1)},\hat{c}_{\pi(i+1)},\dots,\hat{c}_{\pi(n)})$ for any permutation $\pi$,
        \item[(ii)] $\partial \Psi(\hat{\boldc}_{-i})/\partial \hat{c}_j < 0$ for all $j$ and  all $\hat{\boldc}_{-i} \in [\underline{c},\overline{c}]^{n-1}$,
        \item[(iii)] $\Psi(\underline{c},\dots,\underline{c}) = q$ and $\Psi(\overline{c},\dots,\overline{c}) = \frac{1-(1-q)^n}{n}$.
    \end{itemize}
\end{proposition}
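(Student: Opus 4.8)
The plan is to reduce the whole proposition to a clean closed form for $p(\bolds_{-i})$ that depends only on the number of searching competitors. First I would collapse the sum in \eqref{eq:payoffprob} using the elementary identity $\frac{1}{t+1}\binom{m}{t} = \frac{1}{m+1}\binom{m+1}{t+1}$ with $m = |S_{-i}|$; reindexing $t \mapsto t+1$ turns the sum into a (truncated) binomial expansion of $(q + (1-q))^{m+1}$, giving
\[
p(\bolds_{-i}) = g(|S_{-i}|), \qquad g(m) \equiv \frac{1 - (1-q)^{m+1}}{m+1}.
\]
This has the natural reading that among $m+1$ searchers the object is found with probability $1 - (1-q)^{m+1}$, and the tagged agent is selected uniformly among the finders. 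Since under threshold strategies each agent $j \ne i$ searches independently with probability $F(\hat{c}_j)$, the count $M \equiv |S_{-i}| = \sum_{j \ne i}\mathbf{1}[c_j \le \hat{c}_j]$ is a sum of independent Bernoulli variables, and
\[
\Psi(\hat{\boldc}_{-i}) = \E[g(M)] = \sum_{T \subseteq N - i}\Big(\prod_{j \in T}F(\hat{c}_j)\prod_{j \in (N-i)\setminus T}(1 - F(\hat{c}_j))\Big)\, g(|T|).
\]

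With this representation in hand, parts (i) and (iii) fall out almost immediately. For (i), every summand depends on the thresholds only through $|T|$ and through symmetric products of the $F(\hat{c}_j)$ and $1 - F(\hat{c}_j)$, so permuting the thresholds merely relabels the subsets $T$ and leaves the total unchanged. For (iii), setting every threshold to $\underline{c}$ gives $F(\underline{c}) = 0$, so $M = 0$ almost surely and $\Psi = g(0) = q$; setting every threshold to $\overline{c}$ gives $F(\overline{c}) = 1$, so $M = n-1$ almost surely and $\Psi = g(n-1) = \frac{1 - (1-q)^n}{n}$.

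The substantive part is (ii). Here I would condition on the searching decisions of the agents other than $j$, writing $M = M_{-j} + \mathbf{1}[c_j \le \hat{c}_j]$ with $M_{-j}$ independent of $c_j$, so that
\[
\Psi(\hat{\boldc}_{-i}) = \E_{M_{-j}}\big[F(\hat{c}_j)\,g(M_{-j}+1) + (1 - F(\hat{c}_j))\,g(M_{-j})\big].
\]
Differentiating in $\hat{c}_j$ yields $\partial \Psi/\partial \hat{c}_j = f(\hat{c}_j)\,\E_{M_{-j}}[g(M_{-j}+1) - g(M_{-j})]$, where $f$ is the density of $F$. It then remains only to show that $g$ is strictly decreasing and that $f(\hat{c}_j) > 0$ on the support; the latter is the standing assumption on $F$, while the strict monotonicity of $g$ is the main obstacle.

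To prove $g(m+1) < g(m)$ I would use either a coupling argument or an elementary inequality. Probabilistically, add one competitor to a pool of $m+1$ searchers: if the newcomer fails (probability $1-q$) the tagged agent's winning probability is unchanged at $g(m)$, whereas if the newcomer succeeds (probability $q > 0$) the tagged agent, whenever it found the object, must share selection with one more finder, so its conditional winning probability strictly drops from $\frac{1}{t+1}$ to $\frac{1}{t+2}$ in each branch; averaging gives $g(m+1) < g(m)$. Algebraically, clearing the positive denominators reduces $g(m+1) < g(m)$ to $h(r) < 1$ for $r \equiv 1-q \in [0,1)$, where $h(r) = (m+2)r^{m+1} - (m+1)r^{m+2}$; since $h'(r) = (m+1)(m+2)r^m(1-r) \ge 0$ on $[0,1]$ and $h(1) = 1$, we get $h(r) < 1$ for $r < 1$. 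Either route gives $g(M_{-j}+1) - g(M_{-j}) < 0$ for every realization, hence $\partial \Psi/\partial \hat{c}_j < 0$, completing (ii).
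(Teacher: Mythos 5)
Your proposal is correct, and its overall skeleton matches the paper's: reduce $p(\bolds_{-i})$ to a function $g$ of the competitor count via the identity $\binom{m}{t}\tfrac{1}{t+1}=\binom{m+1}{t+1}\tfrac{1}{m+1}$ (exactly the binomial identity the paper isolates as a Fact in its appendix), then read (i) off exchangeability and (iii) off the endpoint values $g(0)=q$ and $g(n-1)=\tfrac{1-(1-q)^n}{n}$. Where you genuinely depart is the substantive part (ii). The paper asserts that the count-dependent winning probability is strictly decreasing ``by taking the derivative with respect to $|S_{-i}|$'' (a derivative in a discrete argument, left unverified) and then transfers monotonicity to $\Psi$ through a first-order stochastic dominance comparison of the search-count distributions induced by $\hat{c}_j$ versus $\hat{c}_j'>\hat{c}_j$. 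You instead exploit the fact that $\Psi$ is affine in $F(\hat{c}_j)$: conditioning on agent $j$'s participation indicator yields the exact formula $\partial\Psi/\partial\hat{c}_j=f(\hat{c}_j)\,\E\left[g(M_{-j}+1)-g(M_{-j})\right]$, and the problem collapses to the discrete inequality $g(m+1)<g(m)$, which you establish rigorously twice\textemdash once by the coupling argument (a successful newcomer strictly dilutes the tagged agent's share of the prize) and once by the elementary bound $h(r)=(m+2)r^{m+1}-(m+1)r^{m+2}<1$ for $r\in[0,1)$. Your route buys an explicit expression for the partial derivative and an airtight treatment of the monotonicity of $g$, which the paper only gestures at; the paper's stochastic-dominance route buys robustness, since it needs neither the affine structure nor a closed form for the derivative. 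One shared caveat worth a sentence in a final write-up: strict negativity of $\partial\Psi/\partial\hat{c}_j$ at a given point requires $f(\hat{c}_j)>0$ there. The paper's support assumption guarantees that $F$ is strictly increasing (enough for the paper's conclusion that the \emph{values} of $\Psi$ strictly decrease), but the pointwise derivative statement in (ii) implicitly needs a strictly positive density\textemdash for your proof and for the paper's alike\textemdash so your appeal to this as a standing assumption is no weaker than what the proposition itself presupposes.
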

The first property says that $\Psi$ is symmetric. The identity of the agents does not matter because agents are ex-ante symmetric. The second property is that $\Psi$ is strictly decreasing in all its arguments. It holds because higher thresholds adopted by other agents increase their search probability and, in turn, lower agent $i$'s probability of winning the prize. To facilitate a sharper prediction, we now impose two assumptions on $\Psi$.

\begin{assumption} \label{assump:symmetric}
$|\partial \Psi(\hat{\boldc}_{-i})/ \partial \hat{c}_j| \neq 1/V$ for all $j$ and all $\hat{\boldc}_{-i} \in [\underline{c},\overline{c}]^{n-1}$.
\end{assumption}

\begin{assumption} \label{assump:interior}
$\underline{c} < V\Psi(\underline{c},\dots,\underline{c}) = qV$ and $V\frac{1-(1-q)^n}{n} = V\Psi(\overline{c},\dots,\overline{c})  < \overline{c}$.
\end{assumption}

The first assumption ensures that the equilibrium is unique. Note that since the choice of a threshold is effectively agent $i$'s strategy, the function $V\Psi(\hat{\boldc}_{-i})$ can be interpreted as agent $i$'s best-response function, given the thresholds chosen by the other agents. \Cref{assump:symmetric} then demands that this best-response function has a slope that is never equal to unity. This guarantees that best-response functions cross only once, resulting in a unique equilibrium.\footnote{This assumption is akin to the standard sufficient condition for uniqueness of equilibrium in a Cournot oligopoly \citep{tirole1988,vives1999}. \Cref{sec:asymmetric} illustrates how uniqueness can fail and provides an example with a continuum of equilibria.} \Cref{assump:interior} restricts the parameter values to ensure that the solution to the system of indifference conditions in~\eqref{eq:eqm} is interior. With these two assumptions, we now characterize the unique equilibrium of the object bounty game. To this end, define $\Phi: [\underline{c},\overline{c}]  \times (0,1] \times \mathbb{N} \rightarrow \mathbb{R}$ by
\begin{equation} \label{eq:phi}
     \Phi(\hat{c},q,n) \equiv \frac{P(\hat{c},q,n)}{nF(\hat{c})} = \frac{1-(1-qF(\hat{c}))^n}{nF(\hat{c})}
\end{equation}
if $\hat{c} > \underline{c}$ and $\Phi(\underline{c},q,n) \equiv q$.  Indeed, $\Phi$ is the probability that agent $i$ wins, given that all other agents use the same threshold strategy. In other words, $\Phi$ is the ``slice'' of $\Psi$ along the ``diagonal'', i.e. when the arguments of $\Psi$ are all the same. As defined in~\eqref{eq:phi}, $\Phi$ has an intuitive interpretation: it is the probability that the object is found, divided by the expected number of agents who search. The reason is that if the object is found at all, then the agents participating in the search have the same chance to obtain the reward. We obtain

\begin{proposition} \label{prop:eqm}
Under \Cref{assump:symmetric} and \Cref{assump:interior}, the unique equilibrium is $\boldsigma_{c^*}$. The equilibrium threshold $c^* \equiv c^*(V,q,n) \in (\underline{c},\overline{c})$ is the solution to
\begin{equation} \label{eq:symeqm}
\hat{c} = V \Phi(\hat{c},q,n).
\end{equation}
\end{proposition}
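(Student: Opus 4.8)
The plan is to work throughout with threshold vectors, which is legitimate by \Cref{prop:threshold}, and to split the argument into (a) existence of the claimed symmetric interior solution and (b) uniqueness. For existence I would study the scalar map $G(\hat c) \equiv V\Phi(\hat c,q,n) - \hat c$ on $[\underline c,\overline c]$. Using the boundary values recorded in \Cref{prop:psi}(iii), namely $\Phi(\underline c,q,n)=q$ and $\Phi(\overline c,q,n)=\frac{1-(1-q)^n}{n}$, \Cref{assump:interior} gives $G(\underline c)=qV-\underline c>0$ and $G(\overline c)=V\frac{1-(1-q)^n}{n}-\overline c<0$. Since $F$ is continuous, $\Phi$ is continuous (including at $\underline c$, where $\Phi$ is defined to equal $q$), so the intermediate value theorem yields a root $c^*\in(\underline c,\overline c)$ of $\hat c=V\Phi(\hat c,q,n)$. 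Because $\Phi$ is the diagonal slice of $\Psi$, this root satisfies $c^*=V\Psi(c^*,\dots,c^*)$, i.e. the indifference condition~\eqref{eq:eqm} holds for every agent, so the symmetric profile $\boldsigma_{c^*}$ is an equilibrium.

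For uniqueness the key step is that \emph{any interior equilibrium is symmetric}. Take an interior equilibrium threshold vector and any two agents $i,j$; both satisfy $c_i^*=V\Psi(\boldc^*_{-i})$ and $c_j^*=V\Psi(\boldc^*_{-j})$. By the symmetry of $\Psi$ (\Cref{prop:psi}(i)) the vectors $\boldc^*_{-i}$ and $\boldc^*_{-j}$ can be aligned so that they share the common coordinates $\{c_k^*:k\neq i,j\}$ and differ only in the single remaining coordinate, which equals $c_j^*$ in the first and $c_i^*$ in the second. Applying the mean value theorem in that one coordinate gives $c_i^*-c_j^*=V\,(\partial\Psi/\partial\hat c_\ell)(\xi)\,(c_j^*-c_i^*)$ for an intermediate point $\xi$, hence $(c_i^*-c_j^*)\bigl(1+V\,(\partial\Psi/\partial\hat c_\ell)(\xi)\bigr)=0$. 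Since $\partial\Psi/\partial\hat c_\ell<0$ by \Cref{prop:psi}(ii), \Cref{assump:symmetric} guarantees $V\,|\partial\Psi/\partial\hat c_\ell(\xi)|\neq1$, so the bracketed factor is nonzero and $c_i^*=c_j^*$. This is the Cournot-style transversality argument: the best-response slope never equals unity, so the only interior fixed point lies on the diagonal. Uniqueness on the diagonal is then immediate, since $G$ is strictly decreasing ($\hat c$ increasing and $V\Phi$ strictly decreasing by \Cref{prop:psi}(ii)), whence its root $c^*$ is unique.

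The step I expect to be the main obstacle is ruling out equilibria in which some agents sit at a boundary ($\hat c_i=\underline c$ or $\hat c_i=\overline c$), because the indifference equation~\eqref{eq:eqm}, and hence the mean value argument above, applies only to agents who are strictly interior and therefore genuinely indifferent. Here I would lean on \Cref{assump:interior}: its first inequality $\underline c<qV$ rules out the ``nobody searches'' profile (a marginal searcher facing no active rivals earns $qV>\underline c$ and would deviate), while its second inequality $V\frac{1-(1-q)^n}{n}<\overline c$ rules out the ``everybody searches for every cost realization'' profile. The delicate part is promoting these two facts into a full argument that the clamped best responses cannot rest on a boundary at \emph{any} equilibrium, so that every equilibrium is in fact interior and the symmetry argument applies. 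Closing this gap---showing that \Cref{assump:interior} forces all equilibrium components strictly inside $(\underline c,\overline c)$, in combination with the substitutes structure of $\Psi$---is where I expect the real work to lie.
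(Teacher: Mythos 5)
Your proposal is correct and follows essentially the same route as the paper's proof: the paper also (i) shows that any interior equilibrium threshold vector must be symmetric via the mean value theorem applied to $\Psi$ together with \Cref{assump:symmetric} (its Step~1 is exactly your pairwise MVT argument, applied to the smallest and largest components), (ii) reduces the problem to the scalar equation $\hat c = V\Phi(\hat c,q,n)$, and (iii) obtains existence, interiority, and uniqueness of the fixed point from continuity and strict monotonicity of $\Phi$ together with the boundary values supplied by \Cref{assump:interior}.

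Two remarks on where you and the paper diverge. First, the one step of the paper's proof you do not carry out is its Step~2: verifying that the diagonal slice $\Psi(\hat c,\dots,\hat c)$ actually equals the closed-form expression in~\eqref{eq:phi}, namely $\frac{1-(1-qF(\hat c))^n}{nF(\hat c)}$. You treat ``$\Phi$ is the diagonal slice of $\Psi$'' as definitional, but~\eqref{eq:phi} is a formula, and the identification requires the binomial computation the paper performs with \Cref{lemma:modbinom}; this is a calculation rather than a missing idea, but strictly it is part of the claim. Second, the obstacle you anticipate as ``the real work''---ruling out equilibria with components at $\underline c$ or $\overline c$---is not resolved in the paper either: its Step~1 opens with ``let $\boldc^*$ be the \emph{interior} equilibrium threshold vector,'' so corner profiles are excluded by fiat, and \Cref{assump:interior} is only shown to make the symmetric fixed point interior, not to preclude asymmetric corner equilibria. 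On this point your argument is at parity with the paper's, and your concern flags a subtlety the paper leaves implicit rather than a defect specific to your own proof.
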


We henceforth refer to $\boldsigma_{c^*}$ simply as the \textit{equilibrium}.\footnote{Technically, the equilibrium is \textit{generically} unique in the sense that one could alternatively define a threshold strategy that specifies $0$ (no search), or even randomizes, at the threshold value.} To ease exposition, we suppress explicit dependence of $c^*$ and $\Phi$ on $V$, $q$, and $n$, when appropriate. Condition~\eqref{eq:symeqm} is a special case of~\eqref{eq:eqm}. It is an indifference condition capturing the fact that in an equilibrium, an agent of type $c^*$ must be indifferent between searching and not searching. The left-hand side is the cost of the search and the right-hand side is the expected reward: $V$ times $\Phi$.

\subsection{Comparative Statics}

We now perform comparative statics of the equilibrium.  For this purpose, we first state the properties of $\Phi(\hat{c},q,n)$. The properties of $c^*(V,q,n)$ then ensue since $c^*$ is the unique fixed point of $V\Phi(\hat{c},q,n)$. We obtain the following comparative statics results for $c^*$.

\begin{proposition} \label{prop:compstat}
$\Phi(\hat{c},q,n)$ is strictly decreasing in $\hat{c}$ and strictly increasing in $q$. For $\hat{c} > \underline{c}$, $\Phi(\hat{c},q,n)$ is strictly decreasing in $n$. The equilibrium threshold $c^*(V,q,n)$ is (i) increasing in $V$, (ii) increasing in $q$, and (iii) decreasing in $n$.
\end{proposition}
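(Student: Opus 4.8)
The plan is to split the proposition into two halves: first establish the three monotonicity properties of $\Phi$ directly from its closed form, and then derive the three comparative statics for $c^*$ using \emph{only} these monotonicities together with the fact (\Cref{prop:eqm}) that $c^*$ is the unique fixed point of the strictly decreasing map $\hat{c}\mapsto V\Phi(\hat{c},q,n)$. This keeps the $c^*$ part entirely free of differentiability assumptions on the density.

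For the properties of $\Phi$, the workhorse is the geometric identity $1-(1-x)^n = x\sum_{j=0}^{n-1}(1-x)^j$ applied with $x=qF(\hat{c})$, which gives, for $\hat{c}>\underline{c}$, the representation
\begin{equation*}
\Phi(\hat{c},q,n) = \frac{q}{n}\sum_{j=0}^{n-1}\bigl(1-qF(\hat{c})\bigr)^{j}.
\end{equation*}
From here I read off the three facts. Monotonicity in $\hat{c}$: since $F$ is strictly increasing on its support, each term with $j\geq 1$ is strictly decreasing in $\hat{c}$, so the sum strictly decreases whenever $n\geq 2$. Monotonicity in $q$: this is cleanest from the original quotient, where differentiating gives $\partial\Phi/\partial q = (1-qF(\hat{c}))^{n-1}>0$ because $qF(\hat{c})<1$ at any interior $\hat{c}$. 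Monotonicity in $n$: reading $\Phi/q = A_n \equiv \tfrac1n\sum_{j=0}^{n-1}t^{j}$ as the arithmetic mean of the decreasing sequence $\{t^{j}\}$ with $t=1-qF(\hat{c})\in[0,1)$, appending the smaller term $t^{n}$ lowers the mean; formally $A_{n+1}-A_n = (t^{n}-A_n)/(n+1)<0$ since $A_n\geq t^{n-1}>t^{n}$. The boundary value $\Phi(\underline{c},\cdot,\cdot)=q$ disposes of the remaining case and explains why strict monotonicity in $n$ is stated only for $\hat{c}>\underline{c}$.

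For $c^*$ I would run a single-crossing (monotone fixed point) argument. Since $\Phi$ is strictly decreasing in $\hat{c}$, the map $\hat{c}\mapsto V\Phi$ is strictly decreasing while the left-hand side $\hat{c}$ is strictly increasing, so any parameter change that shifts $V\Phi$ pointwise upward (downward) strictly raises (lowers) the crossing point. Concretely, for (i) take $V_1<V_2$ with interior fixed points $c_1^*,c_2^*$; if $c_2^*\leq c_1^*$ then $c_2^* = V_2\Phi(c_2^*)\geq V_2\Phi(c_1^*)>V_1\Phi(c_1^*)=c_1^*$ (using that $\Phi$ is decreasing in $\hat{c}$ and $\Phi>0$), a contradiction, so $c_2^*>c_1^*$. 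The identical template yields (ii) by substituting the strict increase of $\Phi$ in $q$, and (iii) by substituting the strict decrease of $\Phi$ in $n$, which is legitimate because both equilibria are interior so $c_1^*>\underline{c}$.

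The genuinely delicate point is the comparative static in the discrete variable $n$: the implicit-function shortcut is unavailable, so I must lean on the fixed-point contradiction argument and verify that both equilibria remain interior (guaranteed by \Cref{assump:interior}) so that the strict inequality $\Phi(\cdot,\cdot,n_2)<\Phi(\cdot,\cdot,n_1)$ can be invoked at the relevant threshold. Everything else is routine algebra. I would, however, explicitly flag the degenerate case $n=1$, where the representation collapses to $\Phi\equiv q$, constant in $\hat{c}$, as the sole exception to strict monotonicity in $\hat{c}$.
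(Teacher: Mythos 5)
Your proposal is correct, and it reaches the conclusion by a route that differs from the paper's in its key technical ingredients. The paper obtains strict monotonicity of $\Phi$ in $\hat{c}$ by inheritance from $\Psi$ (\Cref{prop:psi}, proved via a first-order stochastic dominance argument, invoked through \Cref{prop:eqm}), and strict monotonicity in $n$ from Bernoulli's inequality, showing $\Phi(\hat{c},q,n)>\Phi(\hat{c},q,n+1)$ reduces to $(1-qF(\hat{c}))^n<(1+nqF(\hat{c}))^{-1}$; your geometric-series representation $\Phi=\frac{q}{n}\sum_{j=0}^{n-1}(1-qF(\hat{c}))^{j}$ delivers both facts at once, the first termwise and the second by the mean-of-a-decreasing-sequence identity $A_{n+1}-A_n=(t^n-A_n)/(n+1)<0$, which is a self-contained substitute for the paper's \Cref{lemma:bernoulli}. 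The $q$-derivative step is identical in both. Where your write-up genuinely adds value is the second half: the paper dismisses the comparative statics of $c^*$ in one sentence (``follow from the properties of $\Phi$, since $c^*$ is the fixed point of $V\Phi$''), whereas you spell out the single-crossing contradiction argument, and you correctly identify the one place where care is needed\textemdash the discrete parameter $n$, where implicit differentiation is unavailable and where the strict inequality $\Phi(\cdot,\cdot,n_2)<\Phi(\cdot,\cdot,n_1)$ may only be invoked at a threshold strictly above $\underline{c}$, which interiority (\Cref{assump:interior}) guarantees. Your flagging of the degenerate case $n=1$, where $\Phi\equiv q$ is constant in $\hat{c}$, is also a legitimate caveat that the paper's statement silently ignores (its own argument via $\Psi$ is vacuous there). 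The only microscopic blemish is your chain $A_n\geq t^{n-1}>t^n$, which fails at the boundary point $t=0$ (i.e. $q=1$, $F(\hat{c})=1$); the conclusion survives since $A_n\geq 1/n>0=t^n$ there, but the inequality as written should be adjusted.
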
 

The results are intuitive. If the prize $V$ is increased, agents have more incentive to search. Agents with higher cost will now search when they otherwise would not. The same is true for when $q$, i.e. the probability that the object is found conditioning on search, increases. Lastly, more agents intensify competition for the search, which lowers the probability that an agent wins the prize. \Cref{fig:example1} illustrates how $V\Phi$ changes with $V$, $q$, and $n$. Furthermore, \Cref{fig:example1} demonstrates the comparative statics of the equilibrium threshold $c^*(V,q,n)$, which is the fixed point of $V\Phi(\hat{c},q,n)$. Panel (a) of \Cref{fig:example1} shows that for $V' < V''$, $V\Phi$ as a function of $c$ shifts up with $V$, keeping $q$ and $n$ constant. Consequently, we have that $c^*(V') < c^*(V'')$. Panel (b) illustrates the case for $q' < q''$. Lastly, panel (c) illustrates that $V\Phi$ shifts down with $n$ and thus for $n' < n''$, we have $c^*(n'') < c^*(n')$.

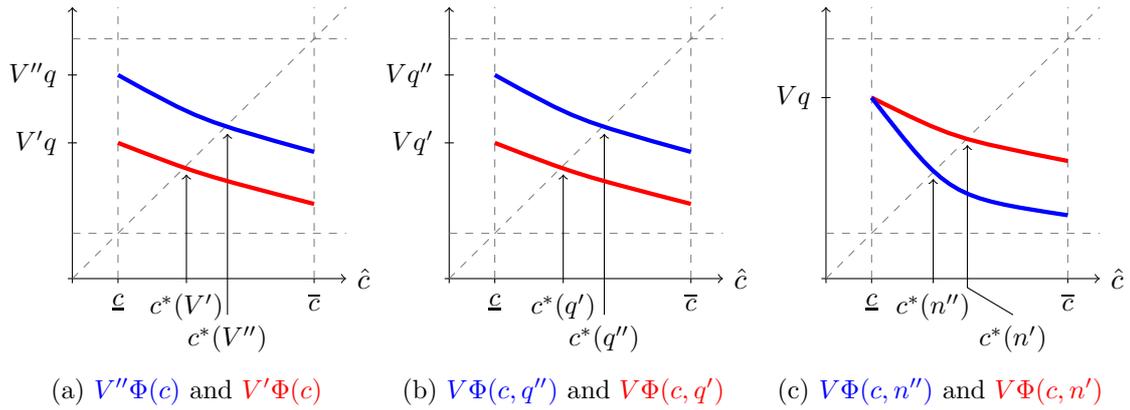
\begin{figure}[ht]
   \centering
        \begin{tikzpicture}[scale=0.6]

\begin{scope}
     \draw[->] (-0.1,0) -- (6,0) node[right]{\small{$\hat{c}$}}; 
     \draw[->] (0,-0.1) -- (0,6);

     \draw[dashed,color=gray] (0,0) -- (6,6);
     \draw[dashed,color=gray] (1,0) -- (1,6);
     \draw[dashed,color=gray] (5.3,0)  -- (5.3,6);
     \draw[dashed,color=gray] (0,1) -- (6,1);
     \draw[dashed,color=gray] (0,5.3) -- (6,5.3);

     \draw[->] (3.4,0) -- (3.4,3.2);
     \draw[->] (2.5,0) -- (2.5,2.3);

     \draw (1,-0.1) node[below]{\footnotesize{$\underline{c}$}} -- (1,0);
     \draw (5.3,-0.1) node[below]{\footnotesize{$\overline{c}$}} -- (5.3,0);
     \draw (3.4,0) -- (3.4,-0.8) node[below]{\footnotesize{$c^*(V'')$}};
     \draw (2.5,0) -- (2.5,-0.1) node[below]{\footnotesize{$c^*(V')$}};
     
     \draw (-0.1,4.5) node[left]{\footnotesize{$V''q$}} -- (0.1,4.5);

     \draw (-0.1,3) node[left]{\footnotesize{$V'q$}} -- (0.1,3);

     \draw[ultra thick,color=blue] (1,4.5) .. controls (2.8,3.5) .. (5.3,2.8) ;

     \draw[ultra thick,color=red] (1,3) .. controls (2.8,2.3) .. (5.3,1.65) ;

     \node at (2.6,-2.5) {\footnotesize{(a) \textcolor{blue}{$V''\Phi(c)$} and \textcolor{red}{$V'\Phi(c)$} } };
     
\end{scope}

\begin{scope}[xshift=235, yshift=0]
     \draw[->] (-0.1,0) -- (6,0) node[right]{\small{$\hat{c}$}}; 
     \draw[->] (0,-0.1) -- (0,6);

     \draw[dashed,color=gray] (0,0) -- (6,6);
     \draw[dashed,color=gray] (1,0) -- (1,6);
     \draw[dashed,color=gray] (5.3,0)  -- (5.3,6);
     \draw[dashed,color=gray] (0,1) -- (6,1);
     \draw[dashed,color=gray] (0,5.3) -- (6,5.3);

     \draw[->] (3.4,0) -- (3.4,3.2);
     \draw[->] (2.5,0) -- (2.5,2.3);

     \draw (1,-0.1) node[below]{\footnotesize{$\underline{c}$}} -- (1,0);
     \draw (5.3,-0.1) node[below]{\footnotesize{$\overline{c}$}} -- (5.3,0);
     \draw (3.4,0) -- (3.4,-0.8) node[below]{\footnotesize{$c^*(q'')$}};
     \draw (2.5,0) -- (2.5,-0.1) node[below]{\footnotesize{$c^*(q')$}};
     
     \draw (-0.1,4.5) node[left]{\footnotesize{$Vq''$}} -- (0.1,4.5);

     \draw (-0.1,3) node[left]{\footnotesize{$Vq'$}} -- (0.1,3);

     \draw[ultra thick,color=blue] (1,4.5) .. controls (2.8,3.5) .. (5.3,2.8) ;

     \draw[ultra thick,color=red] (1,3) .. controls (2.8,2.3) .. (5.3,1.65) ;

     \node at (2.6,-2.5) {\footnotesize{(b) \textcolor{blue}{$V\Phi(c,q'')$} and \textcolor{red}{$V\Phi(c,q')$} } };
     
\end{scope}

\begin{scope}[xshift=470, yshift=0]
     \draw[->] (-0.1,0) -- (6,0) node[right]{\small{$\hat{c}$}}; 
     \draw[->] (0,-0.1) -- (0,6);

     \draw[dashed,color=gray] (0,0) -- (6,6);
     \draw[dashed,color=gray] (1,0) -- (1,6);
     \draw[dashed,color=gray] (5.3,0)  -- (5.3,6);
     \draw[dashed,color=gray] (0,1) -- (6,1);
     \draw[dashed,color=gray] (0,5.3) -- (6,5.3);

     \draw[->] (3.1,0) -- (3.1,2.95);
     \draw[->] (2.35,0) -- (2.35,2.2);

     \draw (1,-0.1) node[below]{\footnotesize{$\underline{c}$}} -- (1,0);
     \draw (5.3,-0.1) node[below]{\footnotesize{$\overline{c}$}} -- (5.3,0);
     \draw (3.1,0) -- (3.1,-0.2) -- (4.1,-0.8) node[below]{\footnotesize{$c^*(n')$}};
     \draw (2.35,0) -- (2.35,-0.1) node[below]{\footnotesize{$c^*(n'')$}};
     
     \draw (-0.1,4) node[left]{\footnotesize{$Vq$}} -- (0.1,4);

     \draw[ultra thick,color=red] (1,4) .. controls (2.8,3.1) .. (5.3,2.6) ;

     \draw[ultra thick,color=blue] (1,4) .. controls (2.7,1.8) .. (5.3,1.4) ;

     \node at (2.6,-2.5) {\footnotesize{(c) \textcolor{blue}{$V\Phi(c,n'')$} and \textcolor{red}{$V\Phi(c,n')$} } };
     
\end{scope}

\end{tikzpicture}
    \caption{Comparative statics of $c^*(V,q,n)$.}
    \label{fig:example1}
\end{figure}

\subsection{Probability of Success}

We focus on the probability of success in equilibrium, $P(c^*(V,q,n),q,n) = 1-(1-qF(c^*(V,q,n)))^n$, which we shall denote by $P^*(V,q,n)$ for simplicity.\footnote{Again, to ease exposition, we suppress the arguments of $P^*$ that are kept fixed in the context of the analysis. For example, we write $c^*(n)$ and $P^*(n)$ for the equilibrium threshold and the probability of success in equilibrium, respectively, when there are $n$ agents, recognizing that $V$ and $q$ are fixed.} How does the equilibrium probability of success vary with the parameters of the model? We have the following result.

\begin{proposition} \label{prop:compstatprob} 
$P^*(V,q,n)$ increases with $V$ and $q$, and may increase or decrease with $n$.
\end{proposition}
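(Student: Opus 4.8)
The plan is to treat the three comparative statics separately, using the monotonicity of the equilibrium threshold from \Cref{prop:compstat} together with the closed form $P^*(V,q,n) = 1-(1-qF(c^*(V,q,n)))^n$. The key reduction is to note that $P^*$ depends on the parameters only through the single composite quantity $g \equiv qF(c^*)$, and that $P^*$ is strictly increasing in $g$ on $[0,1]$: since $dP^*/dg = n(1-g)^{n-1}\ge 0$ and $g = qF(c^*)\le q\le 1$, it suffices to track how $g$ moves with each parameter. Because the threshold is interior (\Cref{prop:eqm}), $F(c^*)\in(0,1)$, which delivers strict rather than weak monotonicity wherever needed.

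For $V$ and $q$ the argument is a direct sign computation. By \Cref{prop:compstat}, $c^*(V,q,n)$ is increasing in $V$, and $F$ is strictly increasing on its support, so with $q$ fixed the product $g=qF(c^*)$ increases with $V$; hence $P^*$ increases with $V$. For $q$, the parameter enters $g=qF(c^*)$ both directly and through $c^*$, but \Cref{prop:compstat} gives $c^*$ increasing in $q$, so $F(c^*)$ is nondecreasing in $q$, and multiplying by the strictly increasing factor $q$ (with $F(c^*)>0$) makes $g$ strictly increasing. The point worth emphasizing is that the direct and indirect effects of $q$ reinforce rather than oppose one another, so $P^*$ increases with $q$.

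For $n$ the two effects conflict, which is exactly why the claim is only that $P^*$ may increase or decrease. Raising $n$ raises the exponent in $(1-g)^n$, which on its own raises $P^*$ (more searchers), but by \Cref{prop:compstat} it also lowers $c^*$ and hence lowers $g=qF(c^*)$, which on its own lowers $P^*$ (each agent searches with smaller probability). Since the net sign is genuinely ambiguous, I would settle the statement by exhibiting two explicit instances, comparing $n=1$ with $n=2$ under the uniform cost distribution $F(c)=(c-\underline{c})/(\overline{c}-\underline{c})$. For $n=1$ one has $\Phi\equiv q$, so $c^*(1)=Vq$ and $P^*(1)=qF(Vq)$; for $n=2$ one has $\Phi=q\bigl(1-qF/2\bigr)$, which gives a linear equation for $c^*(2)$ and hence $P^*(2)$. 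Taking $\underline{c}=0$ (low-cost agents present) yields $P^*(2)>P^*(1)$, whereas taking $\underline{c}$ close to $\overline{c}$ with $Vq$ near $\overline{c}$ makes the threshold collapse toward $\underline{c}$ once the second agent enters, so that $P^*(2)<P^*(1)$.

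The main obstacle is precisely this $n$ case: the two monotone forces pull in opposite directions, so no single-signed argument is available and the claim must be established by construction rather than by differentiation. The remaining work is to choose the numerical parameters in the two uniform examples so that the drop in $c^*$ from $n=1$ to $n=2$ is large enough to flip the sign (decrease case) or mild enough to preserve it (increase case); this is routine verification, but it is the essential content of the $n$-part of the proposition.
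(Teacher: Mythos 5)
Your approach is essentially the paper's: for $V$ and $q$, the paper likewise composes the monotonicity of $P(\hat{c},q,n)$ in $\hat{c}$ and $q$ with the monotonicity of $c^*$ from \Cref{prop:compstat} (your reduction to the composite $g=qF(c^*)$ is the same argument, repackaged), and for $n$ the paper also resorts to explicit instances, citing its \Cref{ex:probn} and \Cref{ex:probn2} with the computed values in \Cref{table:example}.

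The one point needing repair is in your $n$-part, which you correctly identify as the essential content but leave as ``routine verification.'' Your unqualified claim that $\underline{c}=0$ yields $P^*(2)>P^*(1)$ is false without restricting $V$ and $q$: for $F\sim\mathrm{U}[0,1]$, $q=1$, $V=0.99$, one gets $c^*(1)=Vq=0.99$ and $P^*(1)=0.99$, while $c^*(2)=V/(1+V/2)\approx 0.662$ and $P^*(2)=1-(1-0.662)^2\approx 0.886<P^*(1)$ --- this is exactly the paper's ``high reward'' channel from \Cref{ex:probn2}, so the direction with $\underline{c}=0$ depends on parameters. Your construction does go through once parameters are pinned down. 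For the increase case take $F\sim\mathrm{U}[0,1]$, $q=1$, $V=1/2$: then $c^*(1)=1/2$, $P^*(1)=1/2$, and $c^*(2)$ solves $\hat{c}=\tfrac12(1-\hat{c}/2)$, giving $c^*(2)=2/5$ and $P^*(2)=1-(3/5)^2=16/25>1/2$. For the decrease case take $F\sim\mathrm{U}[0.9,1]$, $q=1$, $V=0.99$: then $c^*(1)=0.99$, $P^*(1)=F(0.99)=0.9$, while $c^*(2)$ solves $\hat{c}=0.99\left(1-\tfrac{10\hat{c}-9}{2}\right)$, giving $c^*(2)=5.445/5.95\approx 0.915$, $F(c^*(2))\approx 0.151$, and $P^*(2)\approx 0.280<0.9$. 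In all four instances \Cref{assump:interior} holds, so the equilibria are interior and \Cref{prop:eqm} applies; with these numbers inserted, your proof is complete and matches the paper's in structure.
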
 

That $P^*$ increases with $V$ and $q$ is straightforward. The comparative statics with respect to $n$, however, is more interesting. It turns out, rather surprisingly, that the probability of success may decrease or increase with the number of agents $n$. Intuition suggests that the probability of success should go up with the number of agents. However, as we have seen, more agents result in heightened competition, which lowers the participation threshold. That is, agents crowd out each others' individual incentives to search. Either force may dominate, depending on the specifications of the cost distribution and the parameters of the model. 

There are two possible channels through which the crowding-out effect can dominate when $n$ increases. The first channel operates through the cost distribution $F$ as it can amplify a decrease in $c^*(n)$. The second channel is direct via a sharp decrease in $c^*(n)$. This happens when $c^*(n)$ starts high, perhaps due to high rewards, so that each subsequent $c^*(n)$ drops sharply relative to the increase in $n$. In the following, we shed light on the conditions (cost distribution and degree of difficulty) that determine whether $P^*(n)$ increase or decrease in $n$. We start with examples that illustrate the two channels mentioned above.

The following examples illustrate these two channels.

\begin{table}[t]
\centering
\vspace{0.8em}
\begin{subtable}[ht]{0.45\textwidth}
\centering
\begin{tabular}{ |c|c|c| } 
 \hline
 $n$ & $c^*(n)$ & $P^*(n)$  \\ 
 \hline \hline
 2 & 0.9151 & 0.3106 \\ 
 \hline
 3 & 0.8951 & 0.2924 \\ 
 \hline
 4 & 0.8828 & 0.2917 \\ 
 \hline
 5 & 0.8739 & 0.2948 \\ 
 \hline
 6 & 0.8669 & 0.2989 \\ 
 \hline
\end{tabular}
\caption{$F(c) = c^{20}$, $q=1$, and $V=1$.}
\label{table:example1a}
\end{subtable}
\quad
\begin{subtable}[ht]{0.45\textwidth}
\centering
\begin{tabular}{ |c|c|c| } 
 \hline
 $n$ & $c^*(n)$ & $P^*(n)$  \\ 
 \hline \hline
 2 & 0.9998 & 0.9999 \\ 
 \hline
 3 & 0.8136 & 0.9935 \\ 
 \hline
 4 & 0.7042 & 0.9923 \\ 
 \hline
 5 & 0.6301 & 0.9931 \\ 
 \hline
 6 & 0.5755 & 0.9941 \\ 
 \hline
\end{tabular}
\caption{$F(c) = c$, $q=1$, and $V = 1.999$.}
\label{table:example1b}
\end{subtable}
    \caption{$c^*(n)$ and $P^*(n)$ for \Cref{ex:probn} and \Cref{ex:probn2}.}
    \label{table:example}
\end{table}

\begin{example} \label{ex:probn}
Consider $F(c) = c^{20}$ for $0 \leq c \leq 1$, and let $q = 1$ and $V = 1$. \Cref{table:example1a} shows the numerical values of $c^*(n)$ and $P^*(n)$. The equilibrium thresholds $c^*(n)$ are decreasing in $n$. For $P^*(n)$, we see it is decreasing for $n = 2$ to $n = 4$ and increasing for $n \geq 5$ onward.
\qed
\end{example}

Intuitively, \Cref{ex:probn} demonstrates distribution functions for which most agents are expected to have costs close to 1, and only a few highly talented agents are expected in the pool.  Then, enlarging the pool of agents may be detrimental because as the threshold declines, the expected crowd that participates shrinks considerably, making it less likely to find the object. \Cref{ex:probn2} considers a uniform cost distribution with high rewards. Since $V$ is high, the threshold starts near 1 and declines sharply relative to the direct effect of having more agents.

\begin{example} \label{ex:probn2}
Consider $F(c) = c$ for $0 \leq c \leq 1$, and let $q = 1$ and $V = 1.999$. \Cref{table:example1b} shows the numerical values $c^*(n)$ and $P^*(n)$. The equilibrium thresholds $c^*(n)$ are decreasing in $n$. For $P^*(n)$, we see it is decreasing for $n = 2$ to $n = 4$ and increasing for $n \geq 5$ onward.
\qed
\end{example}

An implication of our analysis is that the principal should pay close attention to the number of invited agents and the distribution of abilities to trade off the crowding-out effect.

To further investigate the forces at play, we treat $n$ as a continuous variable and calculate\footnote{A detailed derivation is provided in the proof of \Cref{prop:dpdn}.}
\begin{equation} \label{eq:dpdn}
    \frac{\mathrm{d}P^*(n)}{\mathrm{d} n} = (1-qF(c^*(n)))^n \left[ \frac{nqf(c^*(n))}{1-qF(c^*(n))} \frac{\mathrm{d}c^*(n)}{\mathrm{d}n} - \ln{(1-qF(c^*(n)))}\right].
\end{equation}
From~\eqref{eq:dpdn}, we see that $\mathrm{d}P^*(n)/\mathrm{d}n \geq 0$ if and only if the magnitude of $\mathrm{d}c^*(n)/\mathrm{d}n$, which is negative by \Cref{prop:compstat}, is not too large. Using the equilibrium condition~\eqref{eq:symeqm}, we can derive the following condition.

\begin{proposition} \label{prop:dpdn}
    $\mathrm{d}P^*(n)/\mathrm{d}n \geq 0$ if and only if
\begin{equation} \label{eq:Pncondition}
\frac{(1-qF(c^*(n)))\ln{(1-qF(c^*(n)))}}{-qF(c^*(n))} \geq  \frac{1}{1 + \frac{F(c^*(n))}{c^*(n)f(c^*(n))}}.
\end{equation}
\end{proposition}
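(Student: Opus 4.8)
The plan is to reduce the sign condition on $\mathrm{d}P^*/\mathrm{d}n$ in~\eqref{eq:dpdn} to a single inequality in the equilibrium threshold, and then to evaluate $\mathrm{d}c^*/\mathrm{d}n$ by implicitly differentiating the fixed-point equation~\eqref{eq:symeqm}. Throughout I write $c^*=c^*(n)$, $F=F(c^*)$, $f=f(c^*)$, and $P^*=1-(1-qF)^n$, and treat $n$ as continuous. Since $(1-qF)^n>0$, the sign of $\mathrm{d}P^*/\mathrm{d}n$ equals the sign of the bracket in~\eqref{eq:dpdn}; and because the coefficient $nqf/(1-qF)$ is strictly positive, $\mathrm{d}P^*/\mathrm{d}n\ge 0$ is equivalent to
\[
\frac{\mathrm{d}c^*}{\mathrm{d}n}\;\ge\;\frac{(1-qF)\ln(1-qF)}{nqf}.
\]
Call this $(\star)$; both sides are negative. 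Everything then reduces to an explicit expression for $\mathrm{d}c^*/\mathrm{d}n$.

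Next I would differentiate~\eqref{eq:symeqm}, $c^*=V\Phi(c^*,q,n)$. Writing $\Phi_{\hat c}$ and $\Phi_n$ for the partials of $\Phi$ from~\eqref{eq:phi}, the implicit function theorem gives $(1-V\Phi_{\hat c})\,\mathrm{d}c^*/\mathrm{d}n=V\Phi_n$. The denominator $1-V\Phi_{\hat c}$ is strictly positive: by \Cref{prop:compstat}, $\Phi$ is decreasing in $\hat c$, so $\Phi_{\hat c}<0$ and hence $1-V\Phi_{\hat c}>1$. I would then compute the two partials directly from~\eqref{eq:phi} and use the equilibrium identity $nc^*F=VP^*$ (which is just~\eqref{eq:symeqm} rewritten) to eliminate $V$, obtaining
\[
V\Phi_{\hat c}=\frac{nqfc^*(1-P^*)}{(1-qF)P^*}-\frac{c^*f}{F},\qquad V\Phi_n=-\frac{c^*(1-P^*)}{P^*}\ln(1-qF)-\frac{c^*}{n}.
\]

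Because $1-V\Phi_{\hat c}>0$, I can multiply $(\star)$ through by it without reversing the inequality, turning $(\star)$ into $V\Phi_n\ge \frac{(1-qF)\ln(1-qF)}{nqf}\,(1-V\Phi_{\hat c})$. Substituting the two displayed expressions, the key simplification is that the term $-\frac{c^*(1-P^*)}{P^*}\ln(1-qF)$ appears on both sides and cancels, collapsing the inequality to
\[
-\frac{c^*}{n}\;\ge\;\frac{(1-qF)\ln(1-qF)}{nqf}\left(1+\frac{c^*f}{F}\right).
\]
Multiplying by $n$ and dividing by $1+c^*f/F>0$ gives $-\frac{c^*F}{c^*f+F}\ge \frac{(1-qF)\ln(1-qF)}{qf}$; multiplying by $qf>0$ and then dividing by $-qF<0$ (which reverses the inequality one last time) yields exactly~\eqref{eq:Pncondition}, since $\frac{c^*f}{c^*f+F}=\frac{1}{1+F/(c^*f)}$.

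The main obstacle is the bookkeeping in the second step: $F$ depends on $n$ through $c^*$, so differentiating $(1-qF)^n$ in $\Phi_n$ versus $\Phi_{\hat c}$ must be done with $\hat c$ held fixed for the partials (the dependence through $c^*$ is carried by the implicit function theorem), and the two partials must be combined carefully. The other delicate point is the sign tracking — verifying $1-V\Phi_{\hat c}>0$ so no inequality flips prematurely, and reversing the inequality only at the two multiplications/divisions by negative quantities. Once the cancellation of the $\ln(1-qF)$ term is spotted, the remaining manipulation is elementary.
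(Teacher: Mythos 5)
Your proof is correct; I checked the two substituted expressions (using $V = nc^*F/P^*$ and $(1-qF)^{n-1} = (1-P^*)/(1-qF)$), the cancellation of the $\ln(1-qF)$ terms, and the sign bookkeeping, and every step is a genuine equivalence, so the ``if and only if'' survives. Your route shares the paper's skeleton\textemdash combine \eqref{eq:dpdn} with implicit differentiation of the equilibrium condition\textemdash but the execution differs in two respects. First, the paper differentiates the equilibrium condition in the form $nc_nF(c_n)=VP_n$ and solves the resulting linear equation for $\mathrm{d}P_n/\mathrm{d}n$ itself, obtaining a closed-form ratio whose numerator yields \eqref{eq:Pncondition}; you instead solve for $\mathrm{d}c^*/\mathrm{d}n$ by the implicit function theorem applied to $c^*=V\Phi(c^*,q,n)$ and feed it into the sign condition extracted from \eqref{eq:dpdn}. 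Second, and more substantively, the paper must sign the denominator of its closed form, which it does by hand via Bernoulli's inequality (showing $1-(1-qF)^{n-1}(1+(n-1)qF)\geq 0$), whereas your denominator $1-V\Phi_{\hat{c}}$ is positive immediately because $\Phi$ is strictly decreasing in $\hat{c}$ by \Cref{prop:compstat}. These two denominators in fact coincide up to the positive factor $F/(F+c^*f)$, so the arguments are mathematically equivalent, but yours gets the sign ``for free'' from an already-established monotonicity result rather than re-deriving it\textemdash a genuine simplification. One presentational remark: the paper treats the derivation of \eqref{eq:dpdn} itself (the total derivative of $(1-qF(c_n))^n$, where $n$ enters both as the exponent and through $c_n$) as part of this proof, while you take the displayed equation as given; since it appears in the main text this is harmless, but a fully self-contained write-up should include that one-line logarithmic-differentiation step.
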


\Cref{prop:dpdn} gives a strikingly concise condition to help discern the behavior of $P^*(n)$ as it depends only on the primitives of the model and \textit{not} on how the equilibrium threshold $c^*(n)$ changes. Importantly, the condition can be used to identify the range(s) of $n$ where the probability of success declines. Indeed, one can verify that~\eqref{eq:Pncondition} does not hold for the appropriate ranges in \Cref{ex:probn} and \Cref{ex:probn2}.

Furthermore, \Cref{prop:dpdn} suggests that in fact an important parameter that determines whether or not there is non-monotonicity in $P^*(n)$ is the difficulty $q$. Intuitively, if the object is deemed to be difficult to find, that is, if it has a low $q$, then the incentive to search is low and an increase in $n$ will not generate a large increase in competition. The opposite happens when the object is easy to find. The next proposition formalizes this intuition.

\begin{proposition} \label{prop:qdagger}
    Suppose $cf(c) < \infty$, then there exists $q^\dagger \in (0,1]$ such that for all $q < q^\dagger$, $P^*(n)$ is increasing for all $n$ and all $V$. Furthermore, if $F(c) = c^\alpha$ for some $\alpha > 0$ on $[0,1]$, then there exists $q^\dagger_\alpha$ such that if $q < q^\dagger_\alpha$, then $P^*(n)$ is increasing for all $n$ and $V$, while if $q > q^\dagger_\alpha$, then there exists $V$ such that $P^*(n)$ is decreasing for some $n$.
\end{proposition}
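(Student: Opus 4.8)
The plan is to collapse the two-sided condition in \Cref{prop:dpdn} into a one-dimensional inequality in the single variable $x \equiv qF(c^*(n))$. Setting $g(x) \equiv \frac{(1-x)\ln(1-x)}{-x}$, the left-hand side of~\eqref{eq:Pncondition} is precisely $g(x)$, while the right-hand side equals $\frac{1}{1+\rho}$ with $\rho \equiv F(c^*(n))/\big(c^*(n)f(c^*(n))\big)$. Thus $\mathrm{d}P^*(n)/\mathrm{d}n \ge 0$ is equivalent to $g(x) \ge \frac{1}{1+\rho}$, i.e. to $\frac{1-g(x)}{g(x)} \le \rho$. First I would record the elementary properties of $g$ that drive everything: differentiating gives $g'(x) = (\ln(1-x)+x)/x^2 < 0$ on $(0,1)$, so $g$ is a continuous strictly decreasing bijection from $(0,1)$ onto $(0,1)$ with $g(0^+)=1$ and $g(1^-)=0$; moreover the expansion $1-g(x) = x/2 + o(x)$ yields a linear bound $\frac{1-g(x)}{g(x)} \le x$ on some interval $(0,x_0]$.

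For the first (general) claim, observe that $x = qF(c^*) \le q$, so whenever $q \le x_0$ the linear bound applies and it suffices to have $x \le \rho$, that is $qF(c^*) \le F(c^*)/\big(c^*f(c^*)\big)$, i.e. $q\,c^*f(c^*) \le 1$. Since $cf(c) \le M \equiv \sup_c cf(c) < \infty$ by hypothesis, the choice $q^\dagger = \min\{x_0,\,1/M\}$ forces this inequality for \emph{every} $c^*\in(\underline c,\overline c)$. The essential feature is that the constant $M$ bounds the estimate uniformly in $c^*$, hence across all equilibrium thresholds that can arise as $n$ and $V$ vary; consequently $q<q^\dagger$ gives $\mathrm{d}P^*(n)/\mathrm{d}n > 0$ throughout, so $P^*(n)$ is increasing.

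For the power case $F(c)=c^\alpha$ the ratio $\rho = c^\alpha/\big(c\cdot\alpha c^{\alpha-1}\big) = 1/\alpha$ is constant, so the right-hand side collapses to the constant $\frac{\alpha}{\alpha+1}$ and the condition reads simply $g(x) \ge \frac{\alpha}{\alpha+1}$ with $x = q\,(c^*)^\alpha$. By monotonicity of $g$ there is a unique $x^*_\alpha$ with $g(x^*_\alpha)=\frac{\alpha}{\alpha+1}$; set $q^\dagger_\alpha = x^*_\alpha$. If $q<q^\dagger_\alpha$ then $x = q(c^*)^\alpha < q < x^*_\alpha$ (as $c^*<\overline c=1$), whence $g(x)>\frac{\alpha}{\alpha+1}$ and $P^*(n)$ is increasing for all $n,V$. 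For the converse, fix any $n$ and push $V$ toward the upper admissibility boundary $\bar V \equiv n/(1-(1-q)^n)$ of \Cref{assump:interior}: at $V=\bar V$ the fixed-point equation~\eqref{eq:symeqm} is solved by $c^*=1$, so by $c^*$ being increasing in $V$ (\Cref{prop:compstat}) and continuity, $c^*(n)\uparrow 1$ as $V\uparrow\bar V$. Then $x = q(c^*(n))^\alpha \uparrow q > x^*_\alpha$, so for $V$ sufficiently close to $\bar V$ but below it (keeping the equilibrium interior) we get $x>x^*_\alpha$, hence $g(x)<\frac{\alpha}{\alpha+1}$ and $\mathrm{d}P^*(n)/\mathrm{d}n<0$, i.e. $P^*(n)$ is decreasing at that $n$.

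The step I expect to be the main obstacle is the uniform estimate in the general claim: converting the pointwise limit $g(qF(c^*))\to 1$ into a bound holding simultaneously for all $c^*$, $n$, and $V$. This is exactly where $cf(c)<\infty$ is indispensable, since it controls the elasticity-type quantity $c^*f(c^*)/F(c^*)=1/\rho$ by the single constant $M$. The supporting linear estimate $\frac{1-g(x)}{g(x)}\le x$ (equivalently $e^u-1-u \le u(1-e^{-u})$ after the substitution $u=-\ln(1-x)$) is routine but must be verified on the relevant range of $x$; I would confirm it near $x=0$ via the expansion and then on $(0,x_0]$ by a sign check of the auxiliary function $\phi(u)=e^u-1-2u+ue^{-u}$.
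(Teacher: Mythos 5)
Your proposal is correct and follows essentially the same route as the paper's proof: both start from the condition of \Cref{prop:dpdn}, obtain the general-$F$ threshold by bounding the logarithmic term for small $qF(c^*)$ and invoking $\sup_c cf(c) < \infty$ for uniformity across all $n$ and $V$, and for $F(c)=c^\alpha$ define $q^\dagger_\alpha$ as the unique crossing point (your equation $g(x^*_\alpha)=\tfrac{\alpha}{\alpha+1}$ is exactly the paper's $l(\hat{y})=r(\hat{y})$) and then push $V$ toward the interiority bound so that $c^*(n)\to 1$ and the condition fails. The only differences are cosmetic: the paper uses the global inequality $-\ln(1-y)\ge y$, which gives $g(x)\ge 1-x$ everywhere and the closed form $q^\dagger = 1/\bigl(\max_c cf(c)+1\bigr)$, whereas you use a local linear bound on $(1-g(x))/g(x)$ valid only on $(0,x_0]$ (hence your $q^\dagger=\min\{x_0,1/M\}$), and the paper proves uniqueness of the crossing via convexity of $(1-y)\ln(1-y)$ rather than strict monotonicity of $g$.
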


A remark is in order at this point. Since we are treating $n$ as a positive real number, it may be the case that even though $P^*(n)$ is decreasing for some values of $n$, it is still increasing for $n \in \{2,3,\dots\}$. In practice, \Cref{prop:qdagger} implies that the principal can pay less attention to the crowding-out effect if the object is difficult to find or if the prize is not large.


\section{Large Contests} \label{sec:largecontest}

In this section, we keep all parameters fixed and examine the asymptotic behavior of the equilibrium. Throughout the section, we denote for ease of notation the equilibrium threshold and the equilibrium success probability when $n$ agents are invited to search by $c_n = c^*(n)$ and $P_n = P^*(n)$, respectively. Our first result asserts that the equilibrium threshold converges to $\underline{c}$.

\begin{proposition} \label{prop:clargen}
    For any $\underline{c} \geq 0$, we have $c_n \rightarrow \underline{c}$.
\end{proposition}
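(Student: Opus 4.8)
The plan is to combine the monotonicity of the equilibrium threshold in $n$ (\Cref{prop:compstat}) with a one-line bound extracted from the fixed-point condition~\eqref{eq:symeqm}, and then argue by contradiction. First I would record that $c_n = c^*(n)$ is decreasing in $n$ by \Cref{prop:compstat} and lies in $(\underline{c},\overline{c})$, hence is bounded below by $\underline{c}$. A bounded monotone sequence converges, so $c_n \downarrow c_\infty$ for some limit $c_\infty \geq \underline{c}$. The entire task is then to rule out $c_\infty > \underline{c}$.

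The core step is the contradiction. Suppose $c_\infty > \underline{c}$. Since $F$ is continuous with support $[\underline{c},\overline{c}]$, it is strictly increasing there, so $F(c_\infty) > F(\underline{c}) = 0$; moreover $c_n \geq c_\infty$ gives $F(c_n) \geq F(c_\infty) > 0$ for every $n$. Inserting $\Phi$ from~\eqref{eq:phi} into the equilibrium condition and using the trivial bound $1 - (1-qF(c_n))^n \leq 1$ in the numerator yields
\[
c_n = V\,\frac{1 - (1 - qF(c_n))^n}{n\,F(c_n)} \leq \frac{V}{n\,F(c_\infty)} \longrightarrow 0 \quad (n \to \infty).
\]
Hence $c_\infty = \lim_n c_n \leq 0 \leq \underline{c}$, contradicting $c_\infty > \underline{c} \geq 0$. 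Therefore $c_\infty = \underline{c}$, which is the claim. Note this single case analysis covers both $\underline{c} = 0$ and $\underline{c} > 0$, since in either situation $c_\infty > \underline{c} \geq 0$ is incompatible with $c_\infty \leq 0$.

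The only delicate point — and hence the step I would flag as the main obstacle — is guaranteeing that the denominator $n\,F(c_n)$ diverges, i.e. that $F(c_n)$ stays bounded away from $0$. This is exactly where the hypothesis $c_\infty > \underline{c}$, together with strict monotonicity of $F$ on its support, is used to secure $F(c_n) \geq F(c_\infty) > 0$. Once that is in hand, the numerator is harmlessly bounded by $1$ and the conclusion is immediate, so no finer asymptotics of $(1-qF(c_n))^n$ are required. (One could equivalently observe that $1 - qF(c_n) \leq 1 - qF(c_\infty) < 1$ forces $(1-qF(c_n))^n \to 0$ and compute the limit directly, but bounding the numerator by $1$ sidesteps even this.)
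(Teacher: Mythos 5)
Your proof is correct and is essentially the same argument as the paper's: both proceed by contradiction, use the monotonicity of $c_n$ from \Cref{prop:compstat} to bound $F(c_n)$ away from zero, and then exploit the equilibrium condition $c_n\, n F(c_n) = V P_n$ together with $P_n \leq 1$ to force a contradiction (the paper sends the left-hand side to infinity against a bounded right-hand side; you equivalently send the right-hand side of $c_n = V P_n/(nF(c_n))$ to zero against a threshold bounded away from zero). The two versions are the same bound rearranged, so no substantive difference.
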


In other words, as the number of agents grows, individual incentive to search decreases, and in the limit, only the agent with the lowest cost searches.\footnote{\citet{archak2009} note a similar result in the context of an all-pay auction.} The next question of interest is the behavior of $P_n$. We show that although the individual incentive to search decreases, the aggregate incentive goes up as both the expected number of agents who search and the probability of success have large limits.

\begin{proposition} \label{prop:largen}
The following holds:
\begin{enumerate}
    \item[(i)] If $\underline{c} = 0$, then $nF(c_n) \rightarrow \infty$. If $\underline{c} > 0$, then $nF(c_n) \rightarrow \kappa(\underline{c})$, where the constant $\kappa \equiv \kappa(\underline{c})$ is the unique solution to $\underline{c} = V\frac{1-e^{-q\kappa}}{\kappa}$.

    \item[(ii)] If $\underline{c} = 0$, then $P_n \rightarrow 1$. If $\underline{c} > 0$, then $P_n \rightarrow P_\infty \equiv 1 - e^{-q \kappa(\underline{c})}$.
    \end{enumerate}
\end{proposition}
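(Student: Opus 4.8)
The plan is to work entirely from the scalar equilibrium condition \eqref{eq:symeqm}, which with $\Phi$ as in \eqref{eq:phi} reads
\[
c_n = V\,\frac{1-(1-qF(c_n))^n}{nF(c_n)}.
\]
The natural quantity to track is the \emph{expected number of searchers} $x_n \equiv nF(c_n)$, so that $1-qF(c_n)=1-qx_n/n$ and the condition becomes $c_n = V\bigl(1-(1-qx_n/n)^n\bigr)/x_n$. By \Cref{prop:clargen} we already know $c_n\to\underline c$, and since $F$ is continuous with $F(\underline c)=0$ this gives $F(c_n)\to 0$. The one asymptotic fact I would isolate at the outset is: whenever $x_n$ stays in a bounded set, the neglected second-order term $nF(c_n)^2 = x_n F(c_n)\to 0$, so that along any subsequence with $x_n\to L\in[0,\infty)$ one has $(1-qF(c_n))^{n}\to e^{-qL}$ and hence the right-hand side converges to $h(L)\equiv V(1-e^{-qL})/L$ (read as $Vq$ at $L=0$). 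I would then record that $h$ is continuous and strictly decreasing on $[0,\infty)$ with $h(0)=Vq$ and $h(L)\to 0$ as $L\to\infty$; combined with \Cref{assump:interior}, which supplies $\underline c<qV=h(0)$, this guarantees a unique $\kappa>0$ solving $\underline c=h(\kappa)$.

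For part (i) with $\underline c=0$, I would argue by contradiction: if $x_n$ had a bounded subsequence, pass to a further subsequence with $x_{n_k}\to L\in[0,\infty)$; the isolated asymptotic forces $c_{n_k}\to h(L)>0$, contradicting $c_n\to\underline c=0$. Hence $x_n\to\infty$. For $\underline c>0$, I would first show $x_n$ is bounded (a subsequence with $x_{n_k}\to\infty$ makes the right-hand side $V(1-(1-qF(c_{n_k}))^{n_k})/x_{n_k}\le V/x_{n_k}\to 0$, contradicting $c_n\to\underline c>0$) and bounded away from $0$ (a subsequence with $x_{n_k}\to 0$ drives the right-hand side to $h(0)=Vq$, contradicting $\underline c<qV$). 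With $x_n$ confined to a compact subinterval of $(0,\infty)$, every convergent subsequence $x_{n_k}\to L$ must satisfy $\underline c=h(L)$, so $L=\kappa$ by uniqueness; since every subsequential limit equals $\kappa$ and the sequence is bounded, the whole sequence converges, giving $nF(c_n)\to\kappa$.

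Part (ii) then follows from part (i) and the identity $P_n=1-(1-qF(c_n))^n$. When $\underline c=0$ I would avoid any delicate expansion and simply use $\ln(1-t)\le -t$ to bound $(1-qF(c_n))^n\le e^{-qnF(c_n)}=e^{-qx_n}\to 0$, whence $P_n\to 1$. When $\underline c>0$, the asymptotic already established along $x_n\to\kappa$ yields $(1-qF(c_n))^n\to e^{-q\kappa}$, so $P_n\to 1-e^{-q\kappa}=P_\infty$.

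The main obstacle is the $\underline c>0$ case of part (i): the estimate $(1-qF(c_n))^n\to e^{-qL}$ is only legitimate once $x_n$ is known to be bounded, since otherwise the term $nF(c_n)^2$ need not vanish. The boundedness and boundedness-away-from-zero of $x_n$ must therefore be secured \emph{before} invoking the limit, and it is precisely here that the inequality $\underline c<qV$ from \Cref{assump:interior} does double duty, both guaranteeing that $\kappa$ exists and excluding the degenerate limit $x_n\to 0$.
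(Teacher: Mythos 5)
Your proposal is correct and follows essentially the same route as the paper: both arguments rest on rewriting the equilibrium condition in terms of $nF(c_n)$, showing this quantity is bounded (and bounded away from zero when $\underline{c}>0$) using the equilibrium condition together with \Cref{assump:interior}, invoking the approximation $(1-qF(c_n))^n = e^{-qnF(c_n)}+o(1)$ (valid because $nF(c_n)^2\to 0$ on bounded ranges), and pinning down the limit $\kappa$ via strict monotonicity of $x\mapsto(1-e^{-x})/x$. The only differences are in execution\textemdash you justify the exponential approximation by a logarithmic expansion where the paper compares binomial and exponential series term by term, and you extract the limit by a subsequence-compactness argument where the paper composes with the continuous inverse of $\zeta(x)=(1-e^{-x})/x$\textemdash neither of which changes the substance of the proof.
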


The constant $\kappa$ is the limiting expected number of agents who search. It is inversely proportional to $\underline{c}$ and grows unboundedly as $\underline{c}$ becomes smaller. \Cref{prop:largen} has important implications for the success of the search. Plausibly, $\underline{c} >0$, as even high-ability agents have to exert effort to find the object. Then, not even inviting an unlimited crowd will guarantee that the search is successful as $P_\infty < 1$. The reason is that\textemdash given the expected intensive competition\textemdash only comparatively few agents will decide to search and the object will not be found with some probability. Yet, if a large group of agents could be invited that are partially intrinsically motivated or motivated by reputational concerns, cases with $\underline{c} = 0$ may become possible as well as the prospect that the object is found with certainty.

Next, we consider the rates of convergence. Recall that $g(n) \in \Theta(h(n))$ denotes that $g$ is asymptotically bounded above and below by $h$.\footnote{Formally, there exist constants $k_1, k_2 > 0$ and $n'$ such that for all $n > n'$, $k_1 h(n) \leq g(n) \leq k_2 h(n)$.} We have the following result with an interesting implication.

\begin{proposition} \label{prop:conv}
The following holds:
\begin{itemize}
    \item[(i)] $c_nF(c_n) \in \Theta(n^{-1})$,
    \item[(ii)] If $\underline{c} > 0$, then $F(c_n) \in \Theta(n^{-1})$.
\end{itemize}
\end{proposition}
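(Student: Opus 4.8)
The plan is to reduce both parts to the two asymptotic results already established in \Cref{prop:largen}, by means of a single algebraic rearrangement of the equilibrium condition. Evaluating \eqref{eq:symeqm} at the equilibrium threshold $c_n = c^*(n)$, using the closed form of $\Phi$ from \eqref{eq:phi} (valid since $c_n > \underline{c}$), and multiplying through by $nF(c_n)$, I obtain the identity
\begin{equation*}
n\, c_n F(c_n) = V\bigl(1 - (1-qF(c_n))^n\bigr) = V P_n,
\end{equation*}
where $P_n$ is the equilibrium probability of success. This is the key observation: it expresses the quantity $c_n F(c_n)$ from part (i) directly in terms of $P_n$, whose limit is already known.

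For part (i), I would invoke \Cref{prop:largen}(ii). If $\underline{c}=0$ then $P_n \to 1$, and if $\underline{c} > 0$ then $P_n \to P_\infty = 1 - e^{-q\kappa(\underline{c})} \in (0,1)$; in either case the limit is a strictly positive constant. Since $0 < P_n \le 1$ for every finite $n$ (because $c_n \in (\underline{c},\overline{c})$ forces $0 < qF(c_n) < 1$), convergence to a positive limit yields an index $n'$ and constants $0 < k_1 \le k_2$ with $k_1 \le V P_n \le k_2$ for all $n > n'$. The identity above then gives $k_1 \le n\, c_n F(c_n) \le k_2$, that is, $c_n F(c_n) \in \Theta(n^{-1})$, and this holds uniformly across the two regimes $\underline{c}=0$ and $\underline{c}>0$.

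Part (ii) is even more immediate and does not require the identity: when $\underline{c} > 0$, \Cref{prop:largen}(i) states that $nF(c_n) \to \kappa(\underline{c})$, where $\kappa$ is the unique (hence finite) solution of $\underline{c} = V(1-e^{-q\kappa})/\kappa$ and is strictly positive because $0 < \underline{c} < qV$ by \Cref{assump:interior}. Convergence of $nF(c_n)$ to a positive, finite constant again bounds it between positive constants for all large $n$, so $F(c_n) \in \Theta(n^{-1})$.

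The argument presents no real obstacle once the rearrangement $n c_n F(c_n) = V P_n$ is spotted; the only point requiring a moment's care is the lower bound in part (i), where I must confirm that $P_n$ stays bounded away from zero. This is guaranteed not by uniform positivity over all $n$ — which I do not need — but simply by convergence to a strictly positive limit, so that only the tail $n > n'$ matters, exactly as in the definition of $\Theta$. Both parts thus follow as short corollaries of \Cref{prop:largen}.
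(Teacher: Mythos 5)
Your proof is correct and follows essentially the same route as the paper: both rest on the rearranged equilibrium identity $n c_n F(c_n) = V P_n$ together with the limits from \Cref{prop:largen}. The only cosmetic difference is in part (ii), where you invoke \Cref{prop:largen}(i) ($nF(c_n) \to \kappa(\underline{c})$) directly, while the paper instead writes $F(c_n) = V\frac{P_n/c_n}{n}$ and uses $c_n \to \underline{c} > 0$; the two are equivalent one-line corollaries of the same proposition.
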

A corollary of \Cref{prop:conv} is that there are two possible speeds of convergence depending on whether or not $\underline{c} > 0$. First, for instance, if $F(c) = c^\alpha$ on $[0,1]$, $\alpha > 0$, then we have $c_n \in \Theta(n^{-\frac{1}{1+\alpha}})$. Second, if $F(c) = \left(\frac{c-\underline{c}}{\overline{c}-\underline{c}}\right)^\alpha$ on $[\underline{c},\overline{c}]$, $\alpha > 0$ and $\underline{c} > 0$, then $c_n - \underline{c} \in \Theta(n^{-\frac{1}{\alpha}})$. In particular, perhaps surprisingly, there is a discontinuity in the rate of convergence with respect to $\underline{c}$.\footnote{As discussed in \Cref{sec:discuss}, the discontinuity is ``from both sides'' since if $\underline{c} < 0$, then $c_n \in \Theta(n^{-1})$.} For example, when $\alpha = 1$ in the above cases, $c_n$ converges to 0 at the rate $n^{-\frac{1}{2}}$ for the cost distribution $\mathrm{U}[0,1]$, while $c_n$ converges to $\underline{c}$ at the rate $n^{-1}$ for the distribution $\mathrm{U}[\underline{c},\overline{c}]$, regardless of how small $\underline{c}$ is!

We now investigate the tail behavior of $P_n$. In the previous section, we have shown that the probability of success may increase or decrease with the number of agents. In both examples, however, we see that $P_n$ eventually increases for large enough $n$. This is a general property as we now explore. To aid the result, we introduce an additional assumption on the cost distribution.

\begin{assumption} \label{assump:costdistribution}
    $\liminf_{c \rightarrow \underline{c}^+} \frac{F(c)}{cf(c)} = \delta$, for some $\delta > 0$.
\end{assumption}

We then have the following result. 

\begin{proposition} \label{prop:largePn}
Suppose $F$ satisfies \Cref{assump:costdistribution}. Then there exists $N$ such that for all $n > N$, $P_n$ is increasing.
\end{proposition}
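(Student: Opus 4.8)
The plan is to apply the exact characterization in \Cref{prop:dpdn} and pass to the limit $n \to \infty$ in the inequality~\eqref{eq:Pncondition}. The driving fact is that the equilibrium threshold collapses as the crowd grows: by \Cref{prop:clargen} we have $c_n \to \underline{c}$, and since $F$ is continuous with $F(\underline{c}) = 0$, this yields $qF(c_n) \to 0^+$. I would therefore set $x_n := qF(c_n)$ and examine the two sides of~\eqref{eq:Pncondition} separately as $x_n \to 0^+$.

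For the left-hand side, write it as $h(x_n)$ with $h(x) = \frac{(1-x)\ln(1-x)}{-x}$. A one-line computation---either a Taylor expansion $\ln(1-x) = -x - x^2/2 - \cdots$ or a single application of L'Hopital---gives $h(x) \to 1$ as $x \to 0^+$, so the left-hand side of~\eqref{eq:Pncondition} converges to $1$. For the right-hand side, write it as $1/(1+r_n)$ with $r_n := F(c_n)/\bigl(c_n f(c_n)\bigr)$, which is well-defined and positive because $c_n$ is interior (so $c_n > 0$ and $f(c_n) > 0$). Since $c_n \to \underline{c}^+$, \Cref{assump:costdistribution} gives $\liminf_n r_n \geq \delta > 0$, and hence $\limsup_n 1/(1+r_n) \leq 1/(1+\delta) < 1$.

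Combining the two estimates, the left-hand side of~\eqref{eq:Pncondition} tends to $1$ while the right-hand side is eventually bounded above by $1/(1+\delta) < 1$. Thus there is an $N$ such that~\eqref{eq:Pncondition} holds strictly for every $n > N$, and \Cref{prop:dpdn} then gives $\mathrm{d}P^*(n)/\mathrm{d}n > 0$ on that range. Because $n$ is treated as a continuous variable, a positive derivative for all $n > N$ yields that $P_n$ is increasing, including on the integers.

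The step I expect to require the most care is the handling of $\liminf$. Since $r_n$ need not converge---\Cref{assump:costdistribution} controls only its lower limit---the argument must be phrased through $\liminf$/$\limsup$ rather than ordinary limits: I would choose $N$ so that simultaneously the left-hand side exceeds $1 - \eta$ and the right-hand side falls below $1/(1+\delta/2)$ for a fixed small $\eta$ with $1 - \eta > 1/(1+\delta/2)$. The remaining points---that $F(\underline{c}) = 0$ from continuity and the definition of the support, and that interiority of $c_n$ makes $r_n$ well-defined---are minor but should be stated explicitly.
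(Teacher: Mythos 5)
Your proposal is correct and takes essentially the same route as the paper's proof: both invoke the characterization in \Cref{prop:dpdn}, show that the left-hand side of~\eqref{eq:Pncondition} tends to $1$ as $qF(c_n)\to 0^+$ (you via a Taylor expansion or L'H\^{o}pital in $x_n = qF(c_n)$, the paper via L'H\^{o}pital directly), and use \Cref{assump:costdistribution} to keep the right-hand side bounded away from $1$ so that the inequality holds for all large $n$. Your explicit $\liminf$/$\limsup$ bookkeeping is just a more careful rendering of the paper's remark that the right-hand side is ``bounded away from 1.''
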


Some remarks are in order. Note that $cf(c)/F(c)$ is the elasticity of the cumulative distribution function $F$. Thus, \Cref{assump:costdistribution} says that the inverse of the elasticity of $F$ does not go to zero as $c$ approaches the lower bound of the support. In other words, we need $F$ to not change too abruptly near $\underline{c}$. \Cref{assump:costdistribution} holds for a large class of distributions. For example, for $F(c) = c^\alpha$, $\alpha > 0$ with support on $[0,1]$, we have $\frac{F(c)}{cf(c)} =  \frac{1}{\alpha} > 0$. It also holds for the Beta distribution and the exponential distribution. Lastly, \Cref{assump:costdistribution} is a sufficient condition and we conjecture that the statement that $P_n$ eventually increases holds much more generally.


\subsection{Uniform Cost Distribution} \label{sec:unifexample}

We illustrate our results in the special case of uniform cost distribution. Given $0 \leq \underline{c} < \overline{c} < \infty$, the cumulative distribution function on the support is given by $F(c) = \frac{c-\underline{c}}{\overline{c} - \underline{c}}$. We then have
\begin{equation*} \label{eq:uniformphi}
   \Phi(\hat{c},q,n) =  \left\{ \begin{array}{lcl}
\frac{\overline{c}-\underline{c}}{n(\hat{c}-\underline{c})} \left[1-\left(1-\frac{q}{\overline{c}-\underline{c}}(\hat{c}-\underline{c})\right)^n\right] & \mbox{for}
& \underline{c} < \hat{c} \leq \overline{c} \\ q & \mbox{for} & \hat{c} = \underline{c}.
\end{array}\right . 
\end{equation*}

It is easy to see that $\Phi$ is strictly decreasing in $\hat{c}$, strictly increasing in $q$, and strictly decreasing in $n$ on the appropriate domains. 

To illustrate our results on the limit behaviors, we now consider two numerical examples with uniform cost distribution. Let $V = 1$ and $q = 1/2$. First, consider $F \sim \mathrm{U}[0,1]$. The equilibrium threshold $c_n$ solves 
\begin{equation*}
(c_n)^2n = 1-(1-c_n/2)^n.
\end{equation*}
From \Cref{prop:clargen} and \Cref{prop:largen}, $c_n \rightarrow 0$ and $P_n \rightarrow 1$ for this distribution since $\underline{c} = 0$.

Now, consider $F \sim \mathrm{U}[1/4,5/4]$. The equilibrium threshold $c_n$ solves 
\begin{equation*}
nc_n(c_n-1/4)= 1 - (9/8-c_n/2)^n.
\end{equation*}
For this distribution, $c_n \rightarrow \frac{1}{4}$ and $P_n \rightarrow 1-e^{-\frac{1}{2}\kappa} \approx 0.797$, since $\kappa = 3.188$. \Cref{table:uniformexample} shows the numerical values of $c_n$ and $P_n$ for the two specifications for $n = 10, 100, 1000, 2000$.

\begin{table}[t]
\centering
\begin{subtable}[ht]{0.45\textwidth}
    \centering
\begin{tabular}{ |c|c|c| }
 \hline
 $n$ & $c_n$ & $P_n$  \\ 
 \hline \hline
 10 & 0.2787 & 0.7771 \\ 
 \hline
 100 & 0.0997 & 0.9939 \\ 
 \hline
 1000 & 0.0316 & 0.9999 \\
 \hline
 2000 & 0.0224 & 0.9999 \\
 \hline
\end{tabular}
\caption{$F \sim \mathrm{U}[0,1]$}
\label{table:uniformexamplea}
\end{subtable}
\quad
\begin{subtable}[ht]{0.45\textwidth}
    \centering
\begin{tabular}{ |c|c|c| }
 \hline
 $n$ & $c_n$ & $P_n$  \\ 
 \hline \hline
 10 & 0.3780 & 0.4839 \\ 
 \hline
 100 & 0.2767 & 0.7395 \\ 
 \hline
 1000 & 0.2531 & 0.7904 \\
 \hline
  2000 & 0.2516 & 0.7936 \\
 \hline
\end{tabular}
\caption{$F \sim \mathrm{U}[1/4,5/4]$}
\label{table:uniformexampleb}
\end{subtable}
    \caption{$c_n$ and $P_n$ for (a) $\mathrm{U}[0,1]$ and (b) $\mathrm{U}[1/4,5/4]$.}
    \label{table:uniformexample}
\end{table}


\section{Optimal Prize} \label{sec:optprize}

This section examines the principal's problem of maximizing gains from crowdsearch by choosing the prize $V$. The principal's valuation of the object is $W > 0$, which s/he gains if and only if the object is found. Thus, the principal's problem is to choose $V$ to maximize $(W-V)P^*(V)$ subject to the equilibrium condition $c^*(V) = V \Phi(c^*(V))$, where the equilibrium threshold $c^*(V)$ and the probability of success $P^*(V) = 1-(1-qF(c^*(V)))^n$ are written with explicit dependence on the prize $V$. One could proceed directly by optimizing over the prize, but this leads to a cumbersome exercise. An alternative route is to recognize that, we can recast the principal's problem as choosing the equilibrium threshold $\hat{c}$ instead of choosing the prize $V$. 

Using the equilibrium condition and simplifying, the principal's problem effectively becomes
\begin{equation} \label{eq:principal}
    \max_{\hat{c} \in [\underline{c},\overline{c}]} \; -W(1-qF(\hat{c}))^n - n\hat{c} F(\hat{c}).
\end{equation}
The objective function $\mathcal{W}(\hat{c}) \equiv -W(1-qF(\hat{c}))^n - n\hat{c} F(\hat{c})$ consists of two terms. The first term $-W(1-qF(\hat{c}))^n$ reflects the benefit of crowdsearch to the principal, which increases with the level of participation captured by the equilibrium threshold $\hat{c}$. The second term $-n\hat{c}F(\hat{c})$, which decreases in $\hat{c}$, captures the cost of incentivizing such a level of participation. The objective function depends crucially on the cost distribution $F$ and need not be concave. Thus, for a sharper characterization, we impose an additional assumption on the cost distribution $F$ to ensure that the principal's problem has a unique solution. 

\begin{assumption} \label{assump:principalcostdistribution}
     $F(c)/f(c)$ is non-decreasing.
\end{assumption}

We denote the solution to~\eqref{eq:principal} by $\hat{c}^*$ and its corresponding prize by $V^*$. We have the following result.

\begin{proposition} \label{prop:principal}
    Suppose $F$ satisfies \Cref{assump:principalcostdistribution}, then there exists $\underline{W}$ and $\overline{W}$ such that $\hat{c}^*$ is the unique fixed point of 
        \[
            \Omega(\hat{c}) \equiv Wq(1-qF(\hat{c}))^{n-1} - \frac{F(\hat{c})}{f(\hat{c})}
        \]
        if $\underline{W} < W < \overline{W}$. Otherwise, $\hat{c}^* = \underline{c}$ if $W \leq \underline{W}$ and $\hat{c}^* = \overline{c}$ if $W \geq \overline{W}$. Given $\hat{c}^*$, the optimal prize is given by $V^* = \hat{c}^*/\Phi(\hat{c}^*)$.
\end{proposition}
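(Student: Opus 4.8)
The plan is to work entirely with the reformulated objective $\mathcal{W}(\hat{c}) = -W(1-qF(\hat{c}))^n - n\hat{c}F(\hat{c})$ from~\eqref{eq:principal}, treating the principal as choosing the threshold $\hat{c}$ directly. First I would differentiate and factor. A direct computation gives
\[
\mathcal{W}'(\hat{c}) = nWqf(\hat{c})(1-qF(\hat{c}))^{n-1} - nF(\hat{c}) - n\hat{c}f(\hat{c}) = nf(\hat{c})\left[\Omega(\hat{c}) - \hat{c}\right],
\]
so that, on the interior where $f(\hat{c}) > 0$, the sign of $\mathcal{W}'(\hat{c})$ equals the sign of $\Omega(\hat{c}) - \hat{c}$ and the first-order condition reads exactly $\hat{c} = \Omega(\hat{c})$.

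Next I would establish the single-crossing structure that substitutes for concavity. Writing $\Omega = A - B$ with $A(\hat{c}) \equiv Wq(1-qF(\hat{c}))^{n-1}$ and $B(\hat{c}) \equiv F(\hat{c})/f(\hat{c})$, the term $A$ is non-increasing in $\hat{c}$ because $F$ is increasing, while $B$ is non-decreasing by \Cref{assump:principalcostdistribution}. Hence $h(\hat{c}) \equiv \hat{c} - \Omega(\hat{c}) = \hat{c} - A(\hat{c}) + B(\hat{c})$ is strictly increasing, being the sum of the identity with two non-decreasing maps. Therefore $h$ has at most one zero, and since $\mathcal{W}'(\hat{c}) = -nf(\hat{c})h(\hat{c})$, the derivative changes sign at most once and only from positive to negative. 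This makes $\mathcal{W}$ single-peaked on $[\underline{c},\overline{c}]$, which is the key structural fact and the main obstacle to pin down carefully, since $\mathcal{W}$ itself need not be concave.

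It then remains to locate the peak via the boundary values of $h$. Evaluating at the endpoints, using $F(\underline{c}) = 0$ and $F(\overline{c}) = 1$, gives $\Omega(\underline{c}) = Wq$ and $\Omega(\overline{c}) = Wq(1-q)^{n-1} - 1/f(\overline{c})$. I would then set $\underline{W} \equiv \underline{c}/q$ and $\overline{W} \equiv (\overline{c} + 1/f(\overline{c}))/(q(1-q)^{n-1})$, so that $h(\underline{c}) = \underline{c} - Wq < 0 \iff W > \underline{W}$ and $h(\overline{c}) = \overline{c} - Wq(1-q)^{n-1} + 1/f(\overline{c}) > 0 \iff W < \overline{W}$. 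For $\underline{W} < W < \overline{W}$, strict monotonicity of $h$ and the intermediate value theorem yield a unique interior zero $\hat{c}^*$, which is the global maximizer because $\mathcal{W}$ is increasing to its left and decreasing to its right; this interior zero is precisely the unique fixed point of $\Omega$. If $W \leq \underline{W}$ then $h \geq 0$ on the interior, so $\mathcal{W}$ is non-increasing and $\hat{c}^* = \underline{c}$; if $W \geq \overline{W}$ then $h \leq 0$ on the interior, so $\mathcal{W}$ is non-decreasing and $\hat{c}^* = \overline{c}$.

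Finally, I would translate the optimal threshold back into a prize. Since by \Cref{prop:eqm} the equilibrium threshold induced by a prize $V$ solves $\hat{c} = V\Phi(\hat{c})$, and the reformulation in~\eqref{eq:principal} was obtained precisely by substituting $V = \hat{c}/\Phi(\hat{c})$, the prize that implements $\hat{c}^*$ is $V^* = \hat{c}^*/\Phi(\hat{c}^*)$. The only genuinely delicate step is the second paragraph: converting \Cref{assump:principalcostdistribution} into the monotonicity of $\Omega$ and hence into single-peakedness of a non-concave objective; the endpoint bookkeeping and the prize inversion are routine once that is in place.
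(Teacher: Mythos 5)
Your proposal is correct and follows essentially the same route as the paper's proof: reformulate the problem via~\eqref{eq:principal}, factor the first-order condition into $\hat{c} = \Omega(\hat{c})$, use \Cref{assump:principalcostdistribution} to get a single-crossing/monotonicity property, and read off $\underline{W} = \underline{c}/q$ and $\overline{W} = (\overline{c}+1/f(\overline{c}))/(q(1-q)^{n-1})$ from the endpoint conditions, with the same prize inversion $V^* = \hat{c}^*/\Phi(\hat{c}^*)$. The only (cosmetic, and if anything slightly more careful) difference is that you establish strict monotonicity of $h(\hat{c}) = \hat{c} - \Omega(\hat{c})$ directly, whereas the paper asserts that $\Omega$ itself is strictly decreasing.
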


\Cref{prop:principal} says that if the principal's valuation of the object is low, then it is not worthwhile to incentivize the agents to search. Promising a prize higher than $\underline{c}/\Phi(\underline{c})$ would potentially lead to the object being found at a loss to the principal. On the other hand, if the principal's valuation is very high, then promising a prize beyond $\overline{c}/\Phi(\overline{c})$ is wasteful since all of the agents would already be incentivized to search. For intermediate valuation, the optimal prize $V^*$ depends on $F$ through $\Phi$ and also indirectly through the optimal threshold $\hat{c}^*$.

As is intuitive, the optimal prize $V^*$ increases as the valuation $W$ increases, since $\hat{c}^*$ increases through a shift upwards of $\Omega(\hat{c})$. However, the effect of a change in $n$ on $V^*$ is ambiguous. On the one hand, an increase in $n$ decreases $\Omega(\hat{c})$ thus decreasing its fixed point $\hat{c}^*$. Taking only this effect into account would lead to a decrease in $V^*$. On the other hand, $V^*$ positively depends on $n$ through $\Phi$, and thus the overall effect is ambiguous. The effect of a change in $q$ on $V^*$ is also ambiguous because a change in $q$ can shift the fixed point of $\Omega(\hat{c})$ to lower or higher values. Therefore, setting the optimal prize may depend non-monotonically on the number of agents and on the difficulty of finding the object. The principal should examine each setting carefully and specifically when determining the optimal prize.

When the number of agents tends to infinity, however, we can derive the optimal prize in closed form and obtain exact comparative statics with respect to the primitives of the model. Let $V^*_n$ denote the optimal prize with $n$ agents. We have the following result.

\begin{proposition}  \label{prop:principallarge}
    If $\underline{c} = 0$, then $V^*_n \rightarrow 0$. If $\underline{c} > 0$, then
       $V^*_n \rightarrow V^*_\infty \equiv W \underline{c} \frac{\ln{Wq}-\ln{\underline{c}}}{Wq - \underline{c}}$.
\end{proposition}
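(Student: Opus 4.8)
The plan is to take the limit in the fixed-point characterization of the optimal threshold from \Cref{prop:principal} and then carry that limit through the prize formula $V^* = \hat{c}^*/\Phi(\hat{c}^*)$. By \Cref{prop:principal}, for $n$ large enough to fall in the interior regime, $\hat{c}^*_n$ is the unique fixed point of $\Omega_n(\hat{c}) = Wq(1-qF(\hat{c}))^{n-1} - F(\hat{c})/f(\hat{c})$; equivalently $\hat{c}^*_n$ solves $F(\hat{c})/f(\hat{c}) = Wq(1-qF(\hat{c}))^{n-1}$. First I would argue that $\hat{c}^*_n \rightarrow \underline{c}$ as $n \rightarrow \infty$, using the same mechanism as in the equilibrium analysis: the factor $(1-qF(\hat{c}))^{n-1}$ collapses to $0$ for any fixed $\hat{c}$ with $F(\hat{c})>0$, so the fixed point is forced down toward $\underline{c}$. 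This mirrors \Cref{prop:clargen}, and the monotonicity supplied by \Cref{assump:principalcostdistribution} keeps the fixed point well-defined along the way.

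The case $\underline{c} = 0$ is then immediate: $\hat{c}^*_n \rightarrow 0$, and since $\Phi(\hat{c}) \in (0,q]$ stays bounded away from $\infty$ (indeed $\Phi(0)=q$ by definition), $V^*_n = \hat{c}^*_n/\Phi(\hat{c}^*_n) \rightarrow 0$. For $\underline{c} > 0$ the work is in extracting the correct rate at which $\hat{c}^*_n \rightarrow \underline{c}$, because both $\hat{c}^*_n$ and $\Phi(\hat{c}^*_n)$ tend to finite limits and I need their ratio. The key device is to track $\kappa_n \equiv nF(\hat{c}^*_n)$, exactly as in \Cref{prop:largen}. Writing $(1-qF(\hat{c}^*_n))^{n-1} \approx e^{-q\kappa_n}$ and $\hat{c}^*_n \rightarrow \underline{c}$, $f(\hat{c}^*_n) \rightarrow f(\underline{c})$, the fixed-point equation $F(\hat{c}^*_n)/f(\hat{c}^*_n) = Wq\,(1-qF(\hat{c}^*_n))^{n-1}$ becomes, after multiplying by $n$, $\kappa_n/f(\underline{c}) \approx n\,Wq\,e^{-q\kappa_n}$. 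The correct bookkeeping here—showing $\kappa_n$ itself diverges and identifying its growth—pins down the limiting value of $\Phi$ along the diagonal, since from \eqref{eq:phi} one has $\Phi(\hat{c}^*_n) = P(\hat{c}^*_n)/(nF(\hat{c}^*_n)) = (1-e^{-q\kappa_n}+o(1))/\kappa_n$.

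To assemble the closed form, I would use the limiting relation from \Cref{prop:largen}(i), namely that the limiting expected crowd $\kappa$ solves $\underline{c} = V\frac{1-e^{-q\kappa}}{\kappa}$, but now evaluated at the \emph{optimal} prize. Passing to the limit in $V^*_n = \hat{c}^*_n/\Phi(\hat{c}^*_n)$ gives $V^*_\infty = \underline{c}\,\kappa^*/(1-e^{-q\kappa^*})$ where $\kappa^*$ is the limiting optimal crowd. It remains to determine $\kappa^*$ from the optimality condition; taking the limit of the scaled fixed-point equation $\kappa_n/f(\underline{c}) = n\,Wq\,e^{-q\kappa_n}(1+o(1))$ together with $\hat{c}^*_n/f(\hat{c}^*_n)\cdot$-type substitutions should yield the relation $e^{-q\kappa^*} = \underline{c}/(Wq)$, i.e. $q\kappa^* = \ln(Wq) - \ln\underline{c}$, hence $1-e^{-q\kappa^*} = (Wq-\underline{c})/(Wq)$. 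Substituting $\kappa^* = (\ln Wq - \ln\underline{c})/q$ and this value of $1-e^{-q\kappa^*}$ into $V^*_\infty = \underline{c}\,\kappa^*/(1-e^{-q\kappa^*})$ collapses to $V^*_\infty = W\underline{c}\,\frac{\ln Wq - \ln\underline{c}}{Wq-\underline{c}}$, as claimed.

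The main obstacle I expect is the middle step: rigorously justifying the asymptotic substitutions $(1-qF(\hat{c}^*_n))^{n-1}\sim e^{-q\kappa_n}$ and $f(\hat{c}^*_n)\to f(\underline{c})$ while simultaneously solving for the growth of $\kappa_n$, since $\hat{c}^*_n$ is only implicitly defined and both sides of the fixed-point equation are moving. I would control this by first proving $\hat{c}^*_n\to\underline{c}$ unconditionally, then showing $\kappa_n$ is bounded and bounded away from $0$ (so that the exponential approximation is uniform), and finally invoking continuity of $f$ at $\underline{c}$; the convergence rates from \Cref{prop:conv} can be borrowed to confirm $F(\hat{c}^*_n)\in\Theta(n^{-1})$, which is precisely what guarantees $\kappa_n$ converges to a finite positive limit rather than degenerating.
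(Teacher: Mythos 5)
Your high-level route is the same as the paper's: pass to the limit in the fixed-point characterization from \Cref{prop:principal}, track $\kappa_n \equiv nF(\hat{c}^*_n)$ as in \Cref{prop:largen}, and finish via $V^*_n = \hat{c}^*_n/\Phi(\hat{c}^*_n)$. However, there is a genuine error at the central step: you confuse a \emph{fixed point} of $\Omega$ with a \emph{zero} of $\Omega$. The optimal threshold satisfies $\hat{c} = \Omega(\hat{c})$, i.e.
\[
\hat{c} + \frac{F(\hat{c})}{f(\hat{c})} = Wq\bigl(1-qF(\hat{c})\bigr)^{n-1},
\]
whereas you work throughout with $F(\hat{c})/f(\hat{c}) = Wq(1-qF(\hat{c}))^{n-1}$, dropping the $\hat{c}$ on the left. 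The dropped term is exactly the one that matters in the limit: since $F(\hat{c}^*_n) \to F(\underline{c}) = 0$, the ratio $F(\hat{c}^*_n)/f(\hat{c}^*_n)$ vanishes, and it is $\hat{c}^*_n \to \underline{c}$ that survives on the left-hand side and yields the paper's limiting relation $Wq e^{-q\kappa^*} = \underline{c}$ (equation \eqref{eq:principallarge}), from which $q\kappa^* = \ln(Wq) - \ln\underline{c}$ and the closed form follow. Your equation cannot produce this relation: after multiplying by $n$ it reads $\kappa_n/f(\underline{c}) \approx nWq\,e^{-q\kappa_n}$, which forces $\kappa_n \to \infty$ (roughly like $\ln n/q$), and your write-up is in fact internally inconsistent on this point --- in the middle paragraph you say the bookkeeping consists of ``showing $\kappa_n$ itself diverges,'' while your final paragraph needs $\kappa_n$ to converge to a finite positive limit so that $e^{-q\kappa^*} = \underline{c}/(Wq)$ makes sense. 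With the correct equation the analysis is clean: the left-hand side is at least $\underline{c}$, so $(1-qF(\hat{c}^*_n))^{n-1} \geq \underline{c}/(Wq) > 0$, which keeps $\kappa_n$ bounded; the approximation $(1-qF(\hat{c}^*_n))^{n-1} = e^{-q\kappa_n} + o(1)$ is then justified exactly as in step 2 of the paper's proof of \Cref{prop:largen}; and passing to the limit pins down $\kappa_n \to \kappa^*$ with $Wqe^{-q\kappa^*} = \underline{c}$.

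Two smaller issues. First, you cannot simply ``borrow'' \Cref{prop:conv}: that proposition concerns the equilibrium threshold $c_n$ for a \emph{fixed} prize $V$, whereas $\hat{c}^*_n$ is the principal's optimal threshold, whose associated prize $V^*_n$ varies with $n$; the statement $F(\hat{c}^*_n) \in \Theta(n^{-1})$ must be re-derived from the (correct) optimality condition, as sketched above. Second, your argument for $\underline{c}=0$ does not work as stated: for $V^*_n = \hat{c}^*_n/\Phi(\hat{c}^*_n) \to 0$ you need $\Phi(\hat{c}^*_n,n)$ bounded \emph{below} away from $0$, not ``bounded away from $\infty$,'' and in fact $\Phi(\hat{c}^*_n,n) \approx (1-e^{-q\kappa_n})/\kappa_n \to 0$ when $\underline{c}=0$ because $\kappa_n$ diverges in that regime; one needs either a joint rate argument comparing $\hat{c}^*_n$ with $1/\kappa_n$, or the paper's shortcut of obtaining the $\underline{c}=0$ case as the limit $\underline{c} \to 0^+$ of the closed-form $V^*_\infty$.
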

With the closed-form expression, we can readily see that the optimal prize increases with the valuation $W$. If there are agents with low costs, then the optimal prize is also low. As $\underline{c}$ goes to zero, there will still be a large number of agents motivated by a vanishing amount of prize. Lastly, when the object is easier to find, the optimal prize is also unambiguously lower in a large contest.


\section{Two Dimensions of Heterogeneity} \label{sec:hetero}

In this section, we consider an extension where agents are heterogeneous in their cost of search \textit{and} also in their probability of finding the object: individual $i$ has probability $q_i \in (0,1]$. The vector $\boldq = (q_1,\dots,q_n)$ is common knowledge, and agents are indexed by: $q_1 \geq \cdots \geq q_n$. The payoff from~\eqref{eq:payoff} and~\eqref{eq:payoffprob} is modified to
\begin{equation*} \label{eq:heteropayoff}
u_i(s_i, \bolds_{-i},c_i) = s_i\left( p_i(\bolds_{-i};q_i,\boldq_{-i}) V - c_i \right),
\end{equation*}
where
\begin{equation*} \label{eq:heteropayoffprob}
p_i(\bolds_{-i};q_i,\boldq_{-i}) \equiv q_i \sum_{T \subseteq S_{-i}} \prod_{j \in T} q_j \prod_{j \in S_{-i}\setminus T}(1-q_j)\frac{1}{|T|+1}.
\end{equation*}
As before, the equilibrium strategies are threshold strategies and the interior equilibrium threshold vector $\boldc^*$ is characterized by the system: for all $i$, $c_i^* = V \Psi_i(\boldc_{-i}^*;q_i,\boldq_{-i})$, where $\Psi_i(\hat{\boldc}_{-i};q_i,\boldq_{-i}) \equiv \mathbb{E}_{\boldc_{-i}}\left[p_i(\boldsigma_{\hat{\boldc}_{-i}}(\boldc_{-i});q_i,\boldq_{-i})\right]$. Notably, agents no longer use the same equilibrium threshold. We have the following characterization.

\begin{proposition} \label{prop:heteroeqm}
For an interior equilibrium threshold vector $\boldc^*$, if $q_1 > \cdots > q_n$, then $c_1^* > \cdots > c^*_n$.
\end{proposition}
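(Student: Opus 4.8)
The plan is to reduce the $n$-agent system to a clean pairwise comparison by first rewriting the winning probability $\Psi_i$ in an integral form. Conditional on agent $i$ searching and finding the object (probability $q_i$), the prize is shared with those \emph{other} agents who also search and find; writing $\rho_j \equiv q_j F(c_j^*)$ for the probability that agent $j$ searches and finds, the number of such rivals $Z_{-i}$ is a sum of independent Bernoulli$(\rho_j)$ over $j\neq i$, so that $\Psi_i = q_i\,\mathbb{E}[1/(1+Z_{-i})]$. Using $\tfrac{1}{1+Z}=\int_0^1 t^{Z}\,dt$ together with independence, I would rewrite the equilibrium condition as
\[
c_i^* = V q_i\int_0^1 \prod_{j\neq i}\bigl(1-\rho_j s\bigr)\,ds .
\]
The merit of this form is that the dependence on $i$ sits only in \emph{which} factor is omitted from the product, which is exactly what makes a pairwise subtraction tractable.

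Next I fix a pair $i<k$ (so $q_i>q_k$), subtract the two equilibrium conditions, and factor out the common term $P(s)\equiv\prod_{j\neq i,k}(1-\rho_j s)\geq 0$. Since the $j\neq i$ product equals $(1-\rho_k s)P(s)$ and the $j\neq k$ product equals $(1-\rho_i s)P(s)$, and $\rho_i=q_iF(c_i^*)$, $\rho_k=q_kF(c_k^*)$, this gives the exact identity
\[
c_i^*-c_k^* = V\int_0^1 P(s)\Bigl[(q_i-q_k)-s\,q_i q_k\bigl(F(c_k^*)-F(c_i^*)\bigr)\Bigr]\,ds .
\]
One payoff is immediate: if $c_i^*=c_k^*$ then $F(c_k^*)-F(c_i^*)=0$, the bracket collapses to $q_i-q_k>0$, and the right-hand side is strictly positive, a contradiction. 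Hence two agents with distinct abilities can never share a threshold, which disposes of all ties.

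To orient the strict inequality I would argue by contradiction: suppose $c_i^*<c_k^*$, so $\Delta F\equiv F(c_k^*)-F(c_i^*)>0$. Writing $\Delta F=\int_{c_i^*}^{c_k^*} f = f(\xi)(c_k^*-c_i^*)$ for some $\xi\in(c_i^*,c_k^*)$, substituting into the identity, and collecting the $(c_i^*-c_k^*)$ terms yields
\[
(c_i^*-c_k^*)\bigl(1-V q_i q_k f(\xi)\, I_1\bigr)=V(q_i-q_k)\,I_0,
\qquad I_0\equiv\int_0^1 P,\ \ I_1\equiv\int_0^1 sP,
\]
with $0<I_1<I_0$. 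The right-hand side is strictly positive, so $c_i^*<c_k^*$ forces $1-Vq_iq_k f(\xi)I_1<0$. But $Vq_iq_kf(\xi)I_1$ is precisely $\lvert\partial (V\Psi_i)/\partial \hat c_k\rvert$ evaluated along the relevant segment, i.e. the slope of agent $i$'s best response in agent $k$'s threshold; the analog of \Cref{assump:symmetric} keeps this slope below one at the (unique) equilibrium, so the factor is positive and the contradiction follows. Thus $c_i^*>c_k^*$, and applying this to every adjacent pair gives $c_1^*>\cdots>c_n^*$.

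I expect this final sign determination to be the main obstacle. The algebraic identity does all the bookkeeping and even yields the ``no ties'' fact for free, but it cannot by itself orient the inequality: when the density is very steep between $c_i^*$ and $c_k^*$, the factor $1-Vq_iq_kf(\xi)I_1$ can turn negative, and one can in fact construct a nearly-step cost distribution admitting an interior equilibrium with $q_i>q_k$ yet $c_i^*<c_k^*$. The statement is therefore genuinely an equilibrium-selection result, and the crux is to invoke the uniqueness/stability hypothesis (best-response slopes staying below one, the heterogeneous counterpart of \Cref{assump:symmetric}) to fix the sign and exclude the multiple-equilibria regime in which the ordering could invert.
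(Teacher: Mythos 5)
Your proof is correct under the hypothesis you flag, and it takes a genuinely different route from the paper's. The paper argues via best-response maps: fixing $c_3^*,\dots,c_n^*$, it defines $\Xi_1(\hat{c}) \equiv V\Psi_1(\hat{c},c_3^*,\dots,c_n^*)$ and $\Xi_2(\hat{c}) \equiv V\Psi_2(\hat{c},c_3^*,\dots,c_n^*)$, notes both are strictly decreasing with $\Xi_1 > \Xi_2$ pointwise (because $q_1 > q_2$), lets $c^1,c^2$ be their fixed points, and then compares compositions: $c_2^*$ is a fixed point of $\Xi_2\circ\Xi_1$ while $c^2$ is a fixed point of $\Xi_2\circ\Xi_2$, from which it concludes $c_2^* < c^2 < c^1 < c_1^*$. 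Your route instead produces an exact pairwise identity whose sign can be read off directly. What your approach buys is twofold: the no-ties statement comes for free, and the precise point of failure is made explicit\textemdash the factor $1 - Vq_iq_kf(\xi)I_1$ is exactly a best-response slope, and the orientation flips when it exceeds one. The paper's proof is shorter but buries the very same requirement: the step ``$c_2^* < c^2$ because $\Xi_2$ is strictly decreasing and $\Xi_1 > \Xi_2$'' is not valid for arbitrary strictly decreasing maps, since fixed points of $\Xi_2\circ\Xi_1$ and $\Xi_2\circ\Xi_2$ need not inherit the pointwise ordering when these increasing composites have slopes above one (i.e.\ when $\Xi_2$ admits two-cycle-like configurations). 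That is precisely the steep-density regime your near-step counterexample lives in.

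Your diagnosis is therefore accurate and applies to the paper as well. One can take a piecewise-linear $F$ with a steep middle segment (slope $b$ with $Vq_1q_2b/2 > 1$), $q_1$ slightly above $q_2$, and obtain an interior equilibrium with $c_1^* < c_2^*$; in such an example the paper's chain of inequalities reverses to $c_1^* < c^2 < c^1 < c_2^*$. Note also that the literal analog of \Cref{assump:symmetric} (slope never \emph{equal} to one) does not suffice\textemdash a distribution whose density is uniformly steep makes every best-response slope exceed one while never equalling it, and the ordering is then reversed at the unique interior equilibrium. So the proposition genuinely needs the stability/contraction form of the assumption (slopes strictly \emph{below} one), exactly as you state; the paper invokes neither version explicitly, whereas you isolate it as the crux.
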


\Cref{prop:heteroeqm} asserts that agents with higher probability of finding the object have higher equilibrium thresholds. The reason here is that if an agent thinks that s/he is more likely to find the object than others, then s/he can afford to pay a higher search cost. 

The probability of success in this model is given by $P(\hat{\boldc};\boldq) \equiv 1 - \prod_{i \in N} (1-q_i F(\hat{c}_i))$, which now depends on the individual probability of finding the object and threshold. As in the baseline model, the probability of success is decomposable into the sum of the agents' ex-ante probability of winning: $P(\hat{\boldc};\boldq) = \sum_{i\in N} F(\hat{c}_i)\Psi_i(\hat{\boldc}_{-i};q_i,\boldq_{-i})$. Therefore, analogous to~\eqref{eq:principal}, the principal's problem is
\begin{equation*} \label{eq:principalhetero}
    \max_{\hat{\boldc} \in [\underline{c},\overline{c}]^n} \; -W \prod_{i \in N} (1-q_iF(\hat{c}_i)) - \sum_{i \in N} \hat{c}_i F(\hat{c}_i).
\end{equation*}
Under mild conditions, the optimal equilibrium threshold $\hat{\boldc}^*$ solves the system of equations: for all $i$, $\Omega_i(\hat{c}_i,\hat{\boldc}_{-i};q_i,\boldq_{-i}) = \hat{c}_i$, where
\begin{equation*}
    \Omega_i(\hat{c}_i, \hat{\boldc}_{-i};q_i,\boldq_{-i}) \equiv  Wq_i \prod_{j \neq i}(1-q_jF(\hat{c}_j)) - \frac{F(\hat{c}_i)}{f(\hat{c}_i)},
\end{equation*}
and the optimal prize is then given by $V^* = \hat{c}_i^*/\Psi_i(\hat{\boldc}_{-i}^*;q_i,\boldq_{-i})$ for any $i$. If \Cref{assump:principalcostdistribution} holds, then analogous reasoning to that of \Cref{prop:heteroeqm} shows that $\hat{c}_i^* > \hat{c}^*_j$ if $q_i > q_j$. That is, the principal would want to ensure that an agent with a higher probability of finding the object enters more often since s/he is easier to incentivize and contribute more to the search's success.

We observe that the analysis can be extended to two dimensions of heterogeneity and qualitatively similar conclusions can be drawn as in the one-dimensional heterogeneous case. Yet, the analysis becomes much more intricate and less tractable. Particular to this extension, however, is the possibility of analyzing entry barriers for crowdsearch as the vector $\boldq$\textemdash which might be estimated from past performances\textemdash can be used to select a subset of agents to be invited. 

We argue that the effect of entry barriers on the probability of success can be in both directions. To see this, consider either the setting in \Cref{ex:probn} or \Cref{ex:probn2}, where we have shown in a model with $q_i = q$ that the effect of increasing the number of agents on the probability of success can be non-monotonic. By continuity of all the quantities involved, we can introduce a small perturbation so that $q_1 > \cdots > q_6$, but such that the probability of success of any subset of agents has the same ranking\textemdash based on the number of agents in the subset\textemdash as in \Cref{table:example}. Now, since an entry barrier is a mechanism that selects a subset of agents, we can construct one that increases or decreases the probability of success in this example. For instance, a form of entry barrier that is commonly used in practice is to invite only agents with sufficiently high probability of finding the object: $q_i \geq \overline{q}$ for some $\overline{q}$. Another, less intuitive, entry barrier for crowdsearch is to invite only agents with low probability of finding the object: $q_i \leq \underline{q}$ for some $\underline{q}$. Both of these forms of entry barriers can be either beneficial or detrimental to the principal depending on its implementation. Thus, a detailed analysis of entry barriers promises to be a fruitful direction for future research.


\section{Extensions} \label{sec:extension}

We provide further analysis of the model in this section. First, we investigate how adding a non-strategic agent, interpreted as an expert, alters the equilibrium behavior. Second, we extend the analysis to the case of multiple prizes and derive the optimal prize structure.


\subsection{Non-Strategic Agents} \label{sec:nonstrategic}

We examine whether adding an expert will improve the success of the enlarged group\textemdash crowd plus expert. The tradeoffs are obvious. The crowd will tend to search less, but this may be overcompensated by the expert's search. Thus, suppose there is a non-strategic agent, an expert, who searches regardless of the cost and finds the object with probability $q_e \in (0,1]$, which is common knowledge. This could arise if the principal outsources the search to an expert and pays for his/her cost. Note that we do not assume that $q_e$, in which we call \textit{expertise}, is larger than $q$. This allows us to capture the situation in which the expert is not necessarily better equipped to find the object than the crowd.\footnote{In fact, this situation is often present in bug bounty programs as \citet{malladi2020} report: ``Systems are becoming complex, and the nature of vulnerabilities is becoming unpredictable, thereby limiting a firm’s ability to trace critical weaknesses. Given this, firms are increasingly leveraging BBPs [bug bounty programs] to crowdsource both discovery and fixing of vulnerabilities.''} We further suppose that the expert will be rewarded in the same manner as the strategic agents.\footnote{That is, the expert and the strategic agents who found the object are rewarded with equal probability. An alternative reward scheme is to keep the prize if the expert finds the object. With this scheme, however, the equilibrium simply solves $\hat{c} = V(1-q_e)\Phi(\hat{c})$.}

We now characterize the equilibrium of the game with expert. As for the original game (search without expert), the key quantity is the probability that an agent wins the prize in the game with expert. To derive this quantity, denoted by $\Phi^e$, we condition the winning probability on two cases: (a) the expert does \textit{not} find the object (with probability $1-q_e$) and (b) the expert \textit{finds} the object (with probability $q_e$). We obtain
\begin{equation} \label{eq:phiexpert}
     \Phi^e(\hat{c},q,q_e,n) \equiv \Phi(\hat{c},q,n) - q_e \frac{1- (1-qF(\hat{c}))^n(1+nqF(\hat{c}))}{n(n+1)qF(\hat{c})^2}
\end{equation}
if $\hat{c} > \underline{c}$ and $\Phi^e(\underline{c},q,q_e,n) \equiv q(1-q_e/2)$. Now, we assume an analog of \Cref{assump:interior} for the game with expert to ensure the interiority of the equilibrium and obtain

\begin{proposition} \label{prop:eqmexpert}
		Suppose $\underline{c} < \Phi^e(\underline{c},q,q_e,n)$ and $\Phi^e(\overline{c},q,q_e,n) < \overline{c}$. Then, the unique symmetric equilibrium of the game with expert $q_e \in (0,1]$ is $\boldsigma_{c^e}$. The equilibrium threshold $c^e \equiv c^e(V,q,q_e,n) \in (\underline{c},\overline{c})$ is the solution to
		\begin{equation} \label{eq:eqmexpert}
		    \hat{c} = V \Phi^e(\hat{c},q,q_e,n).
		\end{equation}
	 \end{proposition}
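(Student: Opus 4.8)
The plan is to replay, in order, the three steps that produced \Cref{prop:eqm}, modifying each to accommodate the always-searching expert. Because the expert searches unconditionally, from the standpoint of a strategic agent $i$ it is merely part of the environment---it shifts $i$'s winning probability but never reacts to $i$'s realized cost. Hence the argument behind \Cref{prop:threshold} transfers without change: fixing threshold strategies for the other strategic agents and the expert, agent $i$'s probability of winning conditional on searching, call it $\omega_i$, does not depend on $c_i$, so $i$'s payoff from searching is $\omega_i V - c_i$, which is nonnegative exactly when $c_i \le \omega_i V$. The best response is therefore the threshold strategy with threshold $\omega_i V$, so every equilibrium is in threshold strategies, and in a \emph{symmetric} equilibrium the $n$ strategic agents share a common threshold $\hat{c}$.

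Next I would reduce the symmetric-equilibrium requirement to the scalar equation \eqref{eq:eqmexpert}. With all strategic agents at the common threshold $\hat{c}$ and the expert always searching, I would compute agent $i$'s winning probability conditional on searching by conditioning on the expert's outcome. Writing $\Phi_0$ for this probability when the expert fails and $\Phi_1$ for it when the expert succeeds, one has $\Phi_0 = q\,\E[1/(1+K)]$ and $\Phi_1 = q\,\E[1/(2+K)]$, where $K \sim \mathrm{Bin}(n-1, qF(\hat{c}))$ counts the rival strategic finders; the overall winning probability is the convex combination $\Phi^e = (1-q_e)\Phi_0 + q_e \Phi_1$. A short binomial computation shows $\Phi_0 = \Phi(\hat{c}, q, n)$ and $\Phi_0 - \Phi_1 = q\,\E[1/((1+K)(2+K))]$ equals the bracketed correction term, recovering the closed form \eqref{eq:phiexpert}. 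The marginal type's indifference condition is then $\hat{c} = V\Phi^e(\hat{c}, q, q_e, n)$, that is, \eqref{eq:eqmexpert}.

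It then remains to show this fixed-point equation has a unique interior solution, for which I would establish two properties of $\Phi^e$ on $[\underline{c}, \overline{c}]$. Continuity on $(\underline{c}, \overline{c}]$ is immediate from \eqref{eq:phiexpert}; continuity at $\underline{c}$ follows by letting $F(\hat{c}) \to 0^+$ and checking via a Taylor expansion of $(1-qF(\hat{c}))^n$ that the limit equals $q(1 - q_e/2)$, the separately defined value $\Phi^e(\underline{c}, q, q_e, n)$ (consistent with the direct count that, absent strategic rivals, $i$ wins with probability $(1-q_e)q + q_e q/2$). For strict monotonicity I would avoid differentiating the closed form and instead use the decomposition above: since $K$ is stochastically increasing in $\hat{c}$ while $1/(1+K)$ and $1/(2+K)$ are decreasing in $K$, both $\Phi_0$ and $\Phi_1$ are strictly decreasing in $\hat{c}$---this is exactly the competition-intensifies argument of \Cref{prop:psi}(ii) and \Cref{prop:compstat}---so the convex combination $\Phi^e$ is strictly decreasing too.

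Finally, existence and uniqueness follow from the intermediate value theorem. Set $g(\hat{c}) \equiv V\Phi^e(\hat{c}, q, q_e, n) - \hat{c}$; it is continuous and, as the sum of the strictly decreasing $V\Phi^e$ and the strictly decreasing $-\hat{c}$, strictly decreasing. The interiority hypotheses yield $g(\underline{c}) > 0$ and $g(\overline{c}) < 0$, so $g$ has exactly one zero $c^e \in (\underline{c}, \overline{c})$, which is the unique symmetric equilibrium threshold. I expect the strict-monotonicity step to be the crux: checking it straight from \eqref{eq:phiexpert} is algebraically heavy, whereas the probabilistic decomposition $\Phi^e = (1-q_e)\Phi_0 + q_e\Phi_1$ reduces it to the monotonicity already proved for the expert-free game.
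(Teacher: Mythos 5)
Your proof is correct, and its skeleton coincides with the paper's: both condition on the expert's outcome to write the equilibrium condition as a convex combination (the paper's proof states it as $\hat{c} = V(1-q_e)\Phi(\hat{c},q,n) + Vq_e\tilde{\Phi}(\hat{c},q,n)$, where $\tilde{\Phi}$ is exactly your $\Phi_1$), then establish that the right-hand side is strictly decreasing in $\hat{c}$ and conclude by an intermediate-value argument under the interiority hypotheses. The genuine difference is in how the two computational steps are executed, and your route is leaner on both counts. First, the paper evaluates $\tilde{\Phi}$ as a double sum---over the number of rival searchers, then the number of finders among them---which forces it to invoke the extra identity for $\sum_k \binom{n}{k}x^k y^{n-k}/(k+2)$ (\Cref{lemma:modbinom3}) and roughly a page of algebra before arriving at \eqref{eq:phiexpert}; you collapse the double sum at the outset by binomial thinning (each rival independently finds with probability $qF(\hat{c})$, so $K \sim \mathrm{Bin}(n-1,qF(\hat{c}))$) and use the identity $\tfrac{1}{1+K}-\tfrac{1}{2+K}=\tfrac{1}{(1+K)(2+K)}$, so that only \Cref{lemma:modbinom} (giving $\Phi_0=\Phi$) and \Cref{lemma:modbinom2} (giving the bracketed correction term) are needed; applying \Cref{lemma:modbinom2} to $q\,\E\left[1/((1+K)(2+K))\right]$ does indeed yield $\frac{1-(1-qF(\hat{c}))^n(1+nqF(\hat{c}))}{n(n+1)qF(\hat{c})^2}$, so the closed form is recovered correctly. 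Second, the paper verifies monotonicity by inspecting the closed form of $\tilde{\Phi}$, whereas you obtain strict monotonicity of $\Phi_0$ and $\Phi_1$ structurally from first-order stochastic dominance, reusing the argument of \Cref{prop:psi}(ii); this is more robust (it would extend, say, to several experts with no new algebra) and relies only on $F$ being strictly increasing on its support, which the paper's own proof of \Cref{prop:psi} also uses. One minor remark: the hypothesis as printed omits the factor $V$ (it reads $\underline{c} < \Phi^e(\underline{c},q,q_e,n)$), but your reading $V\Phi^e(\underline{c},q,q_e,n) > \underline{c}$ and $V\Phi^e(\overline{c},q,q_e,n) < \overline{c}$ is clearly the intended one, parallel to \Cref{assump:interior}; the paper's proof carries the same omission.
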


Denote $c^e(q_e)$ as the equilibrium threshold of the game with expert $q_e$ and $c^*(n)$ as the equilibrium threshold of the original game with $n$ agents. It follows that $c^e(q_e) < c^*(n)$, since the second term in \eqref{eq:phiexpert} is positive and thus $\Phi^e < \Phi$ as functions of $c$. Intuitively, the expert crowds out the search effort of the agents, as fewer of them decide to search since the return prospects decline. Moreover, we have that $\lim_{q_e \rightarrow 0} c^e(q_e) = c^*(n)$, since $\Phi^e$ approaches $\Phi$ as $q_e \rightarrow 0$. This implies that for sufficiently small $q_e$, we have 
\begin{equation} \label{eq:ce1}
c^*(n+1) < c^e(q_e) < c^*(n).
\end{equation}
Furthermore, since the expert is a non-strategic agent who searches regardless of the cost, adding an expert with $q_e = q$ crowds out individuals' search incentives more than adding an extra agent would. That is, we have that
\begin{equation} \label{eq:ce2}
    c^e(q) < c^*(n+1) < c^*(n).
\end{equation}
Together, \eqref{eq:ce1} and \eqref{eq:ce2} imply that there exists a critical expertise $\hat{q}_e \in (0,q)$ such that the equilibrium threshold in the game with an expert is equal to the equilibrium threshold in the game with an additional strategic agent, $c^e(\hat{q}_e) = c^*(n+1)$. The next proposition summarizes the above analysis.

\begin{proposition} \label{prop:cexpert}
    The critical expertise is given by $\hat{q}_e = qF(c^*(n+1))$. If $q_e < \hat{q}_e$, then $c^*(n+1) < c^e(q_e)$, while if $q_e > \hat{q}_e$, then $c^e(q_e) < c^*(n+1)$.
\end{proposition}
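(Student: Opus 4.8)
The plan is to reduce the statement to a comparison of two fixed-point equations and to exploit a clean identity between the expert's winning probability $\Phi^e$ and the no-expert winning probability $\Phi$. Recall from \Cref{prop:eqmexpert} that $c^e(q_e)$ is the unique solution of $\hat{c} = V\Phi^e(\hat{c},q,q_e,n)$, while by \Cref{prop:eqm} the threshold $c^*(n+1)$ is the unique solution of $\hat{c} = V\Phi(\hat{c},q,n+1)$. Thus it suffices to (a) identify the value $\hat{q}_e$ at which these two fixed-point problems share the solution $c^*(n+1)$, and (b) show that $c^e(q_e)$ moves monotonically in $q_e$, so that both inequalities follow from a single crossing at $\hat{q}_e$.

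For step (a), the key observation is that an expert who finds the object with probability $q_e$ is, from the viewpoint of any competing agent, indistinguishable from an extra strategic agent using threshold $c$ precisely when $q_e = qF(c)$. Fixing the common threshold $\hat{c}=c$, consider agent $i$'s winning probability $q\,\mathbb{E}[1/(1+K)]$, where $K$ is the number of \emph{other} agents who find the object. In the game with $n$ strategic agents and the expert, $K$ counts $n-1$ rivals who each find with probability $qF(c)$ plus the expert, who finds with probability $q_e$; in the $(n+1)$-agent game, $K$ counts $n$ rivals each finding with probability $qF(c)$. When $q_e = qF(c)$, the expert's indicator is itself $\mathrm{Bernoulli}(qF(c))$, so $K\sim\mathrm{Binomial}(n,qF(c))$ in both games and the winning probabilities coincide. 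This yields the pointwise identity $\Phi^e(c,q,qF(c),n) = \Phi(c,q,n+1)$ (which can also be verified directly by substituting $q_e=qF(c)$ into \eqref{eq:phiexpert} and collecting terms). Evaluating at $c=c^*(n+1)$ and setting $\hat{q}_e \equiv qF(c^*(n+1))$, the equilibrium condition \eqref{eq:symeqm} gives $V\Phi^e(c^*(n+1),q,\hat{q}_e,n)=V\Phi(c^*(n+1),q,n+1)=c^*(n+1)$, so $c^*(n+1)$ solves \eqref{eq:eqmexpert} at $q_e=\hat{q}_e$; by uniqueness in \Cref{prop:eqmexpert}, $c^e(\hat{q}_e)=c^*(n+1)$, pinning down the critical expertise.

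For step (b), note from \eqref{eq:phiexpert} that $\Phi^e(\hat{c},q,q_e,n)$ is affine and strictly decreasing in $q_e$, since the coefficient of $q_e$ is the negative of a strictly positive quantity (as already observed in the text). Hence increasing $q_e$ shifts the curve $\hat c\mapsto V\Phi^e(\hat c,\cdot)$ strictly downward at every $\hat c>\underline{c}$. Evaluating at $\hat c=c^e(q_e)$, where this curve meets the $45^\circ$ line, shows that for $q_e'>q_e$ the curve $V\Phi^e(\cdot,q_e')$ lies strictly below the line at $c^e(q_e)$; combined with the interiority ($V\Phi^e>\hat c$ near $\underline{c}$) and uniqueness from \Cref{prop:eqmexpert}, the fixed point must move left, exactly as in the single-crossing comparative statics of \Cref{prop:compstat}. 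Therefore $c^e(q_e)$ is strictly decreasing in $q_e$, and combining with $c^e(\hat{q}_e)=c^*(n+1)$ gives the claim: $q_e<\hat{q}_e$ implies $c^e(q_e)>c^e(\hat{q}_e)=c^*(n+1)$, and $q_e>\hat{q}_e$ implies $c^e(q_e)<c^*(n+1)$.

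The main obstacle is establishing the identity in step (a). The brute-force route—substituting $q_e=qF(c)$ into \eqref{eq:phiexpert} and collecting the $(1-qF(c))^n$ terms until the expression telescopes into $\bigl(1-(1-qF(c))^{n+1}\bigr)/\bigl((n+1)F(c)\bigr)$—is mechanical but error-prone. The probabilistic coupling argument above is the cleaner path: it makes transparent why the expert acts as a fractional agent and why $qF(c)$ is exactly the expertise that replicates one additional strategic searcher. Once the identity is in hand, both the value of $\hat{q}_e$ and the direction of the inequalities follow with essentially no further computation.
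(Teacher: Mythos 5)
Your proof is correct, and it strengthens the paper's argument in two ways worth noting. The paper's own proof presupposes the existence of the critical expertise $\hat{q}_e$ (established in the main text by an intermediate-value argument sandwiching $c^e(q_e)$ between $c^*(n+1)$ and $c^*(n)$ for small $q_e$, and below $c^*(n+1)$ at $q_e = q$), then equates the two indifference conditions $\Phi^e(c^e(\hat{q}_e),q,\hat{q}_e,n) = \Phi(c^*(n+1),q,n+1)$ and extracts $\hat{q}_e = qF(c^*(n+1))$ ``after some algebra.'' You instead verify the candidate value directly: your coupling observation---that an expert with $q_e = qF(\hat{c})$ is, from a rival's perspective, distributionally identical to an extra strategic agent using threshold $\hat{c}$, so that the number of competing finders is $\mathrm{Binomial}(n, qF(\hat{c}))$ in both games---yields the pointwise identity $\Phi^e(\hat{c},q,qF(\hat{c}),n) = \Phi(\hat{c},q,n+1)$ without computation, and uniqueness from \Cref{prop:eqmexpert} then pins down $c^e(\hat{q}_e) = c^*(n+1)$. (I checked the identity algebraically as well; it holds.) Your second improvement is the explicit monotonicity of $c^e(q_e)$ in $q_e$, derived from the fact that $\Phi^e$ is affine and strictly decreasing in $q_e$ together with the standard single-crossing argument. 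The paper never spells this out, yet the directional inequalities in the proposition genuinely require it (the main text's IVT argument gives existence of a crossing, not that the crossing is unique or that $c^e$ lies on the claimed side of $c^*(n+1)$ away from it). As a bonus, your route also delivers the paper's assertion \eqref{eq:ce2} ($c^e(q) < c^*(n+1)$) as a corollary of monotonicity plus $\hat{q}_e = qF(c^*(n+1)) < q$, rather than as a separate intuitive claim. The trade-off is length: given the main text's groundwork, the paper's proof is two lines; yours is self-contained and makes transparent \emph{why} $qF(c^*(n+1))$ is the right value, namely that the expert acts as a fractional agent calibrated to replicate one additional strategic searcher.
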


We now look at the probability of success when the expert is present. This probability, given by
\begin{equation*}
    P^e(q_e,n) \equiv (1-q_e)P(c^e(q_e),q,n) + q_e,
\end{equation*}
consists of two terms. If the expert does not find the object then the crowd succeeds with probability $P(c^e(q_e),q,n)$, while success is guaranteed if the expert succeeds. These two terms capture the two effects. First, there is the crowding-out effect, which decreases participation and therefore decreases the probability of finding the object. Second, there is the direct benefit of expert search, which increases the probability of finding the object. The natural question then is whether the first or the second effect dominates, that is, whether $P^e(q_e,n)$ is greater or smaller than $P^*(n)$.

Let us first consider the extreme cases. If $q_e = 1$, then success is guaranteed, as the direct benefit dominates. On the other extreme, $P^e(q_e,n) \rightarrow P^*(n)$ as $q_e \rightarrow 0$, since both effects vanish. One would then conjecture that as the expertise increases, the probability of finding the object would also increase. But this is not the case. Consider the specification from either \Cref{ex:probn} or \Cref{ex:probn2} and let $q_e = \hat{q}_e$. By \Cref{prop:cexpert}, we have
\begin{equation*}
\begin{aligned}
P^e(\hat{q}_e,n) &= (1-\hat{q}_e)P(c^e(\hat{q}_e),q,n) + \hat{q}_e = (1-\hat{q}_e)(1 - (1-qF(c^e(\hat{q}_e)))^n)+ \hat{q}_e \\ &= 1 - (1-\hat{q}_e)(1-qF(c^e(\hat{q}_e)))^n = P(c^*(n+1)).
\end{aligned}
\end{equation*}
Therefore, the probability of success with an expert equals the probability of success with an additional strategic agent. The values from \Cref{table:example} then show that the probability of success may decrease or increase with the addition of an outside expert.

This shows that for intermediate values of $q_e$, either the direct benefit or the crowding-out effect may dominate. In other words, there is non-monotonicity in the probability of success with respect to expertise. The implication is that when hiring an internal team, one must ensure that their expertise is sufficiently high, relative to that of the crowd.

Having multiple experts does not qualitatively change the result. This is because an expert is modeled as a non-strategic agent (always search) with a different probability of finding the object $q_e$.  Adding an expert shifts the equilibrium threshold by the same amount as adding a\textemdash perhaps fractional\textemdash player would. Therefore, the model could be extended, for instance, to a setting where there is a set of strategic agents and another set of experts. Furthermore, suppose there are two experts, with expertise $q_e$ and $q_f$. Then the equilibrium condition in \Cref{prop:eqmexpert} would be modified to $\hat{c} = V \Phi^{ef}(\hat{c},q,q_e,q_f,n)$, for an appropriate expression for $\Phi^{ef}$ and $c^{ef}(q_e,q_f)$ is the equilibrium threshold of this game. Observation \eqref{eq:ce1} would now read: for small $q_e$ and $q_f$, $c^*(n+1) < c^{ef}(q_e,q_f) < c^*(n)$, while observation \eqref{eq:ce2} would then read: $c^{ef}(q,q) < c^*(n+2) < c^*(n)$. These lead to the same conclusion that there are critical values $\hat{q}_e$ and $\hat{q}_f$, such that the new threshold is the same as the threshold in a game with $n+2$ strategic agents. With two experts, \Cref{ex:probn} and \Cref{ex:probn2} can also be applied to illustrate the same point, i.e. that adding experts can decrease the probability of success.


\subsection{Multiple Prizes and Optimal Prize Structure} \label{sec:multprize}

In this section, we extend our analysis to the case of multiple prizes and determine the optimal allocation when the sum of the prizes is fixed. The setup is as before, but with the modification that agents who found the object are uniformly randomly ranked and the agent with the $m$-th rank receives a prize $v^m$ ($m = 1,\dots,n$). We denote $\boldv = (v^1,\dots,v^n)$ as the prize structure and consider those that are monotone and non-negative, i.e. $\boldv \in \mathcal{V} \equiv \{\boldv : v^1 \geq \cdots \geq v^n \geq 0 \; \text{and} \; \sum_m v^m = V \}$.\footnote{Although it can be optimal to set a non-monotone prize structure (as will be apparent after \Cref{prop:optimalmultprize}), we do not consider one here.} The payoff of agent $i$ from~\eqref{eq:payoff} is now changed to
\begin{equation} \label{eq:multipayoff}
u_i(s_i, \bolds_{-i},c_i) = s_i\left(\sum_{m=1}^n p^m(\bolds_{-i}) v^m - c_i \right),
\end{equation}
where $p^m(\bolds_{-i})$ ($m = 1,\dots,n$) is the probability that agent $i$ finds the object and is ranked $m$-th, conditioning on searching. The expression for $p^m(\bolds_{-i})$ is given by
\begin{equation*}
p^m(\bolds_{-i}) = \left\{ \begin{array}{lcl}
     q \sum_{t=m-1}^{|S_{-i}|} \binom{|S_{-i}|}{t} \frac{q^t(1-q)^{|S_{-i}| - t}}{t+1} &  \mbox{if} & m-1 \leq |S_{-i}| \\ 
    0 & \mbox{if} & m-1 > |S_{-i}|.
\end{array}\right.
\end{equation*}
Note that the winner-takes-all contest is a special case with $v^1 = V$ and $p^1(\bolds_{-i}) = p(\bolds_{-i})$, as given in~\eqref{eq:payoffprob}. 

We now characterize the equilibrium of the game with the modified payoff given in~\eqref{eq:multipayoff}. We begin by noting that \Cref{prop:threshold} still holds, with essentially no modification to its proof. The equilibrium threshold vector, $\boldc^*$, if it is interior, satisfies the following system of indifference conditions: for all $i$, $c_i^* = \sum_{m=1}^n v^m \Psi^m(\boldc^*_{-i})$, where for $m = 1,\dots, n$, $\Psi^m(\hat{\boldc}_{-i}) \equiv \mathbb{E}_{\boldc_{-i}}\left[p^m\left(\boldsigma_{\hat{\boldc}_{-i}}(\boldc_{-i})\right) \right]$ is the probability that agent $i$ wins the $m$-th prize given that the other $n-1$ agents deploy some threshold strategies. Indeed, $\Psi^1 = \Psi$. Some important properties of $\Psi^m$'s are as follows.

\begin{proposition} \label{prop:psim}
    The family of functions $\Psi^m$ ($m = 1,\dots,n$) has the following properties:
    \begin{itemize}
        \item[(i)] $\sum_{m=1}^n \Psi^m = q$,
        \item[(ii)] $\Psi^m > \Psi^{m+1}$,
        \item[(iii)] $\Psi^m$ is strictly decreasing in $c_j$ if and only if $m = 1$,
        \item[(iv)] $\Psi^1(\underline{c},\dots,\underline{c}) = q$ and $\Psi^1(\overline{c},\dots,\overline{c}) = \frac{1-(1-q)^n}{n}$, 
        \item[(v)] for $m \neq 1$, $\Psi^m(\underline{c},\dots,\underline{c}) = 0$ and $\Psi^m(\overline{c},\dots,\overline{c}) = q \sum_{t=m-1}^{n-1} \binom{n-1}{t} \frac{q^t(1-q)^{n-1-t}}{t+1}$. 
    \end{itemize}
\end{proposition}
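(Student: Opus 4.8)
The plan is to prove the five properties of $\Psi^m$ by working directly from the definition $\Psi^m(\hat{\boldc}_{-i}) = \mathbb{E}_{\boldc_{-i}}[p^m(\boldsigma_{\hat{\boldc}_{-i}}(\boldc_{-i}))]$, exploiting the combinatorial structure of the $p^m(\bolds_{-i})$ and the fact that, when the other agents play threshold strategy $\hat{\boldc}_{-i}$, the indicator of each searching is Bernoulli and (on the symmetric diagonal) the number of other searchers is binomial. For property (i), I would fix the realization $\bolds_{-i}$ and show $\sum_{m=1}^n p^m(\bolds_{-i}) = q$ pointwise, then take expectations. The key observation is that conditional on searching, agent $i$ finds the object with probability $q$, and if s/he finds it s/he is ranked \emph{somewhere} among the finders with probability $1$; summing $p^m$ over all ranks $m$ therefore just recovers the total probability $q$ that s/he finds the object and is ranked at all. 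Concretely, $\sum_{m=1}^n p^m(\bolds_{-i}) = q\sum_{t=0}^{|S_{-i}|}\binom{|S_{-i}|}{t}\frac{q^t(1-q)^{|S_{-i}|-t}}{t+1}\cdot(t+1)$ after swapping the order of summation over $m$ and $t$ (each term with a given $t$ appears for $m=1,\dots,t+1$), which telescopes to $q\sum_t \binom{|S_{-i}|}{t}q^t(1-q)^{|S_{-i}|-t} = q$.

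For property (ii), I would again argue pointwise in $\bolds_{-i}$ that $p^m(\bolds_{-i}) > p^{m+1}(\bolds_{-i})$ whenever the latter is nonzero, since the defining sum for $p^{m+1}$ starts at $t = m$ while that for $p^m$ starts at $t = m-1$, so $p^m - p^{m+1}$ equals the single strictly positive $t = m-1$ term $q\binom{|S_{-i}|}{m-1}\frac{q^{m-1}(1-q)^{|S_{-i}|-m+1}}{m}$; when $p^{m+1} = 0$ strict inequality is immediate since $p^m \geq 0$ with strictly positive probability. Taking expectations preserves the strict inequality. Property (iii) is the subtlest: the claim is that only $\Psi^1$ is monotone in each $\hat c_j$. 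Increasing $\hat c_j$ raises agent $j$'s search probability, which unambiguously \emph{crowds out} agent $i$'s chance of winning the top prize (so $\Psi^1$ strictly decreases, consistent with Proposition~\ref{prop:psi}(ii)), but for lower prizes the effect is genuinely two-sided, because more searchers simultaneously reduces agent $i$'s probability of being ranked first yet \emph{increases} the probability that enough others find the object for a lower rank to be attainable and occupied by $i$. I expect to establish non-monotonicity of $\Psi^m$ for $m \neq 1$ either by exhibiting the sign of $\partial \Psi^m/\partial \hat c_j$ changing, or by a cleaner argument: since $\sum_m \Psi^m = q$ is constant in $\hat c_j$ by (i) and $\Psi^1$ is strictly decreasing, the remaining mass $\sum_{m\geq 2}\Psi^m$ must be strictly increasing, so the lower prizes cannot all be monotone decreasing; a small explicit computation then rules out monotonicity for each individual $m \neq 1$.

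Properties (iv) and (v) are boundary evaluations. Property (iv) restates Proposition~\ref{prop:psi}(iii) since $\Psi^1 = \Psi$, so I would simply cite it. For (v), at $\hat{\boldc}_{-i} = (\underline c,\dots,\underline c)$ every other agent searches with probability $F(\underline c) = 0$, so $|S_{-i}| = 0$ almost surely and $p^m = 0$ for $m \geq 2$ (as $m - 1 > 0 = |S_{-i}|$), giving $\Psi^m(\underline c,\dots,\underline c) = 0$; at $\hat{\boldc}_{-i} = (\overline c,\dots,\overline c)$ every other agent searches with probability $F(\overline c) = 1$, so $|S_{-i}| = n-1$ deterministically and $\Psi^m(\overline c,\dots,\overline c) = p^m$ evaluated at $|S_{-i}| = n-1$, which is exactly the stated expression $q\sum_{t=m-1}^{n-1}\binom{n-1}{t}\frac{q^t(1-q)^{n-1-t}}{t+1}$. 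The main obstacle is the strict non-monotonicity half of property (iii): showing not merely that some $\Psi^m$ fails to be monotone but pinning down that \emph{every} $\Psi^m$ with $m \neq 1$ is non-monotone, which I expect to handle via the constant-sum identity from (i) combined with direct inspection of the derivative's sign at the two ends of the interval, since $\Psi^m$ must rise from $0$ at $\underline c$ and can decrease elsewhere.
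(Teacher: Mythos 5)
Your proof is correct and follows essentially the same route as the paper: pointwise identities conditional on $|S_{-i}|$ (the summation swap yielding the factor $t+1$ for (i), the single surviving $t=m-1$ term for (ii)) followed by taking expectations, boundary evaluations of $|S_{-i}|=0$ and $|S_{-i}|=n-1$ for (iv) and (v), and citing Proposition~\ref{prop:psi} where $\Psi^1=\Psi$. On (iii), where you hedge between several routes, the paper uses exactly the argument you settle on at the end: by (v), $\Psi^m(\underline{c},\dots,\underline{c})=0$ while $\Psi^m(\overline{c},\dots,\overline{c})>0$ for $m\neq 1$, so $\Psi^m$ must increase somewhere and hence cannot be strictly decreasing; no sign analysis of the derivative or constant-sum bookkeeping is needed.
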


Some remarks are in order. First, because the agent wins some prize (not necessarily positive) with certainty if s/he finds the object, it must hold that $\sum_{m=1}^n \Psi^m = q$. Second, there is a higher probability of winning the first prize than the second. The intuition is that for a fixed number of agents who find the object, agent $i$'s ranking is uniformly random. Given this, the first prize is always available to agent $i$ if s/he finds the object, regardless of how many others found it as well. The second prize, however, is only available if at least one other agent finds it. This reasoning leads to the fact that $\Psi^m > \Psi^{m+1}$. Third, while the probability of winning the first prize goes down as more agents search, the probability of winning other prizes may go up. That is, $\Psi^m$ need not be strictly decreasing in $c_j$ for $m \neq 1$. To see this, consider $\Psi^2$. Intuitively, if the thresholds used by the other agents are very low, then there will be fewer participants and thus fewer agents finding the object. In turn, this makes agent $i$'s probability of being second lower as well, since there is no one to be second to. Increasing the thresholds of others makes them more likely to search and find the object, and thus increases agent $i$'s chance of being second.

Throughout the rest of this section, we impose assumptions on $\Psi^m$ to ensure uniqueness and interiority of the equilibrium threshold, as in the baseline case.

\begin{assumption} \label{assump:multsymmetric}
For $v \in \mathcal{V}$, $|\sum_m v^m \partial \Psi^m(\hat{\boldc}_{-i})/\partial \hat{c}_j| \neq 1$ for all $j$ and all $\hat{\boldc}_{-i} \in [\underline{c},\overline{c}]^{n-1}$.
\end{assumption}

\begin{assumption} \label{assump:multinterior}
For $v \in \mathcal{V}$, $\underline{c} < \sum_{m} v^m \Psi^m(\underline{c})$ and $\sum_{m} v^m \Psi^m(\overline{c}) < \overline{c}$.
\end{assumption}

We obtain

\begin{proposition} \label{prop:eqmmult}
		Under \Cref{assump:multsymmetric} and \Cref{assump:multinterior}, the unique equilibrium of the game with prize structure $\boldv$ is $\boldsigma_{c^{\boldv}}$. The equilibrium threshold $c^{\boldv}$ is the solution to
		\begin{equation} \label{eq:eqmmult}
		    \hat{c} = \sum_{m=1}^n v^m \Phi^m(\hat{c}),
		\end{equation}
        where
        \begin{equation*}
        \Phi^m(\hat{c}) \equiv q \sum_{k = m-1}^{n-1} \left\{ \binom{n-1}{k} F(\hat{c})^k (1-F(\hat{c}))^{n-1-k} \left[\sum_{t=m-1}^{k} \binom{k}{t} q^t(1-q)^{k-t} \frac{1}{t+1} \right]\right\}.
        \end{equation*}
	 \end{proposition}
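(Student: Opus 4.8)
The plan is to mirror the proof architecture already established for the winner-takes-all case in \Cref{prop:eqm}, since the multiple-prize game has the same qualitative structure once the payoff is reinterpreted. First I would recall that, by the remark preceding the proposition, \Cref{prop:threshold} carries over verbatim: any equilibrium strategy is a threshold strategy, so the only task is to pin down the threshold vector. With \Cref{assump:multsymmetric} and \Cref{assump:multinterior} in hand, the argument for uniqueness and interiority is exactly the one used for the baseline model. Specifically, agent $i$'s best-response threshold solves the indifference condition $\hat{c}_i = \sum_m v^m \Psi^m(\hat{\boldc}_{-i})$; \Cref{assump:multinterior} guarantees that this best response lies strictly inside $(\underline{c},\overline{c})$ at the corners, and \Cref{assump:multsymmetric} (the composite best-response map has slope never equal to unity in any coordinate) guarantees that the system of best-response curves crosses exactly once. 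By the symmetry of each $\Psi^m$ (the analogue of \Cref{prop:psi}(i), which holds for the same combinatorial reason), the unique fixed point must be symmetric, so all agents use a common threshold $c^{\boldv}$ solving the diagonal equation $\hat{c} = \sum_m v^m \Phi^m(\hat{c})$, where $\Phi^m(\hat{c}) \equiv \Psi^m(\hat{c},\dots,\hat{c})$ is the diagonal slice of $\Psi^m$.

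The substantive computational content of the proposition is the explicit formula for $\Phi^m(\hat{c})$, and I would derive it directly from the combinatorial definition of $p^m(\bolds_{-i})$. The key point is that when all other $n-1$ agents use threshold $\hat{c}$, each searches independently with probability $F(\hat{c})$ and, conditional on searching, finds the object with probability $q$. I would condition on $k$, the number of the other $n-1$ agents who \emph{search}: this contributes the factor $\binom{n-1}{k} F(\hat{c})^k (1-F(\hat{c}))^{n-1-k}$. Given $k$ searchers, the conditional winning-the-$m$-th-prize probability is obtained by taking the expectation of $p^m(\bolds_{-i})$ over those $k$ searchers, which is precisely $q \sum_{t=m-1}^{k} \binom{k}{t} q^t (1-q)^{k-t} \frac{1}{t+1}$ (here $t$ counts how many of the $k$ searchers actually find the object, and the lower limit $m-1$ reflects that at least $m-1$ \emph{other} finders are needed for agent $i$ to be ranked $m$-th). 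Summing over $k$ from $m-1$ to $n-1$ — since $k < m-1$ contributes zero, matching the case distinction in the definition of $p^m$ — yields exactly the displayed double-sum expression for $\Phi^m(\hat{c})$. I would also note as a consistency check that summing over $m$ recovers $\sum_m \Phi^m = q$, in line with \Cref{prop:psim}(i).

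I do not anticipate a deep obstacle here, since the result is essentially a bookkeeping extension of \Cref{prop:eqm}; the main care needed is in the combinatorial derivation of $\Phi^m$. The subtle step is correctly handling the two nested sums and their index ranges: the outer sum over searchers $k$ and the inner sum over finders $t$ must both start at $m-1$, and one must verify that the event ``agent $i$ is ranked $m$-th'' requires exactly $t \ge m-1$ other finders, with the uniform-random tiebreak among the $t+1$ total finders (including agent $i$) producing the $\frac{1}{t+1}$ factor. A second point warranting a sentence is justifying that uniqueness holds uniformly in $\boldv \in \mathcal{V}$, which is exactly why \Cref{assump:multsymmetric} and \Cref{assump:multinterior} are stated with the quantifier ``for $v \in \mathcal{V}$''; this ensures the conclusion is valid for every admissible prize structure, not merely for a fixed one. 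The remaining verifications — that $\sum_m v^m \Phi^m$ is continuous and that the fixed point is interior — follow immediately from the properties of $F$ and the assumptions, so I would state them briefly and conclude.
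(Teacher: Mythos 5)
Your overall architecture matches the paper's: the paper's own proof of this proposition is literally a one-sentence deferral to the proof of \Cref{prop:eqm} ``with appropriate modifications for the conditions for uniqueness and interiority,'' and your conditioning derivation of $\Phi^m$\textemdash outer binomial sum over the number $k$ of searchers, inner sum over the number $t$ of other finders, the $\frac{1}{t+1}$ factor from the uniform tiebreak, both indices starting at $m-1$\textemdash is exactly the right computation and is correct.

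The gap is in your uniqueness step. You assert the argument is ``exactly the one used for the baseline model,'' with \Cref{assump:multsymmetric} guaranteeing that the best-response curves ``cross exactly once.'' That is not how the baseline proof works, and the missing piece is precisely the ``appropriate modification'' the paper alludes to. In the proof of \Cref{prop:eqm}, the slope condition (via the Mean Value Theorem) only rules out \emph{asymmetric} equilibria (Step 1); uniqueness of the common threshold is obtained separately (Step 3) from the fact that $\Phi$ is \emph{strictly decreasing}, so it crosses the identity at most once. In the multiple-prize game this second ingredient fails verbatim: by \Cref{prop:psim}(iii), $\Psi^m$ (hence $\Phi^m$) is decreasing only for $m=1$ and can be increasing for $m \geq 2$; indeed, for the equal-prize structure $\boldv = (V/n,\dots,V/n)$ the map $\sum_m v^m \Phi^m \equiv Vq/n$ is constant, not strictly decreasing. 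The repair is short but must be stated: since $\boldv \in \mathcal{V}$ is monotone, summation by parts gives $\sum_{m} v^m \Phi^m = \sum_{m} (v^m - v^{m+1}) \sum_{l \leq m} \Phi^l$ (with $v^{n+1} \equiv 0$), and each partial sum $\sum_{l \leq m} \Phi^l(\hat{c})$ is non-increasing in $\hat{c}$: conditional on $t$ other finders, the probability of ranking at most $m$-th is $\min(m,t+1)/(t+1)$, which is non-increasing in $t$, and the distributions of $t$ and of the number of searchers are stochastically increasing in $\hat{c}$. Hence the right-hand side of \eqref{eq:eqmmult} is non-increasing while the left-hand side is strictly increasing, giving at most one solution; \Cref{assump:multinterior} and continuity give existence and interiority. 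With that substitution (and keeping the MVT argument, not the slope assumption alone, as the reason any equilibrium is symmetric), your proof is complete.
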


We focus on the unique equilibrium and ask which prize allocation leads to the highest and lowest levels of participation. The properties from \Cref{prop:psim} and the equilibrium characterization imply the following result.\footnote{A similar result has been noted in \citet{sarne2017} in a binary contest model, where agents have the same cost of participation (i.e. cost of search, in our terminology) and the quality of submission is drawn after the decision to participate.} 

\begin{proposition} \label{prop:multprize}
For any $\boldv \in \mathcal{V}$,
\begin{equation*} 
V \Phi^1(\hat{c}) \geq \sum_{m=1}^n v^m \Phi^m(\hat{c}) \geq \frac{V}{n}q
\end{equation*}
for all $\hat{c}$. It follows that
\begin{itemize}
    \item[(i)] the prize structure $\boldv = (V,0,\dots,0)$, i.e. the winner-takes-all contest, maximizes $c^{\boldv}$.
    \item[(ii)] the prize structure $\boldv = (V/n,\dots,V/n)$ minimizes $c^{\boldv}$.
\end{itemize}
\end{proposition}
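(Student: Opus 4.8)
The plan is to prove the chain of inequalities
\[
V\Phi^1(\hat c) \;\geq\; \sum_{m=1}^n v^m \Phi^m(\hat c) \;\geq\; \frac{V}{n}q
\]
and then read off parts (i) and (ii) from the equilibrium characterization in \Cref{prop:eqmmult}. The two bounds correspond exactly to the two endpoints of the simplex $\mathcal V$: the winner-takes-all vector $(V,0,\dots,0)$ and the egalitarian vector $(V/n,\dots,V/n)$. The key structural facts I would invoke are the properties of the $\Psi^m$ collected in \Cref{prop:psim}, specialized to the diagonal, since on the diagonal $\Phi^m(\hat c) = \Psi^m(\hat c,\dots,\hat c)$. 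In particular I will use (i) the \emph{summation identity} $\sum_{m=1}^n \Phi^m(\hat c) = q$ and (ii) the \emph{ordering} $\Phi^1(\hat c) \geq \Phi^2(\hat c) \geq \cdots \geq \Phi^n(\hat c)$, both of which follow directly from the analogous statements for $\Psi^m$ evaluated with all coordinates equal to $\hat c$.

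First I would prove the \emph{upper} bound. Fix $\hat c$ and write $\phi_m = \Phi^m(\hat c)$. Since $\boldv \in \mathcal V$ satisfies $\sum_m v^m = V$, the quantity $\sum_m v^m \phi_m$ is a weighted average of the $\phi_m$ with nonnegative weights summing to $V$. Because $\phi_1 = \max_m \phi_m$ by the ordering property, any reallocation of prize mass onto the first coordinate can only increase the weighted sum; formally, $\sum_m v^m \phi_m \leq \phi_1 \sum_m v^m = V\phi_1$, with equality when $\boldv=(V,0,\dots,0)$. This is the clean, one-line argument and I expect it to go through immediately.

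The \emph{lower} bound is the more delicate of the two and is where the main obstacle lies. A naive ``weighted average $\geq$ minimum'' argument would only give $\sum_m v^m \phi_m \geq V \phi_n$, which is weaker than $\frac{V}{n}q$ because $\phi_n$ can be far below the average $q/n$. The correct approach is to exploit the monotonicity constraint $v^1 \geq \cdots \geq v^n \geq 0$ built into $\mathcal V$ together with the ordering $\phi_1 \geq \cdots \geq \phi_n$. The heuristic is that among all monotone prize vectors with fixed total $V$, the sum $\sum_m v^m \phi_m$ is minimized by the flattest one, $(V/n,\dots,V/n)$, for which the value is $\frac{V}{n}\sum_m \phi_m = \frac{V}{n}q$ by the summation identity. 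To turn this into a proof I would use an Abel summation / rearrangement argument: writing the monotone $v^m$ as a nonnegative combination of the ``staircase'' vectors $e_k \equiv \frac{1}{k}(1,\dots,1,0,\dots,0)$ (the vector that is $1/k$ on the first $k$ coordinates and $0$ afterwards), every $\boldv \in \mathcal V$ is a convex combination $\boldv = V\sum_k \lambda_k e_k$ with $\lambda_k \geq 0$, $\sum_k \lambda_k = 1$. Then $\sum_m v^m \phi_m = V\sum_k \lambda_k \bigl(\tfrac{1}{k}\sum_{m\le k}\phi_m\bigr)$, and since the partial averages $\frac{1}{k}\sum_{m\le k}\phi_m$ are decreasing in $k$ (a consequence of the $\phi_m$ being decreasing), each term is at least the $k=n$ term, namely $\frac{1}{n}\sum_{m=1}^n\phi_m = \frac{q}{n}$; taking the convex combination preserves the bound and yields $\sum_m v^m\phi_m \geq \frac{V}{n}q$. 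The crux is establishing this extreme-point decomposition of $\mathcal V$ and verifying that the partial averages are monotone; once that is in place the inequality is immediate.

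Finally I would deduce the two optimality claims. By \Cref{prop:eqmmult} the equilibrium threshold $c^{\boldv}$ is the unique fixed point of $\hat c = \sum_m v^m\Phi^m(\hat c)$, i.e.\ of the best-response map $\hat c \mapsto \sum_m v^m\Phi^m(\hat c)$. The inequalities just proved show that pointwise in $\hat c$ this map lies above the map $\hat c \mapsto V\Phi^1(\hat c)$ for the winner-takes-all structure only at the top and below it otherwise; concretely, the winner-takes-all map dominates every other map, and the egalitarian map is dominated by every other map. Since each $\Phi^m$ (and hence each weighted sum) is continuous and the fixed-point map is decreasing in $\hat c$ with slope bounded away from unity by \Cref{assump:multsymmetric}, a map lying uniformly above another has a weakly larger fixed point — the same single-crossing comparison used in the comparative statics of \Cref{prop:compstat} and illustrated in \Cref{fig:example1}. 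Applying this monotone-comparison principle to the upper bound gives $c^{\boldv} \leq c^{(V,0,\dots,0)}$ for all $\boldv$, establishing (i), and applying it to the lower bound gives $c^{\boldv} \geq c^{(V/n,\dots,V/n)}$, establishing (ii).
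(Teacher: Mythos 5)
Your proof of the two pointwise inequalities is correct and follows essentially the same route as the paper: the paper phrases the bounds as two linear programs over the monotone simplex $\mathcal{V}$ and asserts that $(V,0,\dots,0)$ and $(V/n,\dots,V/n)$ solve them, invoking the ordering $\Phi^1 > \cdots > \Phi^n$ and the identity $\sum_{m=1}^n \Phi^m = q$ from \Cref{prop:psim}. Your staircase decomposition $\boldv = V\sum_k \lambda_k e_k$ together with the monotonicity of the partial averages $\frac{1}{k}\sum_{m \leq k}\Phi^m$ is precisely the extreme-point argument that solves those programs, so you are supplying details the paper leaves implicit rather than taking a different route.

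One step in your final paragraph is justified incorrectly, although the conclusion survives. You claim the map $\hat{c} \mapsto \sum_m v^m \Phi^m(\hat{c})$ is decreasing, citing \Cref{assump:multsymmetric}. That assumption only rules out slopes of absolute value one; it does not give monotonicity, and in fact the weighted map need not be decreasing: by \Cref{prop:psim}(iii), $\Phi^m$ for $m \neq 1$ can be \emph{increasing} in $\hat{c}$, so "each weighted sum is decreasing" is false in general. The repair needs less than you invoke. For (ii) no monotonicity is required at all: the lower bound is the constant $\frac{V}{n}q$, whose fixed point is itself, so $c^{\boldv} = \sum_m v^m \Phi^m(c^{\boldv}) \geq \frac{V}{n}q = c^{(V/n,\dots,V/n)}$ immediately. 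For (i), only the \emph{dominating} map $V\Phi^1$ needs to be decreasing, which it is by \Cref{prop:compstat}: if $c^{\boldv}$ exceeded the winner-takes-all threshold $c^{(V,0,\dots,0)}$, then $c^{\boldv} = \sum_m v^m \Phi^m(c^{\boldv}) \leq V\Phi^1(c^{\boldv}) \leq V\Phi^1\bigl(c^{(V,0,\dots,0)}\bigr) = c^{(V,0,\dots,0)} < c^{\boldv}$, a contradiction.
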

Consequently, since $P$ is increasing in the threshold $\hat{c}$, setting the contest to be winner-takes-all maximizes the probability of success. Note, however, that maximizing the probability of success is not the principal's objective when $V$ is fixed, but multiple prizes are allowed. The principal's problem is to choose $\boldv \in \mathcal{V}$ to maximize
\begin{equation} \label{eq:principalmult}
    W P(c^{\boldv}) - \sum_{m=1}^n v^m P^m(c^{\boldv})  
\end{equation}
subject to \eqref{eq:eqmmult}, where $P^m(\hat{c})$ ($m = 1,\dots,n$) is the probability that at least $m$ agents find the object and is given by
\begin{equation*} \label{eq:Pm}
    P^m(\hat{c}) = \sum_{k = m}^n \binom{n}{k} F(\hat{c})^k(1-F(\hat{c}))^{n-k} \sum_{t=m}^k \binom{k}{t} q^t (1-q)^{k-t}.
\end{equation*}
The objective function in \eqref{eq:principalmult} is comprised of the principal's gain from the object if it is found and the cost from paying out multiple prizes, which depends on the number of agents that actually found it. Recall that in this section we keep $V$ fixed and ask the question of how to distribute $V$ among the multiple prizes to maximize the principal's objective, while in \Cref{sec:optprize}, we keep the number of prizes fixed at one and ask which $V$ maximizes the principal's objective. The two problems are indeed related as will become clear.

Before characterizing the optimal prize structure, we offer an illustrative example.

\begin{example} \label{ex:prizestructure}
For $n = 2$, $v^1 + v^2 = V$ and the prize structure can be characterized by one variable $v^1$. Suppose further that $F \sim \mathrm{U}[0,1]$. The principal maximizes
\begin{equation*}
    \mathcal{U}(v^1) \equiv 2(W-v^1)q c^{\boldv} - (W+V-2v^1)q^2 (c^{\boldv})^2
\end{equation*}
subject to the condition that the equilibrium threshold $c^{\boldv}$ solves
\begin{equation*}
    c^{\boldv} = v^1 \underbrace{\left( q-\frac{q^2}{2}c^{\boldv}\right)}_{\Phi^1(c^{\boldv})} + (V-v^1) \underbrace{\frac{q^2}{2}c^{\boldv}}_{\Phi^2(c^{\boldv})} = v^1q - (2v^1-V)\frac{q^2}{2}c^{\boldv}.
\end{equation*}
Combining yields
\begin{equation*}
    \mathcal{U}(v^1) = 2(W-v^1)q \frac{v^1 q}{1 + (2v^1 - V)\frac{q^2}{2}} - (W+V-2v^1)q^2 \left(\frac{v^1 q}{1 + (2v^1 - V)\frac{q^2}{2}}\right)^2.
\end{equation*}
Consider $W = 2$, $V = 1$, and $q = 1$. We then have that $\mathcal{U}(1) = 8/9 < 24/25 = \mathcal{U}(3/4)$. Therefore, setting $v^1 = V$ is not optimal.
\qed
\end{example}

\Cref{ex:prizestructure} illustrates that the solution \eqref{eq:principalmult} is typically not winner-takes-all and that the principal's problem is highly non-linear and complex. However, as before an astute observation allows us to provide a complete characterization of the optimal prize structure. Recall that in the case with one prize, the probability that an agent wins is the probability that the object is found, divided by the expected number of agents who search. Although more difficult to see algebraically, the intuition extends to each of the multiple prizes. That is, the probability that an agent wins the $m$-th prize is the probability that at least $m$ agents find the object, divided by the expected number of agents who search: $\Phi^m(\hat{c}) = P^m(\hat{c})/(nF(\hat{c}))$. Thus, \eqref{eq:eqmmult} becomes $n \hat{c} F(\hat{c}) = \sum_{m=1}^n v^m P^m(\hat{c})$ and analogous to the optimal prize problem in \Cref{sec:optprize}, we can view the principle's optimal prize structure problem as first choosing an optimal threshold, then finding a prize structure(s) that achieves it.

Define
\begin{equation*}
    \mathcal{C}(V) \equiv \left\{\hat{c}: \hat{c} = \sum_{m=1}^n v^m \Phi^m(\hat{c}) \:\: \text{for some} \:\: \boldv \in \mathcal{V} \right\}
\end{equation*}
to be the set of achievable equilibrium thresholds, given that the prize structure must be in $\mathcal{V}$. The principal's problem effectively boils down to

\begin{equation} \label{eq:principalmult2}
    \max_{\hat{c} \in \mathcal{C}(V) } -W(1-qF(\hat{c}))^n - n \hat{c}F(\hat{c}),
\end{equation}
which is indeed the same problem as in \eqref{eq:principal} with a modified domain of optimization. Denote the solution to \eqref{eq:principalmult2} by $c^{\boldv^*}$ and the corresponding optimal prize structure by $\boldv^*$. We obtain

\begin{proposition} \label{prop:optimalmultprize}
Suppose $F$ satisfies \Cref{assump:principalcostdistribution}, then there exists $\underline{W}^V$ and $\overline{W}^V$ such that
    \begin{enumerate}
        \item[(i)] if $W \leq \underline{W}^V$, then $c^{\boldv^*} = Vq/n$ and $\boldv^* = (V/n,\dots,V/n)$,
        \item[(ii)] if $\underline{W}^V < W < \overline{W}^V$, then $c^{\boldv^*}$ is the unique fixed point of $\Omega(\hat{c})$ and any prize structure that solves $c^{\boldv^*} = \sum_{m=1}^n v^m \Phi^m(c^{\boldv^*})$ is optimal,
        \item[(iii)] if $W \geq \overline{W}^V$, then $c^{\boldv^*} = c^*(V)$ and $\boldv^* = (V,0,\dots,0)$.
    \end{enumerate}
\end{proposition}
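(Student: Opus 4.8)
The plan is to build entirely on the reduction, carried out in the text just before the statement, of the principal's problem to the one–dimensional program \eqref{eq:principalmult2}, namely maximizing $\mathcal{W}(\hat{c}) = -W(1-qF(\hat{c}))^n - n\hat{c}F(\hat{c})$ over the restricted domain $\mathcal{C}(V)$. Once this reduction is in hand, the whole argument rests on two ingredients: (a) $\mathcal{C}(V)$ is a compact interval whose endpoints can be identified explicitly, and (b) under \Cref{assump:principalcostdistribution} the objective $\mathcal{W}$ is single-peaked, so maximizing it over an interval reduces to locating its unconstrained peak relative to the two endpoints.

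First I would show that $\mathcal{C}(V) = [\,Vq/n,\; c^*(V)\,]$. For each $\boldv \in \mathcal{V}$ the equilibrium threshold $c^{\boldv}$ is, by \Cref{prop:eqmmult}, the \emph{unique} fixed point of the map $\hat{c}\mapsto \sum_{m} v^m \Phi^m(\hat{c})$, which is continuous (indeed affine) in $\boldv$; uniqueness plus continuity of the defining equation makes $\boldv\mapsto c^{\boldv}$ continuous. Since $\mathcal{V}$ is convex, hence connected, its image $\mathcal{C}(V)$ is an interval. The two endpoints come from \Cref{prop:multprize}: winner-takes-all maximizes $c^{\boldv}$ and reduces \eqref{eq:eqmmult} to $\hat{c}=V\Phi^1(\hat{c})=V\Phi(\hat{c})$, whose solution is $c^*(V)$; the equal-prize structure minimizes $c^{\boldv}$ and, using $\sum_{m}\Phi^m\equiv q$ (the diagonal slice of \Cref{prop:psim}(i)), gives $\hat{c}=(V/n)\sum_m\Phi^m(\hat{c})=Vq/n$. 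Thus $\mathcal{C}(V)$ is exactly the interval between these values.

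Next I would record the single-peakedness of $\mathcal{W}$. Differentiation yields $\mathcal{W}'(\hat{c}) = n f(\hat{c})\,[\Omega(\hat{c}) - \hat{c}]$, so the sign of $\mathcal{W}'$ agrees with that of $\Omega(\hat{c})-\hat{c}$. Under \Cref{assump:principalcostdistribution} the map $\hat{c}\mapsto\Omega(\hat{c})-\hat{c}$ is strictly decreasing, because $(1-qF(\hat{c}))^{n-1}$ is decreasing, $F/f$ is non-decreasing, and $-\hat{c}$ is decreasing; hence it crosses zero at most once, $\mathcal{W}$ is strictly increasing then strictly decreasing, and its interior peak sits precisely at the fixed point $\hat{c}^{\circ}$ of $\Omega$. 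This is exactly the structure established in \Cref{prop:principal}. Because $\Omega$ is strictly increasing in $W$, the peak $\hat{c}^{\circ}=\hat{c}^{\circ}(W)$ is increasing in $W$, which is what lets the three regimes partition the $W$-axis monotonically.

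Finally I would partition according to where $\hat{c}^{\circ}(W)$ falls relative to the endpoints of $\mathcal{C}(V)$, defining $\underline{W}^V$ and $\overline{W}^V$ as the (unique, by monotonicity in $W$) values solving $\hat{c}^{\circ}(\underline{W}^V)=Vq/n$ and $\hat{c}^{\circ}(\overline{W}^V)=c^*(V)$. If $W\le\underline{W}^V$ then $\hat{c}^{\circ}\le Vq/n$, so $\mathcal{W}$ is decreasing on $\mathcal{C}(V)$ and the maximizer is the left endpoint $Vq/n$, attained by equal prizes (case (i)); if $W\ge\overline{W}^V$ then $\hat{c}^{\circ}\ge c^*(V)$, so $\mathcal{W}$ is increasing and the maximizer is the right endpoint $c^*(V)$, attained by winner-takes-all (case (iii)); for intermediate $W$ the peak lies in the interior of $\mathcal{C}(V)$, so $c^{\boldv^*}=\hat{c}^{\circ}$ is the $\Omega$-fixed point, and by the intermediate value theorem applied to the continuous map $\boldv\mapsto c^{\boldv}$ every $\boldv\in\mathcal{V}$ with $c^{\boldv}=\hat{c}^{\circ}$ is optimal (case (ii)). The main obstacle is the clean identification of $\mathcal{C}(V)$ as a \emph{full} interval: the endpoint values drop out of \Cref{prop:multprize}, but the connectedness step genuinely requires the continuity of $c^{\boldv}$ in $\boldv$, which I would secure from uniqueness of the fixed point in \Cref{prop:eqmmult} together with the never-unit-slope condition of \Cref{assump:multsymmetric}.
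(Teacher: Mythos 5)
Your proposal is correct and follows essentially the same route as the paper: identify $\mathcal{C}(V) = [Vq/n,\, c^*(V)]$ using \Cref{prop:multprize}, then rerun the single-peaked argument from the proof of \Cref{prop:principal} on this restricted interval, with $\underline{W}^V$ and $\overline{W}^V$ defined by where the peak of $\mathcal{W}$ (the fixed point of $\Omega$, which is increasing in $W$) crosses the two endpoints. In fact, your connectedness argument for $\mathcal{C}(V)$ — continuity of $\boldv \mapsto c^{\boldv}$ from uniqueness of the fixed point in \Cref{prop:eqmmult}, plus convexity of $\mathcal{V}$, so the image is an interval — is spelled out more carefully than in the paper, which simply asserts that \Cref{prop:multprize} and \Cref{assump:multinterior} give the interval form.

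The one piece of the paper's proof you bypass is the verification of the identity $\Phi^m(\hat{c}) = P^m(\hat{c})/(nF(\hat{c}))$. You invoke the reduction to \eqref{eq:principalmult2} as ``carried out in the text just before the statement,'' but the text only \emph{asserts} this identity (calling it ``more difficult to see algebraically''); the paper proves it as the first and longest step of the proof of this very proposition, via the combinatorial identity $n\binom{n-1}{k-1} = k\binom{n}{k}$ and an index shift. Without that identity, the payout term $\sum_m v^m P^m(c^{\boldv})$ in the principal's actual objective \eqref{eq:principalmult} cannot be replaced by $n c^{\boldv} F(c^{\boldv})$, so the equivalence between maximizing \eqref{eq:principalmult} over $\boldv \in \mathcal{V}$ and maximizing \eqref{eq:principalmult2} over $\hat{c} \in \mathcal{C}(V)$ — and hence your conclusions about which prize structures $\boldv^*$ are optimal — is unjustified. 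This is the only substantive omission; everything else in your argument is sound and matches the paper's strategy.
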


The interpretation of \Cref{prop:optimalmultprize} is analogous to that of \Cref{prop:principal} in that the threshold $c^{\boldv^*}$ increases with the principal's valuation of the object. Case (i) of \Cref{prop:optimalmultprize} shows that it may be optimal to consider a non-monotonic prize structure, i.e. one that offers a higher prize to a lower rank. Moreover, the optimal prize structure in case (ii) is typically not unique (except for when $n=2$) due to the extra degrees of freedom.

In both of the principal's problems, \eqref{eq:principal} and \eqref{eq:principalmult2}, once the optimal equilibrium threshold is chosen in the achievable set, the prize value $V^*$ or the prize structure $\boldv^*$ that achieves it is the optimal one. Effectively, what matters for the principal is the level of participation of the crowd, captured by the equilibrium threshold, and not how such a level is achieved. If the principal were able to choose both the prize value and how it is allocated, then it is without loss of generality to consider only the winner-takes-all structure. The ability to allocate multiple prizes essentially affords the principal a way to lower equilibrium participation, while still using up the prize budget. However, this could also be attained by allowing free disposal in a winner-takes-all prize structure.


\section{Discussion and Conclusion} \label{sec:discuss}

As the leading examples suggest, crowdsearch is ubiquitous in economics. We have provided a simple model to study the important dimensions along which crowdsearch can be designed. Of course, numerous further directions can be pursued. For instance, as alluded to in the paper, one might introduce entry checks regarding the reputation and past achievements of agents to build a favorable pool for the search. Alternatively, would the opposite approach (only allowing newcomers) be beneficial in the search, as this would motivate many to participate? Also, one could consider a broader menu of rewards, as agents may be motivated by monetary rewards as well as by reputation gains, which could be documented by success certificates and which would be valuable as an entry ticket for future programs.

We end by discussing several other assumptions that can be relaxed and extended without conceptual difficulty.

\textbf{Multiple Objects}. The model can be straightforwardly extended to allow for multiple types and multiplicity of objects. Suppose there are $L$ types of objects, indexed by $l \in \{1,\dots,L\}$. The number of objects of type $l$ is a discrete random variable $R^l$ distributed on $\{0,1,\dots\}$ with finite expectation.\footnote{$R^l = 0$ means that the object does not exist and the support of $R^l$ can be finite or infinite.} Each object of type $l$ can be found independently with probability $q^l$ by an agent and yields reward $V^l$ per object. The (interior and symmetric) equilibrium threshold $c^*$ solves: $\hat{c} = \sum_{l=1}^L V^l \mathbb{E}[R^l] \Phi(\hat{c}, q^l,n)$. Many properties of equilibrium carry over from the baseline model, since the right-hand side of the equilibrium condition is a linear combination of $\Phi$'s. The principal's objective would then be to maximize $\sum_{l=1}^L (W^l-V^l)\mathbb{E}[R^l] P(c^*,q^l,n)$ subject to $c^*$ being the equilibrium threshold. Characterizing the optimal $V^l$'s proceeds in a similar manner.

\textbf{Negative Costs}. The equilibrium characterization extends if we allow $\underline{c} < 0$, that is, when there are agents with inherent gain from search. This implies that $F(0) \in (0,1)$, an assumption made in \citet{palfrey1985} and \citet{dziuda2021}, and other models with binary decisions and private costs. In this case, we have the same equilibrium condition as before. For large contests, we have that $c_n \rightarrow \max\{\underline{c},0\}$ and indeed $P_n \rightarrow 1$ if $\underline{c} \leq 0$. For the convergence rate, we have that when $\underline{c} < 0$, $c_n \in \Theta(n^{-1})$ since $F(c_n) \rightarrow F(0) > 0$. Moreover, the left-hand side of~\eqref{eq:Pncondition} goes to a constant while the right-hand side of~\eqref{eq:Pncondition} goes to 0, since $f(0)$ is finite by assumption. This means that without additional assumptions, $P_n$ eventually increases for all $F$ with $\underline{c} < 0$.

\textbf{Cost Distributions}. We assume that agents have the same distribution for costs in the baseline model as this is natural when agents do not know the identity of others. Since we impose very minimal assumptions on $F$, we think that a broad range of applications is covered. The distribution can be unbounded as having $\underline{c} < 0$ does not change the results much as discussed above. The distribution could also be multimodal, perhaps modeling multiple pools of population. Further, relaxing the common cost distribution assumption would make the equilibrium threshold values different similar to the model with heterogeneous $q_i$'s in \Cref{sec:hetero}. For example, suppose the cost distribution of agent 2 first-order stochastically dominates the cost distribution of agent 1. Then, $c_1^* \geq c_2^*$ since agent 2 would be more hesitant to expend effort, thinking that agent 1 is more likely to have a low cost.



\clearpage

\singlespacing

\bibliographystyle{apalike}
\bibliography{crowdsearch}

\clearpage

\doublespacing

\appendix


\section{Mathematical Preliminaries}

This section states some facts used in the main text and the proofs. We begin with a generalization of the binomial theorem.

\begin{fact} \label{lemma:modbinom}
For $n \in \mathbb{N}$ and $x \neq 0$, $\sum_{k=0}^n \binom{n}{k} \frac{x^ky^{n-k}}{k+1} = \frac{1}{n+1}\frac{(x+y)^{n+1}-y^{n+1}}{x}.$
\end{fact}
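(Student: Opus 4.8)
The plan is to reduce the left-hand side to a genuine binomial expansion by absorbing the factor $1/(k+1)$ into the binomial coefficient. The key algebraic identity I would use is the absorption identity
\[
\frac{1}{k+1}\binom{n}{k} = \frac{1}{n+1}\binom{n+1}{k+1},
\]
which follows immediately from writing both sides as $\frac{n!}{(k+1)!\,(n-k)!}$. This is the one step that does the real work; everything after it is bookkeeping.

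Substituting this identity into the sum and pulling the constant $\frac{1}{n+1}$ out front, I would write
\[
\sum_{k=0}^n \binom{n}{k}\frac{x^k y^{n-k}}{k+1} = \frac{1}{n+1}\sum_{k=0}^n \binom{n+1}{k+1} x^k y^{n-k} = \frac{1}{(n+1)x}\sum_{k=0}^n \binom{n+1}{k+1} x^{k+1} y^{n-k},
\]
where I have used $x \neq 0$ to multiply and divide by $x$. Reindexing with $j = k+1$ turns the last sum into $\sum_{j=1}^{n+1}\binom{n+1}{j} x^j y^{(n+1)-j}$, which is exactly the binomial expansion of $(x+y)^{n+1}$ with the $j=0$ term $y^{n+1}$ removed. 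Hence the sum equals $(x+y)^{n+1}-y^{n+1}$, and dividing by $(n+1)x$ gives the claimed right-hand side.

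I do not expect any genuine obstacle here, since the statement is a routine variant of the binomial theorem; the only mild subtlety is remembering to use $x \neq 0$ when shifting the exponent and reindexing. If one preferred to avoid the absorption identity, an equally short route is integration: writing $\frac{x^k}{k+1} = \frac{1}{x}\int_0^x t^k\,dt$, interchanging the (finite) sum with the integral, and applying the ordinary binomial theorem to $\sum_{k=0}^n \binom{n}{k} t^k y^{n-k} = (t+y)^n$ yields $\frac{1}{x}\int_0^x (t+y)^n\,dt = \frac{(x+y)^{n+1}-y^{n+1}}{(n+1)x}$. I would present the combinatorial argument as the main proof, as it is purely algebraic and requires no analytic machinery.
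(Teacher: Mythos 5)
Your proof is correct and follows essentially the same route as the paper's: both hinge on the absorption identity $\binom{n}{k}\frac{1}{k+1} = \frac{1}{n+1}\binom{n+1}{k+1}$, then shift the exponent using $x \neq 0$ and recognize the resulting sum as the binomial expansion of $(x+y)^{n+1}$ minus its $y^{n+1}$ term. The integration alternative you mention is a nice aside, but your main argument matches the paper's step for step.
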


\begin{proof}[Proof of \Cref{lemma:modbinom}] Note that $\binom{n}{k} \frac{1}{k+1} = \binom{n+1}{k+1}\frac{1}{n+1}$.
Then,
\[
\begin{aligned} 
\sum_{k=0}^n & \binom{n}{k} \frac{x^k y^{n-k}}{k+1} =  \sum_{k=0}^n \binom{n+1}{k+1} \frac{x^k y^{n-k}}{n+1} = \sum_{k=0}^{n} \binom{n+1}{k+1} \frac{x^{k+1} y^{(n+1)-(k+1)}}{(n+1)x}
\\ &= \frac{1}{(n+1)x} \left[ \sum_{k=0}^{n+1} \binom{n+1}{k} x^{k} y^{n+1-k} - y^{n+1} \right] 
= \frac{1}{n+1} \frac{(x+y)^{n+1} - y^{n+1}}{x}.
\end{aligned}
\]
\end{proof}


\begin{fact}[Bernoulli's Inequality] \label{lemma:bernoulli}
For $x \in (0,1)$ and a positive integer $m$, $(1-x)^m < (1+mx)^{-1}$.
\end{fact}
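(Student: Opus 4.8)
The plan is to prove the equivalent inequality $(1-x)^m(1+mx) < 1$ by induction on $m$. This reformulation is legitimate because $1+mx > 0$ for $x \in (0,1)$, so multiplying through by the denominator preserves the direction of the inequality and avoids having to carry a reciprocal through the induction.

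For the base case $m=1$, I would simply note that $(1-x)(1+x) = 1 - x^2 < 1$, since $x^2 > 0$ on $(0,1)$; this already gives $(1-x) < (1+x)^{-1}$ with strict inequality. For the inductive step, assuming $(1-x)^m < (1+mx)^{-1}$, I would multiply both sides by $1-x > 0$ to obtain $(1-x)^{m+1} < (1-x)(1+mx)^{-1}$, and then show the right-hand side is at most $(1+(m+1)x)^{-1}$. Cross-multiplying the two positive denominators reduces this to verifying $(1-x)(1+(m+1)x) < 1+mx$; expanding the left-hand side yields $1 + mx - (m+1)x^2$, which is strictly less than $1+mx$ because $(m+1)x^2 > 0$. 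Chaining the two strict inequalities completes the step.

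The argument is entirely elementary and I do not anticipate any genuine obstacle; the only points demanding care are maintaining strictness throughout (the base case is already strict, so the conclusion stays strict for every $m$) and checking that each factor being multiplied or cross-multiplied is positive on $(0,1)$, which is what licenses preserving the inequality. An equally brief alternative, should a non-inductive proof be preferred, is to set $g(x) = (1-x)^m(1+mx)$, observe $g(0)=1$, and compute $g'(x) = -m(m+1)x(1-x)^{m-1} < 0$ on $(0,1)$, so that $g$ is strictly decreasing and hence $g(x) < 1$ for all $x \in (0,1)$.
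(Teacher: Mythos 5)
Your proof is correct, but it takes a different route from the paper. The paper disposes of this fact in one line: it expands $(1-x)^{-m}$ as a power series, $(1-x)^{-m} = 1 + mx + \binom{m+1}{2}x^2 + \cdots$, all of whose coefficients are positive, so truncating after the linear term gives $(1-x)^{-m} > 1 + mx$ and hence the claim by taking reciprocals. Your induction on $m$ --- reformulating as $(1-x)^m(1+mx) < 1$, verifying $(1-x)(1+x) = 1 - x^2 < 1$, and closing the step via $(1-x)(1+(m+1)x) = 1 + mx - (m+1)x^2 < 1 + mx$ --- is more elementary and self-contained: it needs no knowledge of the binomial series for negative exponents, and it makes the strictness and the positivity checks explicit at every stage. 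What the paper's argument buys in exchange is brevity and a clear view of \emph{why} the inequality holds (the reciprocal dominates its own linearization because all higher-order terms are positive); your derivative-based alternative, computing $g'(x) = -m(m+1)x(1-x)^{m-1} < 0$ for $g(x) = (1-x)^m(1+mx)$, captures a similar structural insight in a single line and is also perfectly valid.
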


\begin{proof}[Proof of \Cref{lemma:bernoulli}]
The fact follows from the expansion of $(1-x)^{-m}$.
\end{proof}


\begin{fact} \label{lemma:modbinom2} For $n \in \mathbb{N}$ and $x \neq 0$,
\begin{equation*} \label{eq:modbinom2}
        \sum_{k=0}^n \binom{n}{k} \frac{x^ky^{n-k}}{(k+1)(k+2)} = \frac{1}{(n+1)(n+2)}\frac{(x+y)^{n+2}-[(n+2)x+y]y^{n+1}}{x^2}.
    \end{equation*}
\end{fact}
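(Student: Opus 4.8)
The plan is to mimic the proof of \Cref{lemma:modbinom}, pushing the same binomial-coefficient manipulation one step further to absorb the extra factor $\frac{1}{k+2}$. The engine is the elementary identity $\binom{n}{k}\frac{1}{k+1} = \binom{n+1}{k+1}\frac{1}{n+1}$ already exploited there; applying it twice gives
\[
\binom{n}{k}\frac{1}{(k+1)(k+2)} = \frac{1}{n+1}\binom{n+1}{k+1}\frac{1}{k+2} = \frac{1}{(n+1)(n+2)}\binom{n+2}{k+2}.
\]
So first I would establish this identity, which converts the awkwardly weighted summand into a plain binomial coefficient with both its top and bottom indices shifted up by two.

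Second, substituting this identity into the left-hand side yields
\[
\sum_{k=0}^n \binom{n}{k}\frac{x^k y^{n-k}}{(k+1)(k+2)} = \frac{1}{(n+1)(n+2)}\sum_{k=0}^n \binom{n+2}{k+2}\, x^k y^{n-k}.
\]
I would then reindex with $j = k+2$, so that $k$ ranging over $0,\dots,n$ becomes $j$ ranging over $2,\dots,n+2$, and factor out $x^{-2}$ by writing $x^k y^{n-k} = x^{-2}\, x^{j} y^{(n+2)-j}$. This turns the residual sum into $x^{-2}\sum_{j=2}^{n+2}\binom{n+2}{j}\, x^{j} y^{(n+2)-j}$.

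Third, I would recognize this as the binomial expansion of $(x+y)^{n+2}$ with its two lowest-order terms stripped off. Subtracting the $j=0$ term $y^{n+2}$ and the $j=1$ term $(n+2)x\,y^{n+1}$ gives
\[
\sum_{j=2}^{n+2}\binom{n+2}{j}\, x^{j} y^{(n+2)-j} = (x+y)^{n+2} - y^{n+2} - (n+2)x\,y^{n+1},
\]
and since $y^{n+2} + (n+2)x\,y^{n+1} = [(n+2)x + y]\,y^{n+1}$, reassembling the factors $\frac{1}{(n+1)(n+2)}$ and $x^{-2}$ reproduces exactly the claimed closed form.

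The computation is routine and I do not anticipate a genuine obstacle; the only place demanding care is the bookkeeping of the two truncated low-order terms together with the reindexing, since an off-by-one in the lower summation limit or a dropped factor of $x^{-2}$ would corrupt the final expression. Note that the hypothesis $x \neq 0$ is precisely what licenses the division by $x^2$ throughout.
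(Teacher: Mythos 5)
Your proposal is correct and follows essentially the same route as the paper's proof: both convert $\binom{n}{k}\frac{1}{(k+1)(k+2)}$ into $\frac{1}{(n+1)(n+2)}\binom{n+2}{k+2}$ and then complete the binomial expansion of $(x+y)^{n+2}$ by subtracting the two missing low-order terms $y^{n+2}$ and $(n+2)xy^{n+1}$. The only cosmetic difference is that you derive the coefficient identity by applying the basic identity from \Cref{lemma:modbinom} twice, whereas the paper simply asserts it.
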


\begin{proof}[Proof of \Cref{lemma:modbinom2}]
Note that $\binom{n}{k} \frac{1}{(k+1)(k+2)} = \binom{n+2}{k+2} \frac{1}{(n+1)(n+2)}$. Then,
\begin{equation*}
    \begin{aligned} 
    \sum_{k=0}^n \binom{n}{k} & \frac{x^k y^{n-k}}{(k+1)(k+2)} = \frac{1}{(n+1)(n+2)} \sum_{k=0}^n \binom{n+2}{k+2} x^k y^{n-k}
\\ &= \frac{1}{(n+1)(n+2)} \frac{1}{x^2} \sum_{k=0}^{n} \binom{n+2}{k+2} x^{k+2} y^{(n+2)-(k+2)}
\\ &= \frac{1}{(n+1)(n+2)} \frac{1}{x^2} \left[ \sum_{k=0}^{n+2} \binom{n+2}{k} x^{k} y^{n+2-k} - (n+2)x y^{n+1} - y^{n+2}  \right] 
\\ &= \frac{1}{(n+1)(n+2)} \frac{(x+y)^{n+2} - [(n+2)x+y]y^{n+1}}{x^2}.
\end{aligned}
\end{equation*}
\end{proof}


\begin{fact} \label{lemma:modbinom3} For $n \in \mathbb{N}$ and $x \neq 0$,
    \begin{equation*} \label{eq:modbinom3}
        \sum_{k=0}^n \binom{n}{k} \frac{x^ky^{n-k}}{k+2} = \frac{1}{(n+1)(n+2)}\frac{[(n+1)x-y](x+y)^{n+1}+y^{n+2}}{x^2}.
    \end{equation*}
\end{fact}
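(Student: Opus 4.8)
The plan is to avoid a fresh reindexing argument of the kind used for \Cref{lemma:modbinom} and \Cref{lemma:modbinom2}, and instead reduce the claim directly to those two already-established closed forms. The key observation is the partial-fraction identity $\frac{1}{k+2} = \frac{1}{k+1} - \frac{1}{(k+1)(k+2)}$, valid for every integer $k \geq 0$. Multiplying through by $\binom{n}{k}x^k y^{n-k}$ and summing over $k$ from $0$ to $n$ splits the target sum as
\[
\sum_{k=0}^n \binom{n}{k}\frac{x^k y^{n-k}}{k+2} = \sum_{k=0}^n \binom{n}{k}\frac{x^k y^{n-k}}{k+1} - \sum_{k=0}^n \binom{n}{k}\frac{x^k y^{n-k}}{(k+1)(k+2)}.
\]
The first sum on the right is exactly \Cref{lemma:modbinom} and the second is exactly \Cref{lemma:modbinom2}, so I substitute their closed forms directly rather than proving anything new from scratch.

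Next I would carry out the subtraction over the common denominator $(n+1)(n+2)x^2$. Writing the first term as $\frac{(n+2)x\,[(x+y)^{n+1}-y^{n+1}]}{(n+1)(n+2)x^2}$ and the second as $\frac{(x+y)^{n+2}-[(n+2)x+y]y^{n+1}}{(n+1)(n+2)x^2}$, the numerator of the difference becomes
\[
(n+2)x(x+y)^{n+1} - (n+2)xy^{n+1} - (x+y)^{n+2} + [(n+2)x+y]y^{n+1}.
\]
The two $\pm(n+2)xy^{n+1}$ contributions cancel, leaving the numerator equal to $(n+2)x(x+y)^{n+1} - (x+y)^{n+2} + y^{n+2}$.

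Finally I would factor $(x+y)^{n+1}$ out of the first two surviving terms, using $(x+y)^{n+2} = (x+y)(x+y)^{n+1}$, to get $[(n+2)x - (x+y)](x+y)^{n+1} + y^{n+2} = [(n+1)x - y](x+y)^{n+1} + y^{n+2}$. Dividing by $(n+1)(n+2)x^2$ yields precisely the claimed right-hand side, completing the proof.

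There is no genuine obstacle here: once the partial-fraction split is spotted, everything reduces to the two earlier facts plus bookkeeping. The only point requiring care is tracking the coefficient $(n+2)x$ and the signs through the subtraction so that the $y^{n+1}$ terms cancel cleanly and the simplification $(n+2)x - (x+y) = (n+1)x - y$ is applied correctly; this is the single place where an arithmetic slip would most easily occur.
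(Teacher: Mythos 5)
Your proof is correct, but it takes a genuinely different route from the paper's. The paper decomposes the summand via $\frac{1}{k+2} = \frac{1}{(k+1)(k+2)} + \frac{k}{(k+1)(k+2)}$, handles the first piece with \Cref{lemma:modbinom2}, and then deals with the $k$-weighted piece by writing $k x^{k} = x\frac{\mathrm{d}}{\mathrm{d}x}x^{k}$ and differentiating the closed form of \Cref{lemma:modbinom2} (after stripping the $k=0$ term), which requires a somewhat lengthy quotient-rule computation. Your partial-fraction split $\frac{1}{k+2} = \frac{1}{k+1} - \frac{1}{(k+1)(k+2)}$ instead reduces the sum to a difference of the closed forms in \Cref{lemma:modbinom} and \Cref{lemma:modbinom2}, so the whole argument is purely algebraic: putting the two expressions over the common denominator $(n+1)(n+2)x^2$, the $\pm(n+2)xy^{n+1}$ terms cancel, and the factorization $(n+2)x - (x+y) = (n+1)x - y$ gives the claim. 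I verified each of your intermediate expressions and the cancellation; they are all correct. What your approach buys is brevity and robustness\textemdash no calculus, no separate treatment of the $k=0$ term, and fewer places for sign or coefficient errors\textemdash at the mild cost of invoking both earlier facts rather than just \Cref{lemma:modbinom2}; either version is a complete and valid proof.
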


\begin{proof}[Proof of \Cref{lemma:modbinom3}] 
Note that $\binom{n}{k} \frac{1}{k+2} = \binom{n}{k}\frac{1}{(k+1)(k+2)} + \binom{n}{k}\frac{k}{(k+1)(k+2)}$.
Then,
\begin{equation*}
\begin{aligned} 
\sum_{k=0}^n \binom{n}{k} \frac{x^k y^{n-k}}{k+2} &= \sum_{k=0}^n \binom{n}{k} \frac{x^k y^{n-k}}{(k+1)(k+2)} + \sum_{k=0}^n \binom{n}{k} \frac{k x^k y^{n-k}}{(k+1)(k+2)}.
\end{aligned}
\end{equation*}
By \Cref{lemma:modbinom2}, the first term is 
\[\frac{1}{(n+1)(n+2)} \frac{(x+y)^{n+2} - [(n+2)x+y]y^{n+1}}{x^2}.\] 
For the second term, we have
\begin{equation*}
\begin{aligned}
    &\sum_{k=0}^n \binom{n}{k} \frac{k x^k y^{n-k}}{(k+1)(k+2)} = x \sum_{k=1}^n \binom{n}{k} \frac{k x^{k-1} y^{n-k}}{(k+1)(k+2)} \\ &= x \frac{\mathrm{d}}{\mathrm{d}x}\left[ \sum_{k=1}^n \binom{n}{k} \frac{x^k y^{n-k}}{(k+1)(k+2)} \right] \\
    &= x \frac{\mathrm{d}}{\mathrm{d}x}\left[\frac{1}{(n+1)(n+2)} \frac{(x+y)^{n+2} - [(n+2)x+y]y^{n+1}}{x^2} - \frac{y^n}{2} \right] \\
    &= x \left[ \frac{[(n+2)(x+y)^{n+1} - (n+2)y^{n+1}]x^2 - 2x\{(x+y)^{n+2} - [(n+2)x+y]y^{n+1}\}}{(n+1)(n+2) x^4}\right] \\
    &= \frac{(n+2)x(x+y)^{n+1} -(n+2)xy^{n+1} - 2(x+y)^{n+2} + 2(n+2)xy^{n+1}+2y^{n+2}}{(n+1)(n+2)x^2} \\
    &= \frac{[(n+2)x-2(x+y)](x+y)^{n+1} + (n+2)xy^{n+1}+2y^{n+2}}{(n+1)(n+2)x^2} \\
    &= \frac{(nx-2y)(x+y)^{n+1} + [(n+2)x +2y]y^{n+1}}{(n+1)(n+2)x^2}.
\end{aligned}
\end{equation*}
Combining the two terms yields
\begin{equation*}
\begin{aligned}
    \sum_{k=0}^n \binom{n}{k} \frac{x^k y^{n-k}}{k+2} &= \frac{1}{(n+1)(n+2)} \frac{(x+y)^{n+2} - [(n+2)x+y]y^{n+1}}{x^2} \\ &\qquad +\frac{1}{(n+1)(n+2)}\frac{(nx-2y)(x+y)^{n+1} + [(n+2)x +2y]y^{n+1}}{x^2} \\ &=  \frac{1}{(n+1)(n+2)} \frac{[(x+y) + nx -2y)](x+y)^{n+1} + y^{n+2}}{x^2} \\ &= \frac{1}{(n+1)(n+2)} \frac{[(n+1)x-y](x+y)^{n+1} + y^{n+2}}{x^2}.
\end{aligned}
\end{equation*}
\end{proof}


\section{Proofs} \label{sec:proofs}

\begin{proof}[Proof of \Cref{prop:threshold}]
If $\sigma_i^*: [\underline{c},\overline{c}] \rightarrow \{0,1\}$ is an equilibrium strategy for agent $i$, then it is non-increasing. We prove this by contradiction and suppose that there exists a pair of costs $c < c'$ such that $\sigma_i^*(c) = 0$ and $\sigma_i^*(c') = 1$. Then, $c < c'$ implies
\[
\begin{aligned} 
0 &= \mathbb{E}[u_i(0,\sigma^*_{-i}(c_{-i}),c_{i}|c_i = c] \geq \mathbb{E}[u_i(1,\sigma^*_{-i}(c_{-i}),c_i)|c_i = c] \\ &> \mathbb{E}[u_i(1,\sigma^*_{-i}(c_{-i}),c_i)|c_i = c'] \geq  \mathbb{E}[u_i(0,\sigma^*_{-i}(c_{-i}),c_{i}|c_i = c'] = 0,
\end{aligned}
\]
where the first and the last inequalities follow from the definition of equilibrium. The strict inequality which follows from the definition of $u_i$ leads to a contradiction. Thus, any equilibrium strategy is a threshold strategy. Thus, for all $i$, $\sigma_i^* = \sigma_{c^*_i}$ for some $c^*_i \in [\underline{c},\overline{c}]$.
\end{proof}


\begin{proof}[Proof of \Cref{prop:psi}]
Note that $p(\bolds_{-i})$ depends only on $|S_{-i}|$, so we let $\tilde{p}(|S_{-i}|) \equiv p(\bolds_{-i})$. Therefore $\Psi(\hat{\boldc}_{-i}) = \mathbb{E}_{\boldc_{-i}}\left[\tilde{p}\left(\sum_{j\neq i} \sigma_{\hat{c}_j} (c_j)\right)\right]$. Property (i) follows since the $c_j$'s are distributed identically and independently. 

For property (ii), note that $\tilde{p}$ is strictly decreasing in $|S_{-i}|$. This can be shown by applying \Cref{lemma:modbinom} and taking the derivative with respect to $|S_{-i}|$. Now, fix $j \neq i$ and consider some $\hat{c}'_j > \hat{c}_j$. Define two random variables
\[\Sigma \equiv \sigma_{\hat{c}_1}(c_1)+\dots+\sigma_{\hat{c}_j}(c_j)+\dots+\sigma_{\hat{c}_n}(c_n) \; \text{and} \; \Sigma'\equiv \sigma_{\hat{c}_1}(c_1)+\dots+\sigma_{\hat{c}'_j}(c_j)+\dots+\sigma_{\hat{c}_n}(c_n). \]
We claim that $\Sigma'$ first-order stochastically dominates $\Sigma$. For any $x$, we have that $\{ \Sigma \geq x \} \subset \{ \Sigma + \sigma_{\hat{c}'_j} - \sigma_{\hat{c}_j}  \geq x \} = \{ \Sigma' \geq x \}$, where the first inclusion follows because $\sigma_{\hat{c}'_j}(c_j) - \sigma_{\hat{c}_j}(c_j) \geq 0$ for all $c_j$. Thus, for all $x$, $\Pr(\Sigma'\geq x) \geq \Pr(\Sigma \geq x)$ as claimed. It then follows that $\tilde{p}(\Sigma)$ first-order stochastically dominates $\tilde{p}(\Sigma')$ since $p$ is strictly decreasing. By stochastic dominance, we have
\[\Psi(\hat{c}_1,\dots,\hat{c}'_j,\dots, \hat{c}_n) = \mathbb{E}[\tilde{p}(\Sigma')] < \mathbb{E}[\tilde{p}(\Sigma)] = \Psi(\hat{\boldc}_{-i}),\]
which completes the proof of (ii).

Property (iii) holds because $F(\underline{c}) = 0$ and only the first term ($K=\varnothing$) in the sum survives: $\Psi(\underline{c},\dots,\underline{c}) = q$. On the other hand, $F(\overline{c}) = 1$ and only the last term ($K = N\setminus\{i\}$) in the sum survives: $\Psi(\overline{c},\dots,\overline{c}) = q \sum_{t=0}^{n-1} \binom{n-1}{t} \frac{q^{t}(1-q)^{n-1-t}}{t+1} = \frac{1-(1-q)^n}{n}$ by \Cref{lemma:modbinom}.
\end{proof}


\begin{proof}[Proof of \Cref{prop:eqm}]
The proof proceeds in three steps. First, we show that under \Cref{assump:symmetric}, the same equilibrium threshold must be used by all agents. Second, we derive a simple expression that an interior equilibrium threshold must satisfy. Third, we show that the threshold $c^*$ is unique and is interior under \Cref{assump:interior}.

\begin{itemize}[leftmargin=*]

    \item[] \textbf{Step 1}. Without loss of generality, let $\boldc^* = (c^*_1,\dots,c^*_n)$ be the interior equilibrium threshold vector such that $c^*_1 \leq \dots \leq c^*_n$. Now, suppose for the sake of contradiction that $c^*_1 < c^*_n$. By definition, we have
    \[
        c^*_1 = V\Psi(c^*_2,\dots,c^*_n) \quad \text{and} \quad  c^*_n = V\Psi(c^*_1,\dots,c^*_{n-1}).
    \]
    Combining yields
    \[
        \begin{aligned}
        c^*_n - c^*_1 &= V \left[\Psi(c^*_1,c^*_2,\dots,c^*_{n-1}) - \Psi(c^*_2,\dots,c^*_{n-1},c^*_{n}) \right] \\
                    &= - V \left[\Psi(c^*_2,\dots,c^*_{n-1},c^*_{n}) - \Psi(c^*_1,c^*_2,\dots,c^*_{n-1}) \right] \\
                    &= - V \left[\Psi(c^*_n,c^*_2,\dots,c^*_{n-1}) - \Psi(c^*_1,c^*_2,\dots,c^*_{n-1}) \right],
        \end{aligned}
    \]
    where the second equality takes the negative sign out from the parentheses and the last equality uses the fact that $\Psi$ is symmetric in its arguments. Rearranging gives
    \[
    \frac{\Psi(c^*_n,c^*_2,\dots,c^*_{n-1}) - \Psi(c^*_1,c^*_2,\dots,c^*_{n-1})}{c^*_n - c^*_1} = -\frac{1}{V}.
    \]
    The left-hand side is the slope of $\Psi$ from $c^*_1$ to $c^*_n$, keeping all other arguments fixed. Since $\Psi$ is continuous, it follows by the Mean Value Theorem that there exists $\tilde{c} \in (c^*_1,c^*_n)$ such that $\partial \Psi(\tilde{c},c^*_2,\dots,c^*_{n-1})/\partial \hat{c}_1 = -1/V$, which contradicts \Cref{assump:symmetric}.

    \item[] \textbf{Step 2}. We have established that any equilibrium strategy is a threshold strategy in \Cref{prop:threshold}. With \Cref{assump:symmetric}, the threshold equilibrium vector is of the form $\boldc^* = (c^*,\dots,c^*)$ for some $c^*$ that satisfy $c^* = V\Psi(c^*,\dots,c^*)$. We now simplify the expression for $\Psi(\hat{c},\dots,\hat{c})$. From~\eqref{eq:psi},
\begin{equation} \label{eq:psisym}
\begin{aligned}
\Psi&(\hat{c},\dots,\hat{c}) = \mathbb{E}_{\boldc_{-i}}\left[p(\boldsigma_{\hat{c}}(\boldc_{-i})) \right] \\ &= q \sum_{k=0}^{n-1} \left\{ \binom{n-1}{k} F(\hat{c})^k(1-F(\hat{c}))^{n-1-k} \left[ \sum_{t=0}^k \binom{k}{t} \frac{q^t(1-q)^{k-t}}{t+1} \right] \right\}.
\end{aligned}
\end{equation}
Consider $\hat{c} > \underline{c}$. Applying \Cref{lemma:modbinom} to~\eqref{eq:psisym} twice yields
\[
    \Psi(\hat{c},\dots,\hat{c}) = \frac{1-(1-qF(\hat{c}))^n}{nF(c)} = \frac{P(\hat{c})}{nF(\hat{c})}
\]
after some algebra.
For $c = \underline{c}$, $F(\underline{c}) = 0$ and the only the first term ($k=0$) in the sum survives: $\Psi(\underline{c},\dots,\underline{c}) = q$. Therefore, $\Psi(\hat{c},\dots,\hat{c}) = \Phi(\hat{c},q,n)$ as defined in~\eqref{eq:phi}. It follows that if an interior equilibrium threshold exists, it solves $c^* = V\Phi(c^*,q,n)$.

    \item[] \textbf{Step 3}.  We next show that indeed, $c^*$ is interior and unique. First observe that $\Phi(\hat{c})$ is strictly decreasing in $\hat{c}$ since $\Psi$ is strictly decreasing in all its arguments by \Cref{prop:psi}. Therefore, by continuity, $V\Phi(\hat{c},q,n)$ has at most one fixed point $c^*$. It is easy to see that the fixed point exists and is interior if $\underline{c} < V \Phi(\underline{c},q,n)$ and $V\Phi(\overline{c},q,n) < \overline{c}$. Because $\Phi(\underline{c},q,n) = q$ and $F(\overline{c}) = 1$, the conditions become $\underline{c} < qV$ and $V\frac{1-(1-q)^n}{n}  < \overline{c}$, which is precisely \Cref{assump:interior}.
\end{itemize}

This completes the proof.
\end{proof}


\begin{proof}[Proof of \Cref{prop:compstat}] 
We show that $\Phi(\hat{c},q,n)$ has the properties as claimed. For $\hat{c}$, it follows from \Cref{prop:eqm}. For $q$, we obtain directly: $\partial \Phi/ \partial q = (1-qF(\hat{c}))^{n-1} > 0$. For $n$, we show that if $\hat{c} > \underline{c}$,
\begin{equation}\label{eq:phin}
    \underbrace{\frac{1-(1-qF(\hat{c}))^n}{nF(\hat{c})}}_{\Phi(\hat{c},q,n)} > \underbrace{\frac{1-(1-qF(\hat{c}))^{n+1}}{(n+1)F(\hat{c})}}_{\Phi(\hat{c},q,n+1)}.
\end{equation}
Inequality~\eqref{eq:phin} simplifies to $(1-qF(\hat{c}))^n <  (1+nqF(\hat{c}))^{-1}$ which holds by \Cref{lemma:bernoulli}. The comparative statics results on $c^*(V,q,n)$ then follow from the properties of $\Phi$, since $c^*$ is the fixed point of $V\Phi$.
\end{proof}


\begin{proof}[Proof of \Cref{prop:compstatprob}] Since $P(\hat{c},q,n)$ is increasing in $\hat{c}$ and $q$, $P^*(V,q,n) = P(c^*(V,q,n),q,n)$ increases with $V$ and $q$. \Cref{ex:probn} and \Cref{ex:probn2} show that $P^*$ may increase or decrease with $n$.
\end{proof}


\begin{proof}[Proof of \Cref{prop:dpdn}]
We start with the fact that for a function $g(n)$, $\frac{\mathrm{d}}{\mathrm{d} n} (g(n))^n = (g(n))^n\left( \frac{ng'(n)}{g(n)} + \ln{g(n)}\right)$. To see why this holds, let $h(n) \equiv (g(n))^n$. Then, $\ln{h(n)} = n\ln{g(n)}$, and taking the derivative with respect to $n$ on both sides yields $\frac{h'(n)}{h(n)} = \frac{n g'(n)}{g(n)}+\ln{g(n)}$, which yields the claim after some rearrangements. Applying this fact with $g(n) = 1-qF(c(n))$ yields
\begin{equation} \label{eq:Pn1}
\begin{aligned}
    \frac{\mathrm{d}P_n}{\mathrm{d} n} &= \frac{\mathrm{d}}{\mathrm{d} n} 1 - (1-qF(c_n))^n = - \frac{\mathrm{d}}{\mathrm{d} n} (1-qF(c_n))^n \\ &= (1-qF(c_n))^n \left[ \frac{nqf(c_n)}{1-qF(c_n)} \frac{\mathrm{d}c_n}{\mathrm{d}n} - \ln{(1-qF(c_n))}\right],
    \end{aligned}
\end{equation}
where we have used $c_n = c^*(n)$ and $P_n = P^*(n)$ for ease of notation.

Now, implicit differentiation of the equilibrium condition $c_n n F(c_n) = VP_n$ yields
\begin{equation} \label{eq:Pn2}
    c_nF(c_n)+ n \frac{\mathrm{d}c_n}{\mathrm{d}n}F(c_n) + nc_nf(c_n) \frac{\mathrm{d}c_n}{\mathrm{d}n} = V \frac{\mathrm{d}P_n}{\mathrm{d}n} = \frac{c_n n F(c_n)}{P_n} \frac{\mathrm{d}P_n}{\mathrm{d}n}.
\end{equation}
Combining~\eqref{eq:Pn1} and~\eqref{eq:Pn2} yields 
\begin{equation} \label{eq:Pn3}
    \frac{\mathrm{d}P_n}{\mathrm{d}n} = (1-qF(c_n))^n \left[ \frac{nqf(c_n)}{1-qF(c_n)} \left( \frac{\frac{c_n n F(c_n)}{P_n} \frac{\mathrm{d}P_n}{\mathrm{d}n}-c_nF(c_n)}{nF(c_n) + n c_n f(c_n)}\right) - \ln{(1-qF(c_n))}\right].
\end{equation}
Solving~\eqref{eq:Pn3} for $\mathrm{d}P_n/\mathrm{d}n$, we obtain
\begin{equation} \label{eq:Pn4}
    \frac{\mathrm{d}P_n}{\mathrm{d}n} = (1-qF(c_n))^n \frac{\left[\frac{-qf(c_n)c_nF(c_n)}{(1-qF(c_n))(F(c_n)+c_nf(c_n))} - \ln{(1-qF(c_n))}\right]}{\left[1-\frac{(1-qF(c_n))^{n-1}qf(c_n)c_nnF(c_n)}{P_n(F(c_n)+c_nf(c_n))}\right]}.
\end{equation}
The denominator in~\eqref{eq:Pn4} can be written as
\begin{equation*}
\begin{aligned}
    1-&\frac{(1-qF(c_n))^{n-1}qf(c_n)c_nnF(c_n)}{P_n(F(c_n)+c_nf(c_n))} \\ &= \frac{P_n F(c_n)+ c_nf(c_n)[ P_n - (1-qF(c_n))^{n-1}q n F(c_n)]}{P_n(F(c_n)+c_nf(c_n))}.
\end{aligned}
\end{equation*}
It is non-negative because $P_nF(c_n) \geq 0$ and $P_n - (1-qF(c_n))^{n-1}qnF(c_n) = 1-(1-qF(c_n))^{n-1}(1+(n-1)qF(c_n)) \geq 0$ by \Cref{lemma:bernoulli}.

Therefore, we have $\mathrm{d}P_n/\mathrm{d}n \geq 0$ if and only if the numerator term in~\eqref{eq:Pn4} is non-negative. That is, $\frac{-qf(c_n)c_nF(c_n)}{(1-qF(c_n))(F(c_n)+c_nf(c_n))} \geq \ln{(1-qF(c_n))}$. The condition stated in the proposition follows after rearranging.
\end{proof}


\begin{proof}[Proof of \Cref{prop:qdagger}] 
For an arbitrary distribution $F$, $-\ln{(1-y)} \geq y$ for $0 \leq y \leq 1$ implies that the left-hand side of~\eqref{eq:Pncondition} is bounded below by $1-qF(c_n)$. It follows that if $1-qF(c_n) \geq \left(1 + \frac{F(c_n)}{c_nf(c_n)}\right)^{-1}$, then~\eqref{eq:Pncondition} holds. Rearranging yields the inequality $q \leq \frac{1}{c_nf(c_n)+F(c_n)}$. Define $q^\dagger \equiv \frac{1}{\max_{c_n}c_nf(c_n)+1}$ and since $c f(c) < \infty$, we have that $q^\dagger >0$. Thus, if $q < q^\dagger$, then~\eqref{eq:Pncondition} holds for all $n$ and all $V$. 

Now, suppose that $F(c) = c^\alpha$ for some $\alpha > 0$ on $[0,1]$. Then~\eqref{eq:Pncondition} becomes $(1-qc^\alpha_n)\ln{(1-qc^\alpha_n)} \leq - qc^\alpha_n\frac{\alpha}{1+\alpha}$. For $y \in [0,1]$, define $l(y) \equiv (1-y)\ln{(1-y)}$ for $y < 1$ and $l(1) = 0$. $l(y)$ is strictly convex in $y$ with $l(0) = 0$ and the minimum at $y = 1 - 1/e$. Similarly, on the same domain, define $r(y) \equiv -y\frac{\alpha}{1+\alpha}$, which is strictly decreasing and linear in $y$. Moreover, $l'(0) < r'(0)$, and so there exists a unique $\hat{y} \in (0,1)$ such that $r(y) > l(y)$ for all $y \in (0,\hat{y})$ and $r(y) < l(y)$ for all $y \in (\hat{y},1]$. It follows that if $qc^\alpha_n \leq \hat{y}$, then the inequality holds. Since $c^\alpha_n$ is strictly decreasing, if the inequality holds for $n$, it holds strictly for all $n' > n$. Furthermore, since $q \in (0,1]$ and $c_n^\alpha \in (0,1)$, the inequality holds for all $n$ if $q < \hat{y}$, regardless of $c^\alpha_n$. That is, regardless of $V$ if $q < \hat{y}$, then $P_n$ is always increasing. On the other hand, if $q > \hat{y}$, then one can set $V$ high enough, so that for some $n'$, $c_{n'}$ is close enough to 1. This means that $qc^\alpha_{n'} > \hat{y}$ and so $P_n$ is decreasing for $n < n'$.
\end{proof}


\begin{proof}[Proof of \Cref{prop:clargen}] The equilibrium condition can be written as $c_n n F(c_n) = V P_n$. Assume that $\lim_{n\rightarrow \infty}c_n$ is not equal to $\underline{c}$. Then, since $c_n$ is decreasing in $n$ because of \Cref{prop:compstat}, we see that there exists some value $c'>\underline{c}$, so that $c_n>c'$ for any $n\in \mathbb{N}$. In this case, $c_n n F(c_n)>c'nF(c')$. This holds because $F$ is an increasing function, and, therefore, $F(c_n)>F(c')$. Note that $F(c')>0$, since $c'>\underline{c}$. Therefore, $\lim_{n\rightarrow \infty}{c_nnF(c_n)}\geq \lim_{n\rightarrow \infty}c'nF(c')=\infty$, which cannot be equal to $VP_n$, i.e. we obtain a contradiction.
\end{proof}


\begin{proof}[Proof of \Cref{prop:largen}]
\begin{itemize}
\item[(i)] Consider the case $\underline{c} = 0$. The proof is attained by contradiction. First, assume that $\lim_{n\rightarrow \infty}nF(c_n)\neq \infty$. This implies that there exists $m\in \mathbb{N}$, such that there is an infinite sequence of natural numbers $n_1,n_2,\dots$, so that $n_iF(c_{n_i}) < m$. We have
\begin{equation} \label{eq:largenproof}
    c_{n_i} = V \frac{1-(1-qF(c_{n_i}))^{n_i}}{n_iF(c_{n_i})} \geq Vq\frac{1-e^{-qn_iF(c_{n_i})}}{qn_iF(c_{n_i})} > Vq \frac{1-e^{-qm}}{qm} > 0
\end{equation}
for any $i \in \mathbb{N}$, which contradicts \Cref{prop:clargen} that $c_n$ converges to $0$. The first equality of~\eqref{eq:largenproof} follows from the equilibrium condition. The first inequality follows from the fact that $(1-x)^n \leq e^{-nx}$. The second inequality follows because the function $\frac{1-e^{-x}}{x}$ is strictly decreasing. 

The case $\underline{c} > 0$ is more involved since we cannot claim that $nF(c_n)$ converges. The proof requires several steps.

\begin{itemize}[leftmargin=*]

    \item[] \textbf{Step 1}. Note that $nF(c_n)$ is bounded because $nF(c_n) = V\frac{1-(1-qF(c_{n}))^{n}}{c_n} \leq V/\underline{c}$. This means that $\lim_{n \rightarrow \infty} nF(c_n) \neq \infty$. Moreover, $nF(c_n)$ does not converge to zero. Otherwise, if $nF(c_n) \rightarrow 0$, then taking the limit and applying L'H\^{o}pital's rule to the right-hand side of $c_n \geq Vq \frac{1-e^{-qnF(c_n)}}{qnF(c_{n})}$ implies $\underline{c} \geq Vq$, which contradicts the interiority of the equilibrium.

    \item[] \textbf{Step 2}. We claim that
    \begin{equation} \label{eq:largenproofmain}
        (1-qF(c_n))^n = e^{-qnF(c_n)} + o(1).
    \end{equation}
    To prove~\eqref{eq:largenproofmain}, fix $n$ and recall the identities $e^{-nx} = \sum_{k=0}^\infty \frac{(-1)^k (nx)^k}{k!}$ and $(1-x)^n = \sum_{k=0}^\infty \binom{n}{k} (-1)^k x^k$, where $\binom{n}{k} = 0$ for $n < k$. Applying them to $\xi_n \equiv qF(c_n)$, we obtain
        \[
        \begin{aligned}
            e^{-n\xi_n} - (1-\xi_n)^n &= \sum_{k=0}^\infty \frac{n^k}{k!} (-1)^k\xi_n^k - \sum_{k=0}^\infty \binom{n}{k}(-1)^k\xi_n^k \\
            &= \sum_{k=0}^\infty (-1)^k \left[ \frac{1}{k!} - \frac{1}{n^k}\binom{n}{k} \right](n\xi_n)^k.
                   \end{aligned}
        \]
    Let $B$ be the bound on $n\xi_n$, which is shown to exist in step 1. Then, by the triangle inequality and the fact that $1/k! - \frac{1}{n^k}\binom{n}{k} \geq 0$ and $(n\xi_n)^k \geq 0$, we have
    \begin{equation} \label{eq:largenproofbound}
        \begin{aligned}
         \left|e^{-n\xi_n} - (1-\xi_n)^n \right| &\leq \sum_{k=0}^\infty \left[ \frac{1}{k!} - \frac{1}{n^k}\binom{n}{k} \right](n\xi_n)^k \leq \sum_{k=0}^\infty \left[ \frac{1}{k!} - \frac{1}{n^k}\binom{n}{k} \right] B^k \\ &= O(n^{-1})\sum_{k=0}^\infty \frac{B^k}{k!} = O(n^{-1}) e^{B}.
         \end{aligned}
    \end{equation}
    The last line in~\eqref{eq:largenproofbound} uses the fact that, for each $k$,
    \[
        \begin{aligned}
        \frac{1}{n^k} \binom{n}{k} = \frac{1}{k!}  \frac{n(n-1)\cdots (n-(k-1))}{n^k}  = \frac{1}{k!}  \frac{n^k - O(n^{k-1})}{n^k} = \frac{1}{k!} - \frac{O(n^{-1})}{k!}.
        \end{aligned}
    \]
    Therefore, taking $n \rightarrow \infty$ in~\eqref{eq:largenproofbound} yields the claim in~\eqref{eq:largenproofmain}.

    \item[] \textbf{Step 3}. Define $\kappa \in (0,\infty)$ as the unique solution to $\underline{c} = V \frac{1-e^{-q\kappa}}{\kappa}$. We now write
    \begin{equation} \label{eq:largenproof2}
    \begin{aligned}
        c_n - \underline{c} &= V \left[\frac{1-(1-qF(c_n))^n}{nF(c_n)} - \frac{1-e^{-q\kappa}}{\kappa}\right] \\ 
        &= Vq \left[\frac{1-e^{-qnF(c_n)}}{qnF(c_n)} - \frac{1-e^{-q\kappa}}{q\kappa} \right] +  o(1),
    \end{aligned}
    \end{equation}
    where the second equality uses facts we have proven, that $nF(c_n) \nrightarrow \{0,\infty\}$ in step 1 and that $(1-qF(c_n))^n = e^{-qnF(c_n)} + o(1)$ in step 2. Rearranging~\eqref{eq:largenproof2}, we obtain
    \[
        \frac{1}{Vq}(c_n - \underline{c}) = \left[ \frac{1-e^{-qnF(c_n)}}{qnF(c_n)} - \frac{1-e^{-q\kappa}}{q\kappa}   \right] + o(1).
    \]
    Since the left-hand side goes to zero by \Cref{prop:clargen}, it holds that $\frac{1-e^{-qnF(c_n)}}{qnF(c_n)} \rightarrow \frac{1-e^{-q\kappa}}{q\kappa}$. Note that $\zeta(x) = (1-e^{-x})/x$ is strictly monotone and continuous, and thus has a continuous inverse. It follows that $\zeta^{-1}\left(\frac{1-e^{-qnF(c_n)}}{qnF(c_n)}\right) = qnF(c_n) \rightarrow q\kappa = \zeta^{-1}\left( \frac{1-e^{-q\kappa}}{q\kappa}\right)$.
\end{itemize}
This completes the proof.

\item[(ii)] From the proof of (i), $P_n  = 1-(1-qF(c_n))^n = 1-e^{-qnF(c_n)} + o(1)$. The result follows from (i) and continuity.
\end{itemize}
\end{proof}


\begin{proof}[Proof of \Cref{prop:conv}]
    \begin{itemize}
        \item[(i)] Write $c_nF(c_n) = V \frac{P_n}{n}$. From \Cref{prop:largen}, $P_n$ goes to a strictly positive constant so it follows that $c_nF(c_n) \in \Theta(n^{-1})$.
        \item[(ii)] If $\underline{c} > 0$, then $c_n \rightarrow \underline{c} > 0$. Then,  $F(c_n) = V\frac{P_n/c_n}{n}$ implies the result $F(c_n) \in \Theta(n^{-1})$ since $P_n/c_n$ goes to a strictly positive constant.
    \end{itemize}
\end{proof}


\begin{proof}[Proof of \Cref{prop:largePn}]
Taking the limit of the left-hand side of~\eqref{eq:Pncondition} using L'H\^{o}pital's rule yields
\[
\lim_{n \rightarrow \infty} \frac{\ln{(1-qF(c_n))}}{\frac{-qF(c_n)}{1-qF(c_n)}} = \lim_{n \rightarrow \infty} \frac{\frac{-qf(c_n)}{1-qF(c_n)}}{\frac{-qf(c_n)(1-qF(c_n)) -q^2f(c_n)F(c_n)}{(1-qF(c_n))^2}} = \lim_{n \rightarrow \infty} 1-qF(c_n) = 1.
\]
It follows that if the right-hand side of~\eqref{eq:Pncondition} is bounded away from 1, the left-hand side eventually overtakes it for sufficiently large $n$. \Cref{assump:costdistribution} is sufficient for this to hold and the proposition follows.
\end{proof}


\begin{proof}[Proof of \Cref{prop:principal}]
The objective function is $\mathcal{W}(\hat{c}) = -W(1-qF(\hat{c}))^n - n\hat{c} F(\hat{c})$. We begin with some observations. First, $\mathcal{W}(\hat{c}) = -W$ for $\hat{c} \leq \underline{c}$. Second, $\mathcal{W}(\hat{c}) = -W(1-q)^n - n\hat{c}$ is linearly decreasing in $\hat{c}$ for $\hat{c} > \overline{c}$ and thus the principal would choose $\overline{c}$ over any higher threshold $\hat{c}$. We now proceed by solving the problem without explicitly considering the constraint.

The first-order condition is $Wq(1-qF(\hat{c}))^{n-1}nf(\hat{c}) - nF(\hat{c}) - n \hat{c}f(\hat{c}) = 0$, which rearranges to 
\[
\underbrace{Wq(1-qF(\hat{c}))^{n-1} - F(\hat{c})/f(\hat{c})}_{\Omega(\hat{c})} = \hat{c}
\]
if $f(\hat{c}) > 0$. Thus, the critical points in $[\underline{c},\overline{c}]$, if any, are the fixed points of $\Omega(\hat{c})$. Note that $\Omega(\hat{c})$ is strictly decreasing because $F(\hat{c})/f(\hat{c})$ is non-decreasing. Therefore, if $\underline{c} < \Omega(\underline{c})$ and $\Omega(\overline{c}) < \overline{c}$, a unique fixed point $\hat{c}^* \in (\underline{c},\overline{c})$ exists. It follows that $\mathcal{W}(\hat{c})$ is strictly increasing for $\hat{c} \leq \hat{c}^*$, strictly decreasing for $\hat{c}^* \leq \hat{c}$, and is maximized at $\hat{c}^*$.

However, if $\Omega(\underline{c}) \leq \underline{c}$, then $\mathcal{W}(\hat{c})$ is always decreasing on $[\underline{c},\overline{c}]$ and a maximum occurs at $\underline{c}$. The condition for this lower boundary solution translates to $W \leq \underline{c}/q \equiv \underline{W}$. Similarly, if $\overline{c} \leq \Omega(\overline{c})$, then $\mathcal{W}(\hat{c})$ is always increasing on $[\underline{c},\overline{c}]$ and the maximum occurs at $\overline{c}$, as noted in the observation above. The condition for this upper boundary solution translates to $W \geq (\overline{c} + 1/f(\overline{c}))/(q(1-q)^{n-1}) \equiv \overline{W}$.
\end{proof}


\begin{proof}[Proof of \Cref{prop:principallarge}]
Consider $\underline{c} > 0$. Let $\hat{c}^*_n$ denote the optimal threshold with $n$ agents. The proof applies arguments from \Cref{prop:largen} to $\hat{c}^*_n$ and $\Omega(\hat{c})$. From the definition of $\Omega$, it is to see that $\hat{c}^*_n \rightarrow \underline{c}$ as $\Omega(\hat{c})$ approaches $-F(\hat{c})/f(\hat{c})$ for $\hat{c} > 0$ as $n$ goes to infinity while $\Omega(0) = Wq$. The fixed point equation must also hold in the limit: 
\begin{equation} \label{eq:principallarge}
    Wqe^{-q\kappa^*} = \underline{c},
\end{equation}
where $\kappa^* = \lim_{n \rightarrow \infty} nF(\hat{c}^*_n)$, i.e. the limiting expected number of agents who search at the principal's optimal threshold. Now, from \Cref{prop:principal}, $V^*_n = \hat{c}^*_n/\Phi(\hat{c}^*_n,n)$, where the direct dependence on $n$ of $\Phi$ is made explicit. By continuity, we have $\lim_{n \rightarrow \infty} V^*_n = \underline{c} \frac{\kappa^*}{1-e^{-q\kappa^*}}$, which combined with~\eqref{eq:principallarge} yields the expression for $V^*_\infty$. The result for case $\underline{c} = 0$ can be obtained in the limit.
\end{proof}


\begin{proof}[Proof of \Cref{prop:heteroeqm}]
Fix $\boldc^*$ and consider $q_1 > q_2$. Define 
\[
    \Xi_1(\hat{c}) \equiv V\Psi_1(\hat{c},c^*_3,\dots,c^*_n) \quad \text{and} \quad \Xi_2(\hat{c}) \equiv V\Psi_2(\hat{c},c^*_3,\dots,c^*_n).
\] 
$\Xi_1$ and $\Xi_2$ are strictly decreasing by an argument similar to that of \Cref{prop:psi} and  $\Xi_1 > \Xi_2$ because $q_1 > q_2$. Let $c^1$ and $c^2$ be the unique fixed points of $\Xi_1$ and $\Xi_2$, respectively. It is easy to see that $c^2 < c^1$.

Now, note that the equilibrium threshold $c^*_2$ solves $\hat{c} = \Xi_2(\Xi_1(\hat{c}))$, while the fixed point $c^2$ solves $\hat{c} = \Xi_2(\Xi_2(\hat{c}))$. It follows that $c^*_2 < c^2$ because $\Xi_2$ is strictly decreasing and $\Xi_1 > \Xi_2$. The same reasoning shows that $c^1 < c^*_1$. Therefore, we have that $c^*_2 < c^2 < c^1 < c^*_1$. The claim follows as this argument works for any pair $i$ and $j$ with $q_i > q_j$. 
\end{proof}


\begin{proof}[Proof of \Cref{prop:eqmexpert}]
The steps of the proof follow that of \Cref{prop:eqm}, with two modifications. First, the condition for interiority of $c^e$ is now $\underline{c} < \Phi^e(\underline{c},q,q_e,n)$ and $\Phi^e(\overline{c},q,q_e,n) < \overline{c}$. Second, the equilibrium threshold $c^e$ now solves the following equilibrium condition:
\begin{equation} \label{eq:eqmexpertcond}
    \hat{c} = V (1-q_e) \Phi(\hat{c},q,n) + Vq_e \tilde{\Phi}(\hat{c},q,n),
\end{equation}
where
\begin{equation*} \label{eq:phitilde}
    \tilde{\Phi}(\hat{c},q,n) \equiv  q \sum_{k=0}^{n-1} \binom{n-1}{k} F(\hat{c})^k(1-F(\hat{c}))^{n-1-k} \left[ \sum_{t=0}^k \binom{k}{t} q^t(1-q)^{k-t} \frac{1}{t+2} \right] .
\end{equation*}
There are two terms on the right-hand side of \eqref{eq:eqmexpertcond}. First, if the expert does not find the object (with probability $1-q_e$), then the expected reward for the agents is as before. Second, if the expert finds the object (with probability $q_e$), then the prize is split in one additional way\textemdash hence the term $\frac{1}{t+2}$ in the expression for $\tilde{\Phi}$.  We now simplify $\tilde{\Phi}(\hat{c},q,n)$ by using the stated lemmata and show that it is strictly decreasing in $\hat{c}$.

Assume first that $\hat{c} > \underline{c}$. Using \Cref{lemma:modbinom3},
\begin{equation*}
    \sum_{t=0}^k \binom{k}{t} q^t(1-q)^{k-t} \frac{1}{t+2} = \frac{1}{(k+1)(k+2)} \frac{(k+2)q - 1 + (1-q)^{k+2}}{q^2}.
\end{equation*}
Hence
\begin{equation*}
\begin{aligned}
\tilde{\Phi}(\hat{c},q,n) &= \sum_{k=0}^{n-1} \binom{n-1}{k} F(\hat{c})^k(1-F(\hat{c}))^{n-1-k}\frac{1}{k+1} \\ &- \frac{1}{q}\sum_{k=0}^{n-1} \binom{n-1}{k} F(\hat{c})^k(1-F(\hat{c}))^{n-1-k}\frac{1}{(k+1)(k+2)} \\ &+ \frac{(1-q)^2}{q} \sum_{k=0}^{n-1} \binom{n-1}{k} (F(\hat{c})(1-q))^k(1-F(\hat{c}))^{n-1-k}\frac{1}{(k+1)(k+2)}.
\end{aligned}
\end{equation*}
Applying \Cref{lemma:modbinom} to the first term and \Cref{lemma:modbinom2} to the second and third terms on the right-hand side yields
\begin{equation*}
    \begin{aligned}
    \tilde{\Phi}(\hat{c},q,n) &= \frac{1-(1-F(\hat{c}))^n}{nF(\hat{c})} \\ &- \frac{1-(1+nF(\hat{c}))(1-F(\hat{c}))^n}{n(n+1)qF(\hat{c})^2} \\ &+ \frac{(1-qF(\hat{c}))^{n+1}-(1+nF(\hat{c})-(n+1)qF(\hat{c}))(1-F(\hat{c}))^n}{n(n+1)qF(\hat{c})^2},
    \end{aligned}
\end{equation*}
which simplifies to
\begin{equation*}
    \tilde{\Phi}(\hat{c},q,n) = \frac{1}{nF(\hat{c})} + \frac{(1-qF(\hat{c}))^{n+1} - 1}{n(n+1)qF(\hat{c})^2},
\end{equation*}
which can be readily verified to be strictly decreasing in $\hat{c}$.
The right-hand side of the equilibrium condition is then
\begin{equation*}
    V (1-q_e) \Phi(c,q,n) + Vq_e \left[ \frac{1}{nF(\hat{c})} + \frac{(1-qF(\hat{c}))^{n+1} - 1}{n(n+1)qF(\hat{c})^2} \right].
\end{equation*}
Simplifying yields the expression
\begin{equation*}
    V \Phi(\hat{c},q,n) - V q_e \frac{1- (1-qF(\hat{c}))^n(1+nqF(\hat{c}))}{n(n+1)qF(\hat{c})^2},
\end{equation*}
which holds true for all $\hat{c} > \underline{c}$. If $\hat{c} = \underline{c}$, then $\tilde{\Phi} = q/2$ and $\Phi = q$. The right-hand side of the equilibrium condition is then $V q - V q_e q/2 = Vq(1-q_e/2)$. The function $\Phi^e$ is strictly decreasing in $\hat{c}$, since it is a combination of $\Phi$ and $\tilde{\Phi}$, both of which are strictly decreasing in $\hat{c}$. Existence and uniqueness of a fixed point $c^e$ follow.
\end{proof}


\begin{proof}[Proof of \Cref{prop:cexpert}]
Using the indifference conditions \eqref{eq:symeqm} and \eqref{eq:eqmexpert}, we have
\begin{equation*}     
\Phi^e(c^e(\hat{q}_e),q,n,\hat{q}_e) = \Phi(c^*(n+1),n+1,q).
\end{equation*}
The expression for $\hat{q}_e$ follows after some algebra. The remainder of the proof is outlined in the main text.
\end{proof}


\begin{proof}[Proof of \Cref{prop:psim}]
Note that $p^m(\bolds_{-i})$ depends only on $|S_{-i}|$, so we let $\tilde{p}^m(|S_{-i}|) \equiv p^m(\bolds_{-i})$. Therefore $\Psi^m(\hat{\boldc}_{-i}) = \mathbb{E}_{\boldc_{-i}}\left[\tilde{p}^m\left(\sum_{j\neq i} \sigma_{\hat{c}_j} (c_j)\right)\right]$.

Fix $|S_{-i}|$, then property (i) follows because
\begin{equation*}
\begin{aligned}
    \frac{1}{q} \sum_{m=1}^n \tilde{p}^m(|S_{-i}|) &= \sum_{m=1}^{|S_{-i}|+1} \sum_{t=m-1}^{|S_{-i}|} \binom{|S_{-i}|}{t} \frac{q^t(1-q)^{|S_{-i}| - t}}{t+1} \\ &= \sum_{t=0}^{|S_{-i}|} \sum_{m=1}^{t+1} \binom{|S_{-i}|}{t} \frac{q^t(1-q)^{|S_{-i}| - t}}{t+1}   \\ &= \sum_{t=0}^{|S_{-i}|} (t+1) \binom{|S_{-i}|}{t} \frac{q^t(1-q)^{|S_{-i}| - t}}{t+1}  \\ &= \sum_{t=0}^{|S_{-i}|} \binom{|S_{-i}|}{t} q^t(1-q)^{|S_{-i}| - t} = 1.
\end{aligned}
\end{equation*}

For property (ii), note that
\begin{equation*}
\tilde{p}^m(|S_{-i}|) - \tilde{p}^{m+1}(|S_{-i}|)  = \left\{ \begin{array}{lcl}
      0 & \mbox{if} & |S_{-i}| < m-1 \\ 
    q \binom{|S_{-i}|}{m-1} \frac{q^{m-1}(1-q)^{|S_{-i}| - (m-1)}}{m} & \mbox{if} & |S_{-i}| \geq m-1, \\
\end{array}\right.
\end{equation*}
which is non-negative for all $|S_{-i}|$ and strictly positive for some $|S_{-i}|$. Replacing $|S_{-i}|$ with $\sum_{j\neq i} \sigma_{\hat{c}_j}(c_j)$ in the above difference and taking the expectation over the $c_j$'s imply property (ii).

For property (iii), $\Psi^1$ is strictly decreasing in $\hat{c}_j$ (from \Cref{prop:psi}). It suffices to show that for $m \neq 1$, $\Psi^m$ can increase in $\hat{c}_j$. This is true by property (v). Lastly, property (iv) stems from \Cref{prop:psi} and property (v) follows by inspection.
\end{proof}


\begin{proof}[Proof of \Cref{prop:eqmmult}]
The steps of the proof follow that of \Cref{prop:eqm}, with appropriate modifications for the conditions for uniqueness and interiority. 
\end{proof}


\begin{proof}[Proof of \Cref{prop:multprize}]
To find the highest and lowest fixed points of $\sum_{m=1}^n v^m \Phi^m(\hat{c})$ over $\boldv \in \mathcal{V}$, we simply solve the following two linear programs for a fixed $\hat{c}$:
\[\max_{\boldv \in \mathbb{R}^n} v^1 \Phi^1 + \cdots + v^n \Phi^m
\st v^1 \geq \cdots \geq v^n \geq 0 \quad \text{and} \quad v^1+\cdots+v^n = V\]
and
\[\min_{\boldv \in \mathbb{R}^n} v^1 \Phi^1 + \cdots + v^n \Phi^m
\st v^1 \geq \cdots \geq v^n \geq 0 \quad \text{and} \quad v^1+\cdots+v^n = V.\]
Now since $\Phi^m$ is a ``slice'' of $\Psi^m$ along the ``diagonal'', \Cref{prop:psim} implies that for all $\hat{c}$, $\Phi^1(\hat{c}) > \Phi^2(\hat{c}) > \dots > \Phi^n(\hat{c})$ and that $\sum_{m=1}^n \Phi^m(\hat{c}) = q$. It follows that $(V,0,\dots,0)$ solves the first linear program, while $(V/n,\dots,V/n)$ solves the second. \Cref{prop:eqmmult} then implies the statements on maximizing and minimizing $c^{\boldv}$.
\end{proof}


\begin{proof}[Proof of \Cref{prop:optimalmultprize}]
We first verify the claim that $\Phi^m(\hat{c}) = P^m(\hat{c})/(nF(\hat{c}))$. For ease of exposition, we omit the argument. Write
\begin{equation*}
\begin{aligned}
nF\Phi^m &= \sum_{k = m-1}^{n-1} \left\{ n \binom{n-1}{k} F^{k+1} (1-F)^{n-1-k} \left[\sum_{t=m-1}^{k} \binom{k}{t} q^{t+1}(1-q)^{k-t} \frac{1}{t+1} \right]\right\} \\ &= \sum_{k = m}^{n} \left\{ n \binom{n-1}{k-1} F^{k} (1-F)^{n-k} \left[\sum_{t=m}^{k} \frac{1}{t}  \binom{k-1}{t-1} q^{t}(1-q)^{k-t} \right]\right\} \\ &= \sum_{k = m}^{n} \left\{ \binom{n}{k} F^{k} (1-F)^{n-k} \left[\sum_{t=m}^{k} \frac{k}{t}  \binom{k-1}{t-1} q^{t}(1-q)^{k-t} \right]\right\} \\ &= \sum_{k = m}^{n} \left\{ \binom{n}{k} F^{k} (1-F)^{n-k} \left[\sum_{t=m}^{k} \binom{k}{t} q^{t}(1-q)^{k-t} \right]\right\} = P^m,
\end{aligned}
\end{equation*}
where we use the fact that $n \binom{n-1}{k-1} = k \binom{n}{k}$.

For the optimization problem, \Cref{prop:multprize} and \Cref{assump:multinterior} implies that $\mathcal{C}(V) = [Vq/n,c^*(V)] \subset [\underline{c},\overline{c}]$. Therefore, the proof follows that of \Cref{prop:principal} with appropriate modifications. 
\end{proof}


\section{Non-uniqueness and Asymmetric Equilibria} \label{sec:asymmetric}

\Cref{prop:threshold} asserts that any equilibrium of the bug bounty game is in threshold strategies. The main analysis focuses on a symmetric equilibrium, where all agents use the same threshold. Without imposing \Cref{assump:symmetric}, however, the game may have multiple equilibria, both symmetric and asymmetric. We illustrate this with the case of $n = 2$, where $\Psi(\hat{c}_{-i}) = q[1-\frac{q}{2}F(\hat{c}_{-i})]$. Assuming the equilibrium threshold vector $(c_1^*, c_2^*)$ is interior, it must solve the system of equations in~\eqref{eq:eqm}. In this example, the system is
\begin{equation} \label{eq:asymeqm}
    \hat{c}_1 = qV\left[1-\frac{q}{2}F(\hat{c}_2)\right] \qquad \text{and} \qquad
    \hat{c}_2 = qV\left[1-\frac{q}{2}F(\hat{c}_1)\right].
\end{equation}

Let $q = 1$ and $V = 5/7$, and let $F$ be defined for $c \in [0,1]$ as:
\[
    F(c) = \left\{ \begin{array}{rcl}
    \frac{14}{15}c &  \mbox{if} & 0 \leq c < \frac{3}{7} \\ 
    \frac{14}{5}c-\frac{4}{5} & \mbox{if} & \frac{3}{7} \leq c \leq \frac{4}{7} \\
    \frac{14}{30}c+\frac{8}{15} & \mbox{if} & \frac{4}{7} < c \leq 1.
\end{array}\right.
\]

\begin{figure}[ht]
   \centering
    \begin{tikzpicture}[scale=0.7]

       \begin{scope}
     \draw[->] (-1.2,0) -- (6.7,0) node[right]{\small{$c$}}; 
     \draw[->] (0,-0.1) -- (0,6) node[above]{\small{$F(c)$}};

     \draw[dashed,color=gray] (5.3,0)  -- (5.3,5.3);
     \draw[dashed,color=gray] (0,5.3) -- (6.7,5.3);

     \draw (5.3,-0.1) node[below]{\footnotesize{$1$}} -- (5.3,0);
     \draw (3.03,-0.1) node[below]{\footnotesize{$\frac{4}{7}$}} -- (3.03,0.1);
     \draw (2.27,-0.1) node[below]{\footnotesize{$\frac{3}{7}$}} -- (2.27,0.1);

     \draw (-0.1,5.3) node[left]{\footnotesize{$1$}} -- (0.1,5.3);
     \draw (-0.1,1.69) node[left]{\footnotesize{$\frac{2}{5}$}} -- (0.1,1.69);
     \draw (-0.1,4.24) node[left]{\footnotesize{$\frac{4}{5}$}} -- (0.1,4.24);

     \draw[ultra thick][<->] (-1,0) -- (0,0) -- (2.27,1.69) -- (3.03,4.24) -- (5.3,5.3) -- (6.5,5.3);


    
\end{scope}

    \begin{scope}[xshift=270,yshift = 0]
     \draw[->] (-0.1,0) -- (6,0) node[right]{\small{$\hat{c}_1$}}; 
     \draw[->] (0,-0.1) -- (0,6) node[above]{\small{$\hat{c}_2$}};

     \draw[dashed,color=gray] (0,0) -- (5.3,5.3);
     \draw[dashed,color=gray] (5.3,0)  -- (5.3,5.3);
     \draw[dashed,color=gray] (0,5.3) -- (5.3,5.3);

     \draw (5.3,-0.1) node[below]{\footnotesize{$1$}} -- (5.3,0);
    \draw (3.79,-0.1) node[below]{\footnotesize{$\frac{5}{7}$}} -- (3.79,0.1);
     \draw (3.03,-0.1) node[below]{\footnotesize{$\frac{4}{7}$}} -- (3.03,0.1);
     \draw (2.27,-0.1) node[below]{\footnotesize{$\frac{3}{7}$}} -- (2.27,0.1);
     \draw (1.89,-0.1) node[below]{\footnotesize{$\frac{5}{14}$}} -- (1.89,0.1);

     \draw (-0.1,5.3) node[left]{\footnotesize{$1$}} -- (0.1,5.3);

     \draw[ultra thick,color=blue] (0,3.79) -- (2.27,3.03) -- (3.03,2.27) -- (5.3,1.89);

     \draw[ultra thick,color=red] (1.89,5.3) -- (2.27,3.03) -- (3.03,2.27) -- (3.79,0) ;

     \node at (6.85,1.9) {\footnotesize{\textcolor{blue}{$\hat{c}_2 = V \Psi(\hat{c}_1)$} }};
      \node at (2.3,5.7) {\footnotesize{\textcolor{red}{$\hat{c}_1 = V \Psi(\hat{c}_2)$} }};

     \draw[line width = 5pt, green, opacity = 0.6] (2.27,3.03) -- (3.03,2.27);
\end{scope}
     
\end{tikzpicture}
    \caption{Asymmetric Equilibria Example.}
    \label{fig:asymeqm}
\end{figure}
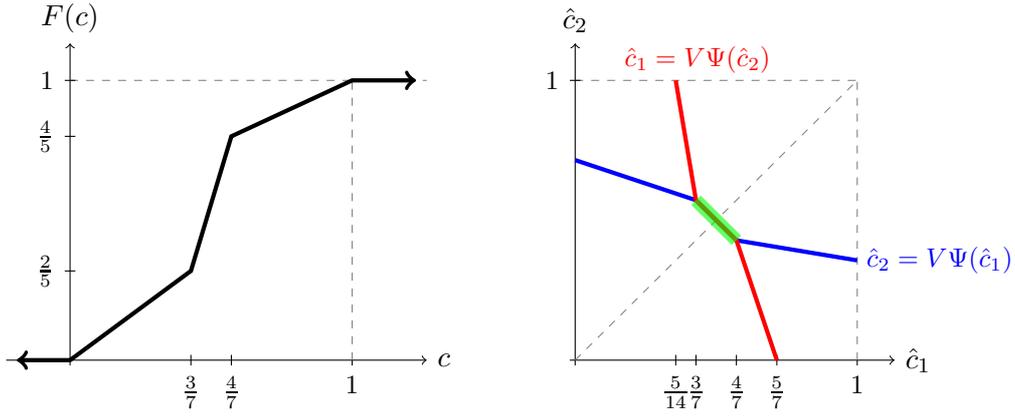
The system in~\eqref{eq:asymeqm} is depicted in \Cref{fig:asymeqm}, which shows that any 
\[
(c_1^*,c_2^*) \in \left\{(\hat{c}_1,\hat{c}_2): \hat{c}_1 \in \left[\frac{3}{7}, \frac{4}{7}\right] \; \text{and} \; \hat{c}_1 + \hat{c}_2 = 1 \right\}
\]
constitutes an equilibrium threshold vector. Indeed, the symmetric equilibrium threshold $c^* = 1/2$ is one of the solutions.


\end{document}